\newtheorem{theo}{Theorem}[section]
\newtheorem{prop}[theo]{Proposition}
\newtheorem{lemm}[theo]{Lemma}
\theoremstyle{definition}
\theoremstyle{remark}
\newtheorem{rema}[theo]{Remark}
\newcommand{\Op}{\operatorname{Op}}
\newcommand{\nwc}{\newcommand}
\nwc{\eps}{\epsilon}
\nwc{\ep}{\epsilon}
\nwc{\vareps}{\varepsilon}
\nwc{\Oph}{\operatorname{Op}_\hbar}
\nwc{\la}{\langle}
\nwc{\ra}{\rangle}
\nwc{\mf}{\mathbf} 
\nwc{\blds}{\boldsymbol} 
\nwc{\ml}{\mathcal} 
\nwc{\defeq}{\stackrel{\rm{def}}{=}}
\nwc{\cE}{\ml{E}}
\nwc{\cN}{\ml{N}}
\nwc{\cO}{\ml{O}}
\nwc{\cP}{\ml{P}}
\nwc{\cU}{\ml{U}}
\nwc{\cV}{\ml{V}}
\nwc{\cW}{\ml{W}}
\nwc{\tU}{\widetilde{U}}
\nwc{\IN}{\mathbb{N}}
\nwc{\IR}{\mathbb{R}}
\nwc{\IS}{\mathbb{S}}
\nwc{\IZ}{\mathbb{Z}}
\nwc{\IC}{\mathbb{C}}
\nwc{\IT}{\mathbb{T}}
\nwc{\be}{\mathbf{e}}
\nwc{\tP}{\widetilde{P}}
\nwc{\tPi}{\widetilde{\Pi}}
\nwc{\tV}{\widetilde{V}}
\nwc{\supp}{\operatorname{supp}}
\nwc{\rest}{\restriction}
\newcommand{\Vol}{\operatorname{Vol}}
\date{\today}
\begin{document}

\title[The Quantum Loschmidt echo on flat tori]{The Quantum Loschmidt echo on flat tori}

\author[Gabriel Rivi\`ere]{Gabriel Rivi\`ere}
\author[Henrik Uebersch\"ar]{Henrik Uebersch\"ar}

\address{Laboratoire Paul Painlev\'e (U.M.R. CNRS 8524), U.F.R. de Math\'ematiques, Universit\'e Lille 1, 59655 Villeneuve d'Ascq Cedex, France}
\email{gabriel.riviere@math.univ-lille1.fr}
\address{Institut de Math\'ematiques de Jussieu (U.M.R. CNRS 7586), U.F.R. de Math\'ematiques, Universit\'e Paris 6 -- Pierre et Marie Curie, Sorbonne Universit\'es, 75252 Paris Cedex 05, France}
\email{henrik.ueberschar@imj-prg.fr}

\begin{abstract} The Quantum Loschmidt Echo is a measurement of the sensitivity of a quantum system to perturbations of the Hamiltonian. 
In the case of the standard $2$-torus, we derive some explicit formulae for this quantity in the transition regime where it is 
expected to decay in the semiclassical limit. The expression involves both a two-microlocal defect measure of the initial data 
and the form of the perturbation. As an application, we exhibit a non-concentration criterium on the sequence of initial data 
under which one does not observe a macroscopic decay of the Quantum Loschmidt Echo. We also apply our results to several 
examples of physically relevant initial data such as coherent states and plane waves.

\end{abstract}

\maketitle

\section{Introduction}

Let $\IT^2=\IR^2/\IZ^2$ be the standard\footnote{We consider the standard torus for simplicity and our analysis would extend to 
$\IR^2/\Gamma$ with $\Gamma= a\IZ\oplus b\IZ$ with $a,b>0$.} torus. Motivated by the fact that the quantum evolution is unitary and thus cannot be sensitive 
to perturbations of initial conditions, Peres suggested to study the sensitivity of the Schr\"odinger 
equation to perturbations of the Hamiltonian~\cite{Pe84}. More precisely, he proposed to compare 
the dynamics induced by the following two semiclassical Schr\"odinger equations:
\begin{equation}\label{e:schrodinger-unperturbed}i\hbar\partial_tu_{\hbar}=-\frac{\hbar^2\Delta u_{\hbar}}{2},\quad u_{\hbar}(t=0)=\psi_{\hbar},\end{equation}
and
\begin{equation}\label{e:schrodinger-perturbed}i\hbar\partial_tu_{\hbar}^{\eps}=-\frac{\hbar^2\Delta u_{\hbar}^{\eps}}{2}+\eps_{\hbar}V 
u_{\hbar}^{\eps},\quad u_{\hbar}^{\eps}(t=0)=\psi_{\hbar},\end{equation}
where $V$ belongs to $\ml{C}^{\infty}(\IT^2,\IR)$, $(\psi_{\hbar})_{\hbar\rightarrow 0^+}$ is a normalized sequence in $L^2(\IT^2)$ 
and $(\eps_{\hbar})_{\hbar\rightarrow 0^+}$ satisfies
$$\lim_{\hbar\rightarrow 0^+}\eps_{\hbar}=0.$$
In order to measure the difference between the two evolved systems, Peres used the so-called notion of 
quantum fidelity,
\begin{equation}\label{e:loschmidt-echo}
 \ml{E}_{\hbar,\eps}(t):=\left|\la u_{\hbar}^{\eps}(t), u_{\hbar}(t)\ra_{L^2(\IT^2)}\right|^2.
\end{equation}
 Here, $u_{\hbar}(t)$ represents the solution 
to~\eqref{e:schrodinger-unperturbed} at time $t$ and $u_{\hbar}^{\eps}(t)$ the solution to~\eqref{e:schrodinger-perturbed} at 
time $t$, both of them having the same initial condition which is normalized in $L^2(\IT^2)$. Under this form, this fidelity 
between pure states is now often referred to as the Quantum Loschmidt Echo in the physics literature. Peres predicted that this quantity should decay 
for \emph{any} quantum system and that the rate of decay should depend on the dynamical properties of the underlying classical 
Hamiltonian. He motivated his conjecture with numerical simulations and with the following formal asymptotic expansion:
\begin{equation}\label{e:peres}\ml{E}_{\hbar,\eps}(t)\simeq1-\frac{\eps_{\hbar}^2t^2}{\hbar^2}\left(\la \psi_{\hbar},V^2\psi_{\hbar}\ra -\la \psi_{\hbar}, V\psi_{\hbar}\ra^2\right)
+\ldots.\end{equation}
Hence, for \emph{any} type of Hamiltonian, he deduced that, for short times 
$t\ll\tau_{\hbar}^c:=\frac{\hbar}{\eps_{\hbar}},$
the quantity $\ml{E}_{\hbar,\eps}(t)$ should follow some quadratic decay. This first approximation just relies on the fact that the quantum fidelity 
can be well approximated by the Taylor expansion. After that, the Quantum Loschmidt Echo is expected to continue its 
decay at a rate that will now also depend on the dynamical features of the classical Hamiltonian, namely chaotic vs. integrable. This kind 
of general behaviour seems to be commonly accepted in the literature on the subject -- see for example~\cite[Sect.~2.1.1]{GPSZ06} 
or~\cite[Sect.~2.3.1]{GJPW12}. Note that the rates of decay (after this short time regime) depend in a subtle 
manner on the parameters $\hbar$ and $\eps_{\hbar}$ but also on the choice 
of initial data and of $V$. For much larger time scales, we refer for instance to paragraphs~2.3.1 and 2.3.2 in~\cite{GJPW12} 
for a brief review of the different possible 
regimes and references to the literature -- see also~\cite{JaPe09}. We shall not discuss these questions here and we will mostly focus on the case 
of the transition regime $t\approx\tau_{\hbar}^c$ for which one should already observe some decay of the Quantum Loschmidt Echo. We emphasize that 
we are mainly concerned with the case of the free Schr\"odinger evolution on the $2$-torus which is the simplest example of a \emph{completely integrable} system. 
In this dynamical framework, the situation is known to be quite subtle and many phenomena may occur -- e.g. see~\cite{DuGo14} and the references therein.

One of the main consequences of our 
analysis will be to exhibit large classes of semiclassical initial data for which we do not have any \emph{macroscopic decay} of the Quantum Loschmidt Echo 
in this transition regime -- see section~\ref{s:examples} below. More precisely, for any time $ t\tau_{\hbar}^c$ 
with $t\in\IR$ fixed, the Quantum Loschmidt Echo will tend to $1$ in the semiclassical limit. Note that this does not exclude the possibility that 
$\ml{E}_{\hbar,\eps}(t)$ is strictly less than $1$ and it rather states that the deviation from $1$ is asymptotically small in 
the semiclassical regime without trying to be quantitative in the size of the deviation. Our strategy is to apply to this physical problem 
the $2$-microlocal techniques recently developed by Anantharaman and Maci\`a for the study of controllability and of semiclassical measures 
on flat tori~\cite{Ma10, AM10} -- see also~\cite{AL14, AFM12, ALM16, MaRi16b} for related results on integrable systems. In particular, our analysis 
will be highly dependent on the \emph{integrable structure of our Hamiltonian}.

Note also that our results will be valid for a certain regime of small perturbations\footnote{In all the article, we say that $f_1(\hbar)\ll f_2(\hbar)$ for two sequences 
$f_1(\hbar),f_2(\hbar)> 0$ if $\lim_{\hbar\rightarrow 0^+}f_1(\hbar)f_2(\hbar)^{-1}=0.$}:
\begin{equation}\label{e:perturbation-size}\hbar^2\ll \eps_{\hbar}\ll\hbar.\end{equation}
In fact, the case of stronger perturbations $\eps_{\hbar}\geq \hbar$ can be easily treated for any compact Riemannian 
manifold -- see appendix~\ref{a:strong}. Here, we manage to deal with smaller perturbations due to the specific structure of the torus.
Our results to not a priori extend to other types of 
quantum systems, where different phenomena may occur. Finally, we emphasize that, even if the Quantum Loschmidt Echo is now a rather 
well studied and understood quantity in the physics literature, much less seems to be known from the mathematical perspective. For recent 
mathematical results we refer the reader to~\cite{BolSc06, CoRo07} for $\IR^d$, to~\cite{EsTo12, CanJaTo12} for general 
compact manifolds, to~\cite{EsRi15, Ri16} for negatively curved surfaces and to~\cite{MaRi16} for Zoll manifolds.

\section{Main results}\label{s:main-result}

We will now state our results more precisely. First of all, we emphasize that we will deal with semiclassical sequences of initial data. More precisely, 
in the following, we shall always suppose that $(\psi_{\hbar})_{0<\hbar\leq 1}$ is a family of initial data which is \emph{normalized} in $L^2(\IT^2)$ and which satisfies
\begin{equation}\label{e:shosc}\lim_{\delta\rightarrow 0^+}\limsup_{\hbar\rightarrow 0^+}\left\|\mathbf{1}_{[0,\delta]}(-\hbar^2\Delta)\psi_{\hbar}\right\|_{L^2(\IT^2)}=0,\end{equation}
and
\begin{equation}\label{e:hosc}\lim_{R\rightarrow+\infty}\limsup_{\hbar\rightarrow 0^+}
\left\|\mathbf{1}_{[R,+\infty[}(-\hbar^2\Delta)\psi_{\hbar}\right\|_{L^2(\IT^2)}=0.\end{equation}
Equivalently, the sequence of initial data \emph{oscillates} at the frequency $\hbar^{-1}$. Here, the semiclassical parameter $\hbar\rightarrow 0^+$ is the one appearing in the Schr\"odinger equations~\eqref{e:schrodinger-unperturbed} 
and~\eqref{e:schrodinger-perturbed}. Our main result is to 
establish an explicit formula for the Quantum Loschmidt Echo at the critical time scale
\begin{equation}\label{e:crit-time-scale}
 \tau_{\hbar}^c:=\frac{\hbar}{\eps_{\hbar}},
\end{equation}
in terms of the initial conditions $(\psi_{\hbar})_{0<\hbar\leq 1}$. 
Due to~\eqref{e:perturbation-size}, this time scale tends to $+\infty$ in the semiclassical limit and it is always much smaller than 
the Heisenberg time $\hbar^{-1}$ from the physics literature. In order to state our main result, we need to fix some conventions. 
 We denote by $\ml{L}_1$ the family of all primitive rank $1$ sublattices of $\IZ^2$. Recall that a sublattice $\Lambda$ is said to be primitive if $\la\Lambda\ra\cap\IZ^2=\Lambda$, 
 where $\la\Lambda\ra$ is the subspace of $\IR^2$ spanned by $\Lambda$. Moreover, it is of rank $1$ if $\la \Lambda\ra$ is one dimensional. Any such lattice is generated by an element $\vec{v}_{\Lambda}$ 
 of $\IZ^2$ such that $\IZ \vec{v}_{\Lambda}=\Lambda$. Denote then by $\vec{v}_{\Lambda}^{\perp}$ the lattice vector which is directly orthogonal to $\vec{v}_{\Lambda}$ and which has the same length 
 $L_{\Lambda}:=\|\vec{v}_{\Lambda}\|=\|\vec{v}_{\Lambda}^{\perp}\|$. We then introduce two Hamiltonian functions associated with $\Lambda$: 
 $$\forall \xi\in\IR^2,\ H_{\Lambda}(\xi):=\frac{1}{L_{\Lambda}}\la \xi,\vec{v}_{\Lambda}\ra\ \text{and}\ H_{\Lambda}^{\perp}(\xi):=\frac{1}{L_{\Lambda}}\la \xi,\vec{v}_{\Lambda}^{\perp}\ra.$$
This defines a completely integrable system and the flow corresponding to $H_{\Lambda}^{\perp}$ is defined by
\begin{equation}\label{e:Lambda-flow}\varphi_{H_{\Lambda}^{\perp}}^t(x,\xi):=\left(x+t\frac{\vec{v}_{\Lambda}^{\perp}}{L_{\Lambda}},\xi\right).\end{equation}
Note that this flow is $L_{\Lambda}$-periodic. Introduce then, for every smooth function $b$ on $T^*\IT^2$, 
$$\ml{I}_{\Lambda}(b):=\frac{1}{ L_{\Lambda}}\int_0^{ L_{\Lambda}}b\circ\varphi_{H_{\Lambda}^{\perp}}^t(x,\xi)dt,$$
which has only Fourier coefficients in the direction of $\Lambda$. Using these conventions, we can define the following map
$$\mathbf{F}_{\Lambda,\hbar}:a\in\ml{C}_c^{\infty}(\IT^2\times\IR)\mapsto\left\la\psi_{\hbar},
\Oph\left(\ml{I}_{\Lambda}(a)\left(x,\frac{\hbar H_{\Lambda}(\xi)}{\eps_{\hbar}}\right)\right)\psi_{\hbar}\right\ra,$$
where $\Oph$ is the standard quantization -- see appendix~\ref{a:sc-an}. Roughly speaking, these quantities measure the 
concentration of the initial data in an $\frac{\eps_{\hbar}}{\hbar}$-neighborhood of $\Lambda^{\perp}:=\IR \vec{v}_{\Lambda}^{\perp}$. From our assumption~\eqref{e:perturbation-size}, 
the quantity $\frac{\eps_{\hbar}}{\hbar}$ 
goes to $0$, but not faster than $\hbar$. Up to an extraction, we can suppose that, for every $\Lambda\in\ml{L}_{1}$, there exists 
a finite positive measure\footnote{We refer to paragraph~\ref{s:twomicrolocal} for further precisions on the regularity of these objects.} 
$\mathbf{F}^0_{\Lambda}$ such that, for every $a\in\ml{C}^{\infty}_c(\IT^2\times\IR)$,
\begin{equation}\label{e:2micro-initialdata}\lim_{\hbar\rightarrow 0^+}\la \mathbf{F}_{\Lambda,\hbar},a\ra= \la \mathbf{F}^0_{\Lambda},a\ra.\end{equation}
Hence, the measure $\mathbf{F}^0_{\Lambda}$ describes the part of the mass which is asymptotically concentrated along $\Lambda^{\perp}$. These are rescaled 
versions of the so-called semiclassical measures~\cite{Ge91, Zw12}. It turns out that the asymptotic properties of the Quantum Loschmidt Echo are in fact 
related to these quantities:
\begin{theo}\label{t:evolution} Suppose that~\eqref{e:perturbation-size} holds, and that we are 
given a sequence of normalized initial data $(\psi_{\hbar})_{\hbar\rightarrow 0^+}$ satisfying~\eqref{e:shosc} 
and~\eqref{e:hosc} which generates a unique family $(\mathbf{F}^0_{\Lambda})_{\Lambda\in\ml{L}_1}$.
 \begin{eqnarray*}
\lim_{\hbar\rightarrow 0^+}\left\la u_{\hbar}^{\eps}(t\tau_{\hbar}^c), u_{\hbar}(t\tau_{\hbar}^c)\right\ra &=&e^{it\int_{\IT^2} V}\left(1-\sum_{\Lambda\in\ml{L}_1}\la \mathbf{F}_{\Lambda}^0,1\ra\right)\\
&+ &\sum_{\Lambda\in\ml{L}_1}\int_{\IT^2\times\IR}
e^{i\int_0^t\ml{I}_{\Lambda}(V)\left(x+s\frac{\eta \vec{v}_{\Lambda}}{L_{\Lambda}}\right)ds}\mathbf{F}_{\Lambda}^0(dx,d\eta), 
\end{eqnarray*}
\end{theo}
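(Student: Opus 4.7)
The strategy is to pass to the interaction picture and exploit the integrable structure of the free evolution. Set $\cU(t) := e^{itH/\hbar}e^{-itH_0/\hbar}$ with $H = H_0 + \eps_\hbar V$ and $H_0 = -\hbar^2\Delta/2$, so that
\[
\la u_\hbar^\eps(t), u_\hbar(t)\ra = \la \psi_\hbar, \cU(t)\psi_\hbar\ra,\qquad \partial_t \cU = (i\eps_\hbar/\hbar)\, \cU\, \tV(t),
\]
where $\tV(t) := e^{itH_0/\hbar} V e^{-itH_0/\hbar}$. Pulling out the mean $\bar V := \int_{\IT^2}V$ produces the scalar phase $e^{it\bar V}$ at time $t\tau_\hbar^c$, which accounts for the global prefactor in the statement; I may therefore assume $\bar V = 0$ in the sequel. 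After rescaling $t = s\tau_\hbar^c$, the evolution reduces to the $\hbar$-free ODE $\partial_s \cU(s\tau_\hbar^c) = i\,\cU(s\tau_\hbar^c)\,\tV(s\tau_\hbar^c)$, and the whole problem reduces to understanding the operator family $\{\tV(s\tau_\hbar^c)\}_{s\in[0,t]}$ in the semiclassical limit.

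To analyze this family I would expand $V$ in Fourier series on $\IT^2$. Conjugation of $e^{2\pi i k\cdot x}$ by $e^{\pm itH_0/\hbar}$ shifts the momentum label $n \mapsto n+k$ and multiplies by $e^{i\phi_k(n,t)}$ with $\phi_k(n,t) = 2\pi^2 t\hbar(2k\cdot n + |k|^2)$. At the critical time $t = s\tau_\hbar^c = s\hbar/\eps_\hbar$, the quadratic term $(s\hbar^2/\eps_\hbar)|k|^2$ vanishes for every fixed $k$, while the cross term $2\pi^2 s(\hbar^2/\eps_\hbar)(2k\cdot n)$ is of unit order precisely when $n$ lies, to accuracy $\eps_\hbar/\hbar$, on a rational line $\Lambda^\perp$ with $k\in\Lambda$. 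Writing $k = c\vec{v}_\Lambda$ and using the 2-microlocal variable $\eta = \hbar H_\Lambda(\xi)/\eps_\hbar$ with $\xi = 2\pi\hbar n$, the phase converges to $e^{2\pi is c L_\Lambda \eta}$, which is exactly the phase attached to the Fourier coefficient of $\ml{I}_\Lambda(V)\bigl(x + s\eta\,\vec{v}_\Lambda/L_\Lambda\bigr)$ along $k$. For momenta $n$ at 2-microlocal distance $\gg \eps_\hbar/\hbar$ from every $\Lambda^\perp$, the phase oscillates uniformly fast for every $k\neq 0$, so only the (now vanishing) zero mode survives and $\tV(s\tau_\hbar^c) \to 0$ there.

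To promote this pointwise-in-$s$ convergence to convergence of the propagator $\cU(s\tau_\hbar^c)$, the Dyson time-ordering must be discarded. The symbols of $\tV(s\tau_\hbar^c)$ depend on $\xi$ only through the rescaled variable $\eta$, so each $\xi$-derivative produces a factor $\hbar/\eps_\hbar$; the Moyal expansion yields $[\tV(s\tau_\hbar^c), \tV(s'\tau_\hbar^c)] = O(\hbar^2/\eps_\hbar) = o(1)$ under~\eqref{e:perturbation-size}, and the higher-order terms are of size $(\hbar^2/\eps_\hbar)^k$ hence negligible. The time-ordered exponential may therefore be replaced, uniformly in $s\in[0,t]$, by $\exp\bigl(i\int_0^s \tV(s'\tau_\hbar^c)\,ds'\bigr)$.

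Finally I would assemble the limit via a 2-microlocal partition of unity $\mathrm{Id} = \sum_{\Lambda\in\ml{L}_1}\Pi_\Lambda^\hbar + \Pi_{\mathrm{diff}}^\hbar$, where $\Pi_\Lambda^\hbar$ localizes momenta at scale $\eps_\hbar/\hbar$ around $\Lambda^\perp$ and $\Pi_{\mathrm{diff}}^\hbar$ is supported away from every rational direction (the low- and high-frequency regimes being harmless by~\eqref{e:shosc}--\eqref{e:hosc}). Combining step two with the definition~\eqref{e:2micro-initialdata}, the $\Lambda$-contribution to $\la \psi_\hbar, \cU(t\tau_\hbar^c)\psi_\hbar\ra$ is asymptotic to $\int_{\IT^2\times\IR} e^{i\int_0^t \ml{I}_\Lambda(V)(x + s\eta\vec{v}_\Lambda/L_\Lambda)\,ds}\,\mf{F}_\Lambda^0(dx,d\eta)$, while the diffuse block contributes the mass $1 - \sum_\Lambda \la \mf{F}_\Lambda^0, 1\ra$ since $\tV \to 0$ there. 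Reinstating the prefactor $e^{it\bar V}$ yields the claimed identity. The main obstacles are the uniformity of the commutator estimate in the third step jointly in $s\in[0,t]$ and in the Fourier modes of $V$, and the rigorous orthogonality of the 2-microlocal blocks so that masses add without overlap; both rely crucially on $\hbar^2 \ll \eps_\hbar$ and on the uniqueness of the family $(\mf{F}_\Lambda^0)_{\Lambda\in\ml{L}_1}$ assumed in the hypothesis.
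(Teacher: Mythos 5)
Your proposal is correct in outline and reaches the right formula, but it follows a genuinely different route from the paper. The paper never exponentiates operators: it integrates the Duhamel identity \emph{once}, obtaining $\lim_\hbar\la F_{\hbar}(t\tau_\hbar^c),1\ra=e^{it\int V}+i\int_0^te^{i(t-t')\int V}\la F(t'),V-\int_{\IT^2}V\ra\,dt'$ for the fidelity distributions $F_\hbar(t)$, and then devotes all its effort to identifying the limit measure $F(t')$. This is done by (i) proving invariance of $F(t')$ under the geodesic flow and invoking the Anantharaman--Maci\`a decomposition of invariant measures over primitive sublattices, which reduces the pairing with $V-\int V$ to the pieces $\ml{I}_\Lambda(F(t'))\rceil_{\IT^2\times\Lambda^\perp}$; and (ii) introducing the two-microlocal distributions $F_{\Lambda,\hbar}$, showing the part at infinity carries no nonzero Fourier mode in $x$, and deriving a first-order transport equation $\frac{1}{i}\partial_t\tilde F_\Lambda=(\eta\frac{\vec v_\Lambda}{iL_\Lambda}\cdot\partial_x+\ml{I}_\Lambda(V))\tilde F_\Lambda$ whose solution by characteristics produces the exponentiated phase. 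Your interaction-picture computation of $\widetilde V(s\tau_\hbar^c)$ via Fourier modes is the exact operator-level counterpart of this, and your collapse of the Dyson series via $[\widetilde V(s\tau_\hbar^c),\widetilde V(s'\tau_\hbar^c)]=O(\hbar^2/\eps_\hbar)$ plays the role of the paper's dropping of the $\frac{\hbar^2}{2\eps_\hbar}(\frac{\vec v_\Lambda}{L_\Lambda}\cdot\partial_x)^2$ term in the transport equation. What the paper's route buys is that the two points you flag as obstacles disappear: the time-ordering issue never arises because one only solves a linear first-order PDE for the limit measure, and the countable sum over $\Lambda$ is handled by the measure-theoretic disjointness of the sets $R_\Lambda$ rather than by an operator-level partition of unity at scale $\eps_\hbar/\hbar$, whose neighborhoods accumulate. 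What your route buys is explicitness: the resonance analysis of $\phi_k(n,t)$ makes transparent \emph{why} only the modes $k\in\Lambda$ survive near $\Lambda^\perp$.

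One intermediate claim is misstated: it is not true that $\widetilde V(s\tau_\hbar^c)\to 0$ on the non-resonant block for fixed $s$ --- the operator has norm $\|V\|_{\infty}$ there; what vanishes is the time average $\int_0^t\widetilde V(s'\tau_\hbar^c)\,ds'$, by non-stationary phase in $s'$ using $(\hbar^2/\eps_\hbar)|k\cdot n|\gtrsim\hbar/\eps_\hbar\to\infty$. This is exactly the mechanism that also kills the $k\notin\Lambda$ modes \emph{on} the $\Lambda$-block, and your argument needs it in that form (inside the exponential), so the statement should be corrected before the exponentiation step. With that repair, and with the truncation in $|\vec v_\Lambda|$ made explicit (only finitely many $\Lambda$ interact non-trivially with the Fourier tail of $V$, the remainder being absorbed into the $e^{it\int V}$ term), your argument can be completed.
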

This Theorem provides an explicit formula for the Quantum Loschmidt Echo at the transition regime $\tau_{\hbar}^c$. It is in some sense slightly 
more precise as we compute the overlap $\left\la u_{\hbar}^{\eps}(t\tau_{\hbar}^c), u_{\hbar}(t\tau_{\hbar}^c)\right\ra$ and not only its modulus. 
If $\mathbf{F}_{\Lambda}^0$ identically vanishes for any $\Lambda$ in $\ml{L}_1$, then this Theorem shows
$$\lim_{\hbar\rightarrow 0^+}\left|\left\la u_{\hbar}^{\eps}(t\tau_{\hbar}^c), u_{\hbar}(t\tau_{\hbar}^c)\right\ra\right|^2=1.$$
The assumption $\mathbf{F}_{\Lambda}^0\equiv 0$ exactly means that the initial data do not concentrate too fast near the invariant tori where the geodesic 
flow is periodic. Again, this does not mean that the Quantum Loschmidt Echo does not decay but it rather states that the deviation from $1$ is asymptotically small. 
In section~\ref{s:examples}, we will apply this result to standard families of initial data such as coherent states 
and plane waves. In that manner, we will illustrate the many possibilities for the behaviour of the Quantum Loschmidt Echo for integrable systems.

Note that a similar statement was obtained by Maci\`a and the first author in the case of \emph{degenerate} integrable systems, like the geodesic flow on the 
sphere~\cite[Sect.~5]{MaRi16}. In that framework, the case of smaller perturbations could be treated up to the scales $\eps_{\hbar}\gg\hbar^3$. 
Compared with that reference, the situation here is more complicated from a dynamical point of view 
as there are directions where the Hamiltonian flow is periodic (with a period that depends on the direction) and other ones where the 
geodesic flow fills the torus. In order to deal with these different behaviours, the main additional ingredient compared 
with~\cite{MaRi16} will be to introduce two microlocal objects as in~\cite{Ma10, AM10, AFM12}, namely $(\mathbf{F}^0_{\Lambda})_{\Lambda\in\ml{L}_1}$. 
These quantities will capture the properties of the Schr\"odinger evolution near directions where the classical Hamiltonian flow is periodic while, away 
from these directions, we will be able to use equidistribution of the geodesic flow. Finally, we emphasize 
that the regime of perturbations we consider here is the same as in~\cite{MaRi16b} which 
describes the structure of semiclassical measures for the Schr\"odinger equation~\eqref{e:schrodinger-perturbed}. However, the two 
problems require in fact a second microlocalization at different scales and they yield propagation laws given by different two-microlocal quantities.

\subsection*{Organization of the article}
 In section~\ref{s:examples}, we start by applying Theorem~\ref{t:evolution} to families of relevant initial data. In 
 section~\ref{s:fidelity-distrib}, we introduce families of distributions on the cotangent bundle $T^*\IT^2$ that are 
 close to the so-called Wigner distributions. Yet, as they are of slightly different nature, we review some of their 
 basic properties such as invariance under the geodesic flow following the classical arguments from~\cite{Ge91, Ma09, Zw12}. 
 We also relate them to the Quantum Loschmidt Echo at the 
 critical time scale, and we reduce the problem to an analysis of their restriction to rational directions. Then, in 
 section~\ref{s:twomicrolocal}, we define the two microlocal framework needed to analyze the behaviour along rational 
 directions. The proof of Theorem~\ref{t:evolution} is given in section~\ref{s:proof}. Finally, the article contains two appendices. 
 Appendix~\ref{a:strong} is devoted to the simpler case of strong perturbations $\eps_{\hbar}\geq\hbar$, while appendix~\ref{a:sc-an} 
 provides a short toolbox of semiclassical analysis on $\IT^2$.

All along the article, we use, for every $u$ and $v$ in $L^2(\IT^2)$,
$$\la u,v\ra_{L^2(\IT^2)}:=\int_{\IT^2}\overline{u}(x)v(x)dx.$$

\subsection*{Acknowledgements} Part of this work was carried out when the second author was a postdoc of the Labex 
CEMPI program (ANR-11-LABX-0007-01). The first author is also partially supported 
by the Agence Nationale de la Recherche through the Labex CEMPI and the 
ANR project GERASIC (ANR-13-BS01-0007-01). The authors warmly thank R\'emi Dubertrand for an interesting discussion related to 
the Quantum Loschmidt Echo for integrable systems. The first author is also grateful to Fabricio Maci\`a 
for many explanations about his works~\cite{Ma10, AM10, AFM12}.

\section{Examples of initial data}\label{s:examples}
The purpose of this section is to apply our results to specific examples of initial data that are frequently discussed in the physics literature: 
plane waves and coherent states (or superposition of such states). The goal of these examples is to illustrate the variety 
of behaviour that may occur. For most of the cases, we shall verify that the Quantum Loschmidt Echo is in fact asymptotically equal to $1$ in the 
transition regime. Recall that one expects a decay for any type of quantum system. Still, in certain 
specific cases, where the quantum state concentrates along periodic trajectories in phase space, 
we may observe different phenomena, e.g. polynomial decay for certain families of plane waves or revivals for a superposition of 
coherent states.

In order to describe these various phenomena, we have to calculate the distributions $\mathbf{F}_\Lambda^0$ for all primitive rank $1$ sublattices $\Lambda$. 
Then, we shall apply the time evolution formula given by Theorem \ref{t:evolution} in order to derive explicit expressions for the Quantum Loschmidt Echo 
for these sequences of initial data. It turns out that these distributions will be nontrivial only if the corresponding sequence of wave vectors ,
when projected on $\IS^1$, converges to a rational direction at a certain rate. In other words, the sequence of initial data is concentrated near the tori of 
periodic orbits of the unperturbed Hamiltonian flow. In some sense, our series of examples illustrates that, for most initial data, we do not observe 
a macroscopic decay for times of order $\tau_{\hbar}^c$. 

In order to state our results, we define the set of ``rational'' unit vectors
as $$\IS^1_\mathbb{Q}:=\{\vec{v}_{\Lambda}/L_\Lambda\mid \Lambda\in\ml{L}_1\}.$$
Again, this corresponds to the directions where the classical Hamiltonian flow is periodic. We also 
introduce the convenient notation $\vec{v}_{\Lambda}/L_\Lambda=
(\cos\alpha_\Lambda,\sin\alpha_\Lambda)$ for some unique $\alpha_\Lambda\in[0,2\pi)$.

\subsection{Plane waves}
Let us consider the initial data $\psi_\hbar(x)=e^{2\pi i kx}=:e_k(x)$ for a sequence of lattice vectors 
$k\in\IZ^2$, $\|k\|\to\infty$. In that case, we choose $$\hbar=\hbar_k=\|k\|^{-1}.$$ 
In this entire paragraph, we suppose that there exists $\vec{v}\in\IS^{1}$
such that $k/\|k\|\to \vec{v}$ as $\|k\|\to\infty$. According to Appendix~\ref{a:sc-an}, one has, for 
$a\in C^\infty_c(\IT^2\times\IR)$ and for $\Lambda\in\ml{L}_1$,
\begin{equation}\label{e:sc-measure-planewave}\left\langle \mathbf{F}_{\Lambda,\hbar},a \right\rangle=
\left\langle \psi_\hbar,\Oph(\ml{I}_{\Lambda}(a)(x,2\pi\hbar\epsilon_\hbar^{-1}H_\Lambda(\xi)))\psi_\hbar\right\rangle=
\widehat{a}(0,2\pi \hbar^2\epsilon_\hbar^{-1}H_\Lambda (k)).\end{equation}

\subsubsection{Rational directions} We start with the case, where the limit vector $\vec{v}$ belongs to $\IS^1_\mathbb{Q}$. In that case, the following holds:
\begin{prop}\label{plane2}
Let $(k:=n(\hbar) \vec{v}_{\Lambda_0}+m(\hbar)\vec{v}_{\Lambda_0}^{\perp})_{\hbar\rightarrow 0^+}$ for some $\Lambda_0\in\ml{L}_1$ and 
with $|m(\hbar)|\ll 1/\hbar$. In particular, $\vec{v}=\vec{v}_{\Lambda_0}/L_{\Lambda_0}$. 
Then, the following holds:
\begin{enumerate}
 \item If $2\pi m(\hbar)\hbar^{2}\epsilon_\hbar^{-1}\to\omega\in\IR$, then
 $$\mathbf{F}_{\Lambda}^0(x,\eta)=
\begin{cases}
\delta_{\omega L_{\Lambda_0}}(\eta)\quad\quad\text{if}\; \la\Lambda\ra=\Lambda_0^{\perp},\\
0\qquad\qquad\qquad\text{otherwise.}
\end{cases}$$
 \item If $2\pi|m(\hbar)|\hbar^{2}\epsilon_\hbar^{-1}\to+\infty$, we have $$\forall\Lambda\in\mathcal{L}_1,\ \mathbf{F}_\Lambda^0=0.$$
\end{enumerate}
\end{prop}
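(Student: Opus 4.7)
The plan is to work directly from the identity~\eqref{e:sc-measure-planewave}, which evaluates $\la \mathbf{F}_{\Lambda,\hbar},a\ra$ on a plane wave $e_k$ as
$$\la \mathbf{F}_{\Lambda,\hbar},a\ra=\widehat{a}\bigl(0,2\pi\hbar^2\eps_\hbar^{-1}H_\Lambda(k)\bigr),\qquad \widehat{a}(0,\eta)=\int_{\IT^2}a(x,\eta)\,dx.$$
The proposition therefore reduces to computing the limit, as $\hbar\to 0^+$, of the single real parameter $\xi_{\Lambda,\hbar}:=2\pi\hbar^{2}\eps_\hbar^{-1}H_\Lambda(k)$ for each primitive rank $1$ sublattice $\Lambda$, and then invoking continuity and rapid decay of $\widehat a(0,\cdot)$.

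To do so, I would plug the decomposition $k=n(\hbar)\vec v_{\Lambda_0}+m(\hbar)\vec v_{\Lambda_0}^{\perp}$ into the definition of $H_\Lambda$, so that
$$L_\Lambda\,H_\Lambda(k)=n(\hbar)\la\vec v_{\Lambda_0},\vec v_\Lambda\ra+m(\hbar)\la\vec v_{\Lambda_0}^{\perp},\vec v_\Lambda\ra,$$
and split into two sub-cases. When $\la\Lambda\ra=\Lambda_0^{\perp}$ I fix the sign of the generator so that $\vec v_\Lambda=\vec v_{\Lambda_0}^{\perp}$ (so $L_\Lambda=L_{\Lambda_0}$), which kills the first term and gives $H_\Lambda(k)=m(\hbar)L_{\Lambda_0}$. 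Otherwise $\la\vec v_{\Lambda_0},\vec v_\Lambda\ra\ne 0$, and since $\hbar=\|k\|^{-1}$ together with $|m(\hbar)|\ll\hbar^{-1}$ forces $|n(\hbar)|\sim (\hbar L_{\Lambda_0})^{-1}$, the first term dominates and $|H_\Lambda(k)|\asymp\hbar^{-1}$.

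For part~(1), in the sub-case $\la\Lambda\ra=\Lambda_0^{\perp}$ the hypothesis gives $\xi_{\Lambda,\hbar}=2\pi m(\hbar)\hbar^{2}L_{\Lambda_0}/\eps_\hbar\to\omega L_{\Lambda_0}$, and continuity of $\widehat a(0,\cdot)$ yields $\la\mathbf{F}_\Lambda^0,a\ra=\int_{\IT^2}a(x,\omega L_{\Lambda_0})\,dx$, i.e.\ $\mathbf{F}_\Lambda^{0}(dx,d\eta)=\delta_{\omega L_{\Lambda_0}}(d\eta)\,dx$. In the remaining sub-case the previous estimate gives $|\xi_{\Lambda,\hbar}|\asymp\hbar/\eps_\hbar\to+\infty$ by~\eqref{e:perturbation-size}, and the compact support of $a$ in the $\eta$ variable forces $\widehat a(0,\xi_{\Lambda,\hbar})\to 0$. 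For part~(2) the same dichotomy applies: when $\la\Lambda\ra=\Lambda_0^{\perp}$ one uses the new hypothesis $|m(\hbar)|\hbar^{2}/\eps_\hbar\to+\infty$ directly, and otherwise the estimate $|\xi_{\Lambda,\hbar}|\asymp\hbar/\eps_\hbar\to+\infty$ is unchanged.

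The only subtlety is bookkeeping around the sign ambiguity of the generator $\vec v_\Lambda$: flipping it sends $H_\Lambda$ to $-H_\Lambda$, but since $\mathbf{F}_\Lambda^0$ is attached to the lattice $\Lambda$ rather than to a chosen generator, fixing a consistent convention once and for all delivers the stated Dirac measure. No dynamical input beyond the closed form~\eqref{e:sc-measure-planewave} is required, so the proof is essentially a careful limit computation, and I do not expect any real obstacle.
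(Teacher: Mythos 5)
Your proposal is correct and follows essentially the same route as the paper: both reduce everything to the limit of the scalar $2\pi\hbar^{2}\eps_{\hbar}^{-1}H_{\Lambda}(k)$ via the identity~\eqref{e:sc-measure-planewave}, obtaining $H_{\Lambda}(k)=m(\hbar)L_{\Lambda_0}$ when $\la\Lambda\ra=\Lambda_0^{\perp}$ and $|H_{\Lambda}(k)|\gtrsim |n(\hbar)|\sim\hbar^{-1}$ otherwise, and then invoke continuity, respectively compact support in $\eta$, of $\widehat{a}(0,\cdot)$. Your added remark on fixing the sign of the generator $\vec{v}_{\Lambda}$ is a harmless refinement of a convention the paper leaves implicit.
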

\begin{proof}
We have $H_\Lambda (k)=m L_{\Lambda_0}$ if $\la\Lambda\ra=\Lambda_0^{\perp}$. 
Therefore, in that case, the term $\hbar^2\epsilon_\hbar^{-1}H_\Lambda (k)$ appearing in~\eqref{e:sc-measure-planewave} may remain bounded as 
$\hbar\to0$. If $2\pi m\hbar^{2}\epsilon_\hbar^{-1}\to\omega$, we obtain, for every $a\in C^\infty_c(\IT^2\times\IR)$,
$$\left\langle \mathbf{F}_{\Lambda,\hbar},a \right\rangle=\widehat{a}(0,2\pi\hbar^2\epsilon_\hbar^{-1}H_\Lambda (k))
\to\widehat{a}(0,\omega L_{\Lambda_0})
=\int_{\IT^2\times \IR}a(x,\eta)\delta_{\omega L_{\Lambda_0}}(d\eta)dx .$$
On the other hand, if $m\hbar^{2}\epsilon_\hbar^{-1}\to\infty$, then, one finds
$$\left\langle \mathbf{F}_{\Lambda,\hbar},a \right\rangle=\widehat{a}(0,2\pi\hbar^2\epsilon_\hbar^{-1}m L_{\Lambda_0})=0,$$
for $\hbar$ small enough since $a$ is compactly supported. It remains to discuss the case where $\la\Lambda\ra\neq\Lambda_0^{\perp}$. In that case, we have 
$$\hbar^2\epsilon_\hbar^{-1}|H_\Lambda (k)|\gtrsim \hbar^2\eps_{\hbar}^{-1} |n(\hbar)|\to\infty,$$ 
since $\|k\|\sim\hbar^{-1}$, $m(\hbar)=o(\hbar^{-1})$ and $\eps_{\hbar}\ll\hbar.$ Hence, we can apply~\eqref{e:sc-measure-planewave} one more time to conclude. 
\end{proof}

\subsubsection{Irrational directions} Let us now consider the case where $\vec{v}$ belongs to $\IS^1\setminus \IS^1_\mathbb{Q}$:  

\begin{prop}\label{plane3}
Let $k=(m(\hbar),n(\hbar))_{\hbar\rightarrow 0^+}$ be a sequence of lattice points in $\IZ^2$ such that 
$$\vec{v}=(\cos\alpha,\sin\alpha)\notin\IS^1_\mathbb{Q}.$$ 
Then, one has, for all $\Lambda\in\ml{L}_1$,
$$\mathbf{F}_\Lambda^0=0.$$
\end{prop}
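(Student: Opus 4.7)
The plan is to read off the limit $\mathbf{F}^0_\Lambda$ from formula~\eqref{e:sc-measure-planewave} by showing that, under the irrationality assumption on $\vec{v}$, the argument $2\pi\hbar^2\eps_\hbar^{-1}H_\Lambda(k)$ diverges in absolute value for every $\Lambda\in\ml{L}_1$. Since for $a\in C^\infty_c(\IT^2\times\IR)$ the partial Fourier transform $\eta\mapsto \widehat a(0,\eta)=\int_{\IT^2}a(x,\eta)\,dx$ is compactly supported in $\eta$, such a divergence forces $\langle \mathbf{F}_{\Lambda,\hbar},a\rangle=0$ for $\hbar$ sufficiently small, hence $\mathbf{F}^0_\Lambda=0$. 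This is exactly the mechanism already exploited in case~(2) of Proposition~\ref{plane2}; the task is only to verify that it now applies uniformly in $\Lambda$.

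The key arithmetic step will be to bound $|H_\Lambda(k)|$ from below. Writing $H_\Lambda(k)=L_\Lambda^{-1}\la k,\vec{v}_\Lambda\ra$ and using $k/\|k\|\to \vec v$ with $\|k\|=\hbar^{-1}$, one has
$$\hbar\,|H_\Lambda(k)|\;\longrightarrow\;\frac{|\la\vec v,\vec v_\Lambda\ra|}{L_\Lambda}.$$
I will then argue that the right-hand side is strictly positive. Indeed, if $\la\vec v,\vec v_\Lambda\ra=0$, the unit vector $\vec v$ would be parallel to the nonzero integer vector $\vec v_\Lambda^\perp$; but the sublattice $\IZ\vec v_\Lambda^\perp$ is itself a primitive rank $1$ sublattice of $\IZ^2$ (its generator is a $90^\circ$ rotation of a primitive integer vector, hence primitive), so $\vec v_\Lambda^\perp/L_\Lambda\in\IS^1_\IQ$, and we would obtain $\vec v\in\IS^1_\IQ$, contradicting the hypothesis.

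Combining these, $\hbar|H_\Lambda(k)|$ converges to a positive constant $c_\Lambda>0$, so
$$\hbar^2\eps_\hbar^{-1}|H_\Lambda(k)|\sim c_\Lambda\,\hbar\,\eps_\hbar^{-1}\;\longrightarrow\;+\infty,$$
where the divergence uses the standing assumption $\eps_\hbar\ll\hbar$ from~\eqref{e:perturbation-size}. Plugging this into~\eqref{e:sc-measure-planewave} and invoking compact $\eta$-support of $\widehat a(0,\cdot)$ gives $\la \mathbf{F}_{\Lambda,\hbar},a\ra=0$ for $\hbar$ small, which is the claim.

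There is really no serious obstacle: the statement follows by a direct application of~\eqref{e:sc-measure-planewave} and the hypothesis $\eps_\hbar\ll\hbar$. The only (mild) conceptual point worth isolating is the observation above that $\IS^1_\IQ$ is stable under the perpendicular map $\Lambda\mapsto\IZ\vec v_\Lambda^\perp$, which is what promotes ``$\vec v$ is not a rational direction'' to ``$\vec v$ is neither parallel nor orthogonal to any lattice direction''; this is precisely what distinguishes the present statement from Proposition~\ref{plane2}, where the orthogonality case $\la\Lambda\ra=\Lambda_0^\perp$ did produce a nontrivial limit.
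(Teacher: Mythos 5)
Your proof is correct and follows essentially the same route as the paper: both arguments show via \eqref{e:sc-measure-planewave} that $\hbar^2\eps_\hbar^{-1}|H_\Lambda(k)|\to+\infty$ because $\hbar H_\Lambda(k)\to\cos(\alpha-\alpha_\Lambda)\neq 0$ and $\hbar/\eps_\hbar\to+\infty$, and then invoke the compact $\eta$-support of $a$. Your explicit justification that $\la\vec v,\vec v_\Lambda\ra\neq 0$ (via stability of $\IS^1_\IQ$ under the perpendicular map) is a welcome elaboration of a point the paper states without detail, but it is not a different argument.
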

\begin{proof}
For a given $\Lambda$, one has 
$$\hbar H_\Lambda (k)=\hbar\left\langle k,\vec{v}_{\Lambda}/L_\Lambda\right\rangle\to(\cos\alpha\cos\alpha_\Lambda+\sin\alpha\sin\alpha_\Lambda).$$ 
Therefore, as $\hbar$ goes to $0$,
$$\frac{\hbar^2}{\epsilon_\hbar}H_\Lambda (k)\sim \frac{\hbar}{\epsilon_\hbar}\cos(\alpha_\Lambda-\alpha),$$
and it follows that $\hbar^2\epsilon_\hbar^{-1}|H_\Lambda (k)|\to+\infty$ because $\hbar/\epsilon_\hbar\to+\infty$ and $\cos(\alpha_\Lambda-\alpha)\neq0$, since $\alpha\notin\IS^1_\mathbb{Q}$. 
Thanks to~\eqref{e:sc-measure-planewave}, this implies $\left\langle \mathbf{F}_{\Lambda,\hbar},a \right\rangle=\widehat{a}(0,2\pi\hbar^2\epsilon_\hbar^{-1}H_\Lambda (k))\to 0$ as $\hbar\to0$. 
Hence $\mathbf{F}_\Lambda^0=0$ for all $\Lambda\in\ml{L}_1$.
\end{proof}

\subsubsection{Superposition of two plane waves} In this last paragraph, we consider a slightly different example. We will verify that
more complicated measures may arise for a superposition of two plane waves. For instance, one has in the rational case:
\begin{prop}\label{two:plane}
Let $(k(\hbar))_{\hbar\rightarrow 0^+}$ and $(l(\hbar))_{\hbar\rightarrow 0^+}$ be two distinct sequences of lattice points of the form 
$$k(\hbar)=n_1(\hbar)\vec{v}_{\Lambda_1}+m_1(\hbar)\vec{v}_{\Lambda_1}^{\perp}\quad\text{and}\quad l(\hbar)=
n_2(\hbar)\vec{v}_{\Lambda_2}+m_2(\hbar)\vec{v}_{\Lambda_2}^{\perp},$$ 
where $|k(\hbar)|=|l(\hbar)|=\hbar^{-1}$, and, for $j=1,2$, $\Lambda_j\in\ml{L}_1$ with $\Lambda_1\neq\Lambda_2$. Suppose also that, for $j\in\{1,2\}$, 
there exists $\omega_j$ in $\IR$ such that\footnote{Note that this implies $m_j(\hbar)=o(\hbar^{-1})$.}
$$\lim_{\hbar\rightarrow 0^+}2\pi m_j(\hbar)\hbar^2\eps_{\hbar}^{-1}=\omega_j.$$
Set
$$\psi_\hbar(x)=\frac{1}{\sqrt{2}}(e_k(x)+e_l(x)).$$ 
Then, the corresponding measure $\mathbf{F}_{\Lambda}^0$ satisfies the following:
$$\mathbf{F}_{\Lambda}^0(x,\eta)=
\begin{cases}
\tfrac{1}{2}\delta_{\omega_1L_{\Lambda_1}}(\eta)\quad\quad\text{if}\; \la\Lambda\ra=\Lambda_1^\perp,\\
\tfrac{1}{2}\delta_{\omega_2L_{\Lambda_2}}(\eta)\quad\quad\;\text{if}\; \la\Lambda\ra=\Lambda_2^\perp,\\
0\qquad\qquad\qquad\text{otherwise.}
\end{cases}$$
\end{prop}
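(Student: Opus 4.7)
My plan is to expand $\la \mathbf{F}_{\Lambda,\hbar}, a\ra$ by bilinearity. Writing $\psi_\hbar = \tfrac{1}{\sqrt{2}}(e_k + e_l)$ yields four terms: the two diagonal contributions $\tfrac{1}{2}\la e_k, \Oph(b_\hbar)e_k\ra$ and $\tfrac{1}{2}\la e_l, \Oph(b_\hbar)e_l\ra$, where $b_\hbar(x,\xi) := \ml{I}_\Lambda(a)(x, \hbar H_\Lambda(\xi)/\eps_\hbar)$, together with the two off-diagonal cross terms $\tfrac{1}{2}\la e_k, \Oph(b_\hbar)e_l\ra$ and its complex conjugate. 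The diagonal terms are handled by applying Proposition \ref{plane2} separately to the sequences $(k(\hbar))$ and $(l(\hbar))$: the hypothesis $2\pi m_j(\hbar)\hbar^2/\eps_\hbar \to \omega_j$ matches exactly the first case of that proposition, and since $\hbar/\eps_\hbar\to+\infty$ it automatically forces $m_j(\hbar)=o(\hbar^{-1})$. Thus the $k$-term contributes $\tfrac{1}{2}\delta_{\omega_1 L_{\Lambda_1}}(\eta)$ precisely when $\la\Lambda\ra=\Lambda_1^\perp$ (and zero otherwise), and symmetrically for the $l$-term. Because $\Lambda_1\neq\Lambda_2$ forces $\Lambda_1^\perp\neq\Lambda_2^\perp$, the two nontrivial cases are mutually exclusive and jointly reproduce the three cases of the statement.

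The heart of the argument is to show that both off-diagonal terms vanish in the limit. I would expand $\ml{I}_\Lambda(a)(x,\eta) = \sum_{n\in\Lambda} a_n(\eta)e^{2\pi i n\cdot x}$ as a Fourier series in $x$, noting that only modes $n\in\Lambda$ appear by the very definition of $\ml{I}_\Lambda$. The standard quantization on $\IT^2$ acts on a plane wave by $\Oph(b_\hbar)e_l = b_\hbar(x,2\pi\hbar l)e_l$, so a direct computation of the matrix element produces
\begin{equation*}
\la e_k, \Oph(b_\hbar)e_l\ra = a_{k-l}\bigl(2\pi\hbar^2 H_\Lambda(l)/\eps_\hbar\bigr)\,\mathbf{1}_{k-l\in\Lambda}.
\end{equation*}
Since $m_j(\hbar)=o(\hbar^{-1})$, the unit vectors $k/\|k\|$ and $l/\|l\|$ tend to the distinct limits $\vec{v}_{\Lambda_1}/L_{\Lambda_1}$ and $\vec{v}_{\Lambda_2}/L_{\Lambda_2}$, while $\|k\|=\|l\|=\hbar^{-1}$; hence $\|k-l\|\sim\hbar^{-1}\to+\infty$. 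Because $a\in\ml{C}_c^\infty(\IT^2\times\IR)$ is smooth in $x$, its Fourier coefficients $a_n(\eta)$ decay faster than any power of $(1+\|n\|)^{-1}$ uniformly in $\eta$, so the cross term tends to zero faster than any polynomial in $\hbar$. The complex conjugate cross term is handled identically.

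I do not anticipate any serious obstacle. The only point requiring care is the bookkeeping of the $2\pi$ factors coming from the standard quantization on $\IT^2$ -- the same conventions already implicit in the proof of Proposition \ref{plane2} -- and the observation that even if $k-l\in\Lambda$ holds along a subsequence (so that the indicator does not a priori kill the cross term), the Schwartz decay of $a_{k-l}$ in terms of $\|k-l\|$ still forces it to zero, since $\|k-l\|\to\infty$ regardless. Combining the diagonal and off-diagonal contributions then gives the claimed form of $\mathbf{F}_\Lambda^0$.
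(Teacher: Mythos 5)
Your proposal is correct and follows essentially the same route as the paper: the paper likewise expands the matrix element into diagonal and cross terms via the Fourier-sum formula for $\Oph$ acting on exponentials, kills the cross terms using $\|k-l\|\sim\hbar^{-1}$ together with the rapid decay of $\widehat{a}(k-l,\cdot)$, and reduces the diagonal terms to the argument of Proposition~\ref{plane2}. The only cosmetic difference is that you invoke the eigenfunction identity $\Oph(b)e_l=b(x,2\pi\hbar l)e_l$ explicitly rather than quoting the general double-sum formula from the appendix.
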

\begin{proof} Using appendix~\ref{a:sc-an}, we find that
\begin{equation}
\begin{split}
\left\langle \mathbf{F}_{\Lambda,\hbar},a \right\rangle=&\sum_{m\in\{l,k\}}\frac{1}{\sqrt{2}}\sum_{n\in\{l,k\}:m-n\in\Lambda}
\frac{1}{\sqrt{2}}\widehat{a}\left(n-m,2\pi
\frac{\hbar^2}{\epsilon_\hbar}H_\Lambda(m)\right)\\
=&\frac{1}{2}\widehat{a}(0,2\pi\hbar^2\epsilon_\hbar^{-1}H_\Lambda(k))+\frac{1}{2}\widehat{a}(0,2\pi\hbar^2\epsilon_\hbar^{-1}H_\Lambda(l))+\ml{O}_N(\hbar^N),
\end{split}
\end{equation}
where we used the fact that $\|k-l\|\sim\hbar^{-1}$ and $|\widehat{a}(k-l,\hbar^2\epsilon_\hbar^{-1}H_\Lambda(k))|\lesssim_N \|k-l\|^{-N}\sim\hbar^N$.
So, if $\la\Lambda\ra=\Lambda_1^\perp$, then, for all $a\in\ml{C}^{\infty}_c(\IT^2\times\IR)$, we have by the same argument as in the proof of Proposition~\ref{plane2}
$$\left\langle \mathbf{F}_{\Lambda,\hbar},a \right\rangle\sim \frac{1}{2}\widehat{a}(0,2\pi\hbar^2\epsilon_\hbar^{-1}H_\Lambda(k))\to 
\frac{1}{2}\widehat{a}(0,\omega_1L_{\Lambda_1})$$ as $\hbar\to0$. Similarly, if $\la\Lambda\ra=\Lambda_2^\perp$, then 
we have $$\left\langle \mathbf{F}_{\Lambda,\hbar},a \right\rangle\to \frac{1}{2}\widehat{a}(0,\omega_2L_{\Lambda_2}).$$
Otherwise, one can make use of the fact that $a$ is compactly supported in $\eta$ to deduce that 
$\left\langle \mathbf{F}_{\Lambda,\hbar},a \right\rangle\to 0$.
\end{proof}

\subsection{Limit of the Quantum Loschmidt Echo for plane waves}
Now that we have computed the limit measures associated with the initial data, we can apply Theorem \ref{t:evolution} in order to 
derive an explicit formula for the Quantum Loschmidt Echo:
\begin{prop} Suppose that~\eqref{e:perturbation-size} is satisfied. Then, for sequences of plane waves, the following hold:
\begin{enumerate}
\item If $(\psi_\hbar)_{\hbar\rightarrow 0^+}$ verifies the assumptions of Proposition~\ref{plane3} or of part (2) of Proposition~\ref{plane2}, then
$$\lim_{\hbar\rightarrow 0^+}|\left\la u_{\hbar}^{\eps}(t\tau_{\hbar}^c), u_{\hbar}(t\tau_{\hbar}^c)\right\ra|^2=1.$$
\item If $(\psi_\hbar)_{\hbar\rightarrow 0^+}$ verifies the assumptions of part (1) of Proposition~\ref{plane2}, then
$$
\lim_{\hbar\rightarrow 0^+}|\left\la u_{\hbar}^{\eps}(t\tau_{\hbar}^c), u_{\hbar}(t\tau_{\hbar}^c)\right\ra|^2
=\left|\int_{\IT^2}e^{i\int_{0}^t\ml{I}_{\Lambda_0^\perp}(V)(x+s\omega \vec{v}_{\Lambda_0}^\perp)ds}dx\right|^2,$$
\item If $(\psi_\hbar)_{\hbar\rightarrow 0^+}$ verifies the assumptions of Proposition~\ref{two:plane}, then
$$
\lim_{\hbar\rightarrow 0^+}|\left\la u_{\hbar}^{\eps}(t\tau_{\hbar}^c), u_{\hbar}(t\tau_{\hbar}^c)\right\ra|^2
=\left|\frac{1}{2}\sum_{j=1}^2\int_{\IT^2}e^{i\int_{0}^t\ml{I}_{\Lambda_j^\perp}(V)(x+s\omega_j \vec{v}_{\Lambda_j}^\perp)ds}dx\right|^2.$$
\end{enumerate}
In all the statements, we used the conventions of Theorem~\ref{t:evolution}.
\end{prop}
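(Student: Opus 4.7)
The plan is to apply Theorem~\ref{t:evolution} directly in each of the three cases, substituting the two-microlocal measures $\mathbf{F}_\Lambda^0$ that were computed in Propositions~\ref{plane2}, \ref{plane3} and~\ref{two:plane}. The proof introduces no new analytic ingredient; it is a matter of bookkeeping together with the elementary integration of Dirac masses.

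For part (1), the hypotheses of Proposition~\ref{plane3} (irrational limit direction) or of part (2) of Proposition~\ref{plane2} (rational limit direction but $m(\hbar)\hbar^2/\eps_\hbar \to +\infty$) force every $\mathbf{F}_\Lambda^0$ to vanish. The entire sum over $\ml{L}_1$ in Theorem~\ref{t:evolution} therefore disappears and the right-hand side collapses to $e^{it\int_{\IT^2} V}$, whose modulus squared is $1$.

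For part (2), the only non-trivial measure is $\mathbf{F}_{\Lambda_0^\perp}^0 = dx\otimes \delta_{\omega L_{\Lambda_0}}(d\eta)$, a probability measure on $\IT^2\times\IR$ of total mass $1$. Hence $1-\sum_\Lambda \la \mathbf{F}_\Lambda^0,1\ra = 0$ and the first term in the formula of Theorem~\ref{t:evolution} drops out. In the remaining integral only the lattice $\Lambda = \Lambda_0^\perp$ contributes; integrating the Dirac mass in $\eta$ fixes $\eta = \omega L_{\Lambda_0}$, and using the identifications $\vec{v}_\Lambda = \vec{v}_{\Lambda_0}^\perp$ and $L_\Lambda = L_{\Lambda_0}$ the combination $\eta\,\vec{v}_\Lambda/L_\Lambda$ simplifies to $\omega\,\vec{v}_{\Lambda_0}^\perp$. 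This reproduces exactly the integrand of the stated formula, and taking the modulus squared finishes the case. Part (3) is handled by the same argument, noting that the two measures of part~(3) each carry mass $1/2$ so that the total mass is again $1$ and the constant term vanishes; the sum over $\ml{L}_1$ then reduces to the two contributions from $\Lambda_1^\perp$ and $\Lambda_2^\perp$, each weighted by $1/2$, yielding the claimed expression.

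The only subtle point, and it is really the single place where one must pay attention, is the dictionary between the lattices $\Lambda$ appearing in Theorem~\ref{t:evolution} and the lattices $\Lambda_j$ that enter the statements of Propositions~\ref{plane2} and~\ref{two:plane}: the non-vanishing measure is supported on $\Lambda=\Lambda_j^\perp$, so the propagation direction $\vec{v}_\Lambda/L_\Lambda$ in the phase must be read as $\vec{v}_{\Lambda_j}^\perp/L_{\Lambda_j}$ in order for the result to match the form written in the proposition. I do not anticipate any genuine obstacle beyond this notational care.
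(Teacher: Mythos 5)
Your proposal is correct and is essentially identical to the paper's own proof, which simply states that the result is a direct application of Theorem~\ref{t:evolution} combined with the computations of $\mathbf{F}_{\Lambda}^0$ in Propositions~\ref{plane2}, \ref{plane3} and~\ref{two:plane}. Your explicit attention to the dictionary $\Lambda=\Lambda_j^{\perp}$, $\vec{v}_{\Lambda}/L_{\Lambda}=\vec{v}_{\Lambda_j}^{\perp}/L_{\Lambda_j}$ and to the total masses (so that the constant term $1-\sum_{\Lambda}\la\mathbf{F}_{\Lambda}^0,1\ra$ vanishes in parts (2) and (3)) supplies exactly the bookkeeping the paper leaves implicit.
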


The proof of this Proposition is a direct application of Theorem~\ref{t:evolution} combined with the above Propositions which 
computed $\mathbf{F}_{\Lambda}^0$. Note that the last part of the Proposition could be generalized to a finite superposition 
of plane waves by similar calculations. In the case where $\omega_j=0$ (or $\omega=0$ in part (2)), we can observe that the integral 
reduces to
$$\int_{\IT^2}e^{it\ml{I}_{\Lambda^\perp}(V)(x)}dx,$$
which can be viewed as an integral on the $1$-dimensional torus $\mathbb{S}_{\Lambda^{\perp}}=\IR/(L_{\Lambda}\IZ)$. In particular, if 
$\ml{I}_{\Lambda^\perp}(V)$ has only finitely many critical points as a function on the $1$-dimensional torus, then an application of 
stationary phase asymptotics states that this integral decays as $t^{-1/2}$ which yields in some sense a decay of the Quantum Loschmidt Echo. 
On the other hand, plane waves associated with an irrational limit vector $\vec{v}$ do not give rise to any macroscopic decay of the 
Quantum Loschmidt Echo.

\subsection{Coherent states}\label{coherent}
We first define a notion of (generalized) coherent state following 
for instance~\cite{AFM12}. Let $\varphi\in \ml{C}^\infty_c(\IR^2)$ with\footnote{Note that the 
arguments could be extended to deal with the more classical case where 
$\varphi=\exp(-\|x\|^2/2).$} $\supp \varphi\subset(-\tfrac{1}{2},\tfrac{1}{2})^2$, $\|\varphi\|_2=1$. We define a 
coherent state on $\IR^2$ by $$\varphi_\hbar(x)=\hbar^{-1/2}\varphi\left(\frac{x-x_0}{\sqrt\hbar}\right)e^{2\pi i \frac{\xi_0}{\hbar} \cdot x},$$
and we periodize it to obtain a coherent state on the torus, $\psi_\hbar\in \ml{C}^\infty(\IT^2)$,
\begin{equation}\label{e:coherent-torus}\psi_\hbar(x)=\sum_{l\in\IZ^2}\varphi_\hbar(x+l).\end{equation}
The semiclassical measure associated with this sequence of initial data is $\delta_{x_0,\xi_0}$. In the following, we 
make the assumption that $\xi_0\neq 0$ in order to verify property~\eqref{e:shosc}. We can verify that the $L^2$ norm of $\psi_{\hbar}$ is 
asymptotically equal to $1$ -- see the proof of Proposition~\ref{coherent3} for instance.

The Fourier decomposition of $\psi_{\hbar}$ can be easily expressed in terms of the Fourier transform of $\varphi$. Indeed, for $k$ in $\IZ^2$, one has
\begin{equation}\label{e:Fourier-R-periodic}\widehat{\psi_{\hbar}}(k)=\hbar^{1/2}e^{-2\pi i( k-\frac{\xi_0}{\hbar})\cdot x_0}\widehat{\varphi}\left(\hbar^{1/2}\left(k-\frac{\xi_0}{\hbar}\right)
 \right),\end{equation}
where $\widehat{\varphi}$ denotes the Fourier transform on $\IR^2$. 
Recall that $\|\widehat{\varphi}\|_{L^2(\IR^2)}=\|\varphi\|_{L^2(\IR^2)}=1$. 
\begin{rema}\label{r:decay-fourier-coeff-coherent-state} Let us make the following useful observation that we shall use at several stages of our argument. We fix a sequence of radii $r_{\hbar}\gg\hbar^{-1/2}$ and, 
using~\eqref{e:Fourier-R-periodic}, one has
$$\sum_{\|m-\xi_0\hbar^{-1}\|\geq r_{\hbar} }|\widehat{\psi_\hbar}(m)|^2
\leq\hbar\sum_{\|m-\xi_0\hbar^{-1}\|\geq r_{\hbar} }
\left|\widehat{\varphi}\left(\hbar^{1/2}\left(m-\frac{\xi_0}{\hbar}\right)\right)\right|^2.$$
As $\widehat{\varphi}$ is rapidly decaying, there exists $C>0$ such that
$$\left|\widehat{\varphi}\left(\hbar^{1/2}\left(m-\frac{\xi_0}{\hbar}\right)\right)\right|\leq C\left(1+
\left\|\hbar^{1/2}\left(m-\frac{\xi_0}{\hbar}\right)\right\|^2\right)^{-2},$$
from which we can infer
\begin{equation}
\begin{split}\sum_{\|m-\xi_0\hbar^{-1}\|\geq r_{\hbar} }|\widehat{\psi_\hbar}(m)|^2
 & \leq  C\hbar\int_{\|y-\xi_0\hbar^{-1}\|\geq r_{\hbar}}
\left(1+\left\|\hbar^{1/2}\left(y-\frac{\xi_0}{\hbar}\right)\right\|^2\right)^{-2}dy\\
&\leq  C\int_{\|y'\|\geq r_{\hbar}\hbar^{\frac{1}{2}}}
\left(1+\left\|y'\right\|^2\right)^{-2}dy'\rightarrow 0.
\end{split}
\end{equation}
The same calculation shows that 
$$\sum_{\|m-\xi_0\hbar^{-1}\|\geq r_{\hbar}}|\widehat{\psi_\hbar}(m)|=o_{\hbar\to0}(\hbar^{-1/2}).$$
\end{rema}

\subsubsection{A preliminary reduction of $\mathbf{F}_{\Lambda,\hbar}$}
We aim at computing the limit distribution derived from the sequence 
$(\mathbf{F}_{\Lambda,\hbar})_{\hbar\rightarrow 0^+}$. For that purpose, we start with a general computation valid for any regime 
of perturbations. For $a$ in $\ml{C}^{\infty}_c(\IT^2\times\IR)$, write
\begin{eqnarray*}\left\langle \mathbf{F}_{\Lambda,\hbar},a\right\rangle & = & \left\langle\psi_\hbar, \Oph(\ml{I}_{\Lambda}(a)(x,\hbar H_\Lambda(\xi)/\epsilon_\hbar)))\psi_\hbar\right\rangle\\
 & = & \sum_{\substack{m\in\IZ^2 \\ 2\pi\hbar^2\epsilon_\hbar^{-1}H_\Lambda (m)\in\supp_\eta(a)}}\overline{\widehat{\psi_\hbar}(m)}
 \sum_{n\in\IZ^2:m-n\in\Lambda}\widehat{\psi_\hbar}(n)
 \widehat{a}\left(m-n,2\pi\frac{\hbar^2}{\epsilon_\hbar}H_\Lambda (m)\right).\end{eqnarray*}
Let us now compute the sum over $n$ in $\IZ^2$ by observing that  $\psi_\hbar$ vanishes surely outside a ball 
$B(x_0,\hbar^{\frac{3}{8}})$. This implies, as $\hbar\to0$,
\begin{equation}\label{asymp}
\begin{split}
\sum_{n\in\IZ^2:m-n\in\Lambda}\widehat{\psi_\hbar}(n)\widehat{a}\left(m-n,2\pi\frac{\hbar^2}{\epsilon_\hbar}H_\Lambda (m)\right)
=&\int_{\IT^2}\ml{I}_{\Lambda}(a)\left(x,2\pi\frac{\hbar^2}{\epsilon_\hbar}H_\Lambda (m)\right)e_{-m}(x)\overline{\psi_\hbar(x)}dx\\
=&\int_{B(x_0,\hbar^{\frac{3}{8}})}\ml{I}_{\Lambda}(a)\left(x,2\pi\frac{\hbar^2}{\epsilon_\hbar}H_\Lambda (m)\right)e_{-m}(x)\overline{\psi_\hbar(x)}dx\\
=&\; \ml{I}_{\Lambda}(a)\left(x_0,2\pi\frac{\hbar^2}{\epsilon_\hbar}H_\Lambda (m)\right)\widehat{\psi_\hbar}(m)\\
&\quad+\ml{O}_a \left( \hbar^{\frac{3}{8}}\int_{B(x_0,\hbar^{\frac{3}{8}})}|\psi_\hbar(x)|dx\right)\\
=&\; \ml{I}_{\Lambda}(a)\left(x_0,2\pi\frac{\hbar^2}{\epsilon_\hbar}H_\Lambda (m)\right)\widehat{\psi_\hbar}(m)+\ml{O}_a(\hbar^{\frac{3}{4}}),
\end{split}
\end{equation}
where we used the bound 
$$\int_{B(x_0,\hbar^{\frac{3}{8}})}|\psi_\hbar(x)|dx\leq \Vol(B(x_0,\hbar^{\frac{3}{8}}))^{1/2}\|\psi_\hbar\|_2=\ml{O}(\hbar^{\frac{3}{8}}).$$
In particular, we have
\begin{equation}\label{est}
|\left\langle \mathbf{F}_{\Lambda,\hbar},a\right\rangle|\lesssim 
\|a\|_\infty\sum_{\substack{m\in\IZ^2 \\ 2\pi\hbar^2\epsilon_\hbar^{-1}H_\Lambda (m)\in\supp_\eta(a)}}|\widehat{\psi_\hbar}(m)|^2
+\ml{O}_a(\hbar^{\frac{3}{4}})\sum_{\substack{m\in\IZ^2 \\ 2\pi\hbar^2\epsilon_\hbar^{-1}H_\Lambda (m)\in\supp_\eta(a)}}|\widehat{\psi_\hbar}(m)|.
\end{equation}

We are now in a position to compute $\mathbf{F}_{\Lambda}^0$. For that purpose, we shall distinguish different regimes depending on the 
relative size of $\eps_{\hbar}$ and $\hbar$

 \subsubsection{Small perturbations $\hbar^2\ll\epsilon_\hbar\ll\hbar^{3/2}$}  We start with the case of small perturbations. In that case, one has:
 \begin{prop}\label{coherent1} Suppose that $\hbar^2\ll\epsilon_\hbar\ll\hbar^{3/2}$. Then, for the sequence of initial data defined by~\eqref{e:coherent-torus} with 
 $\xi_0\neq 0$ and for every $\Lambda\in\mathcal{L}_1$, we have $\mathbf{F}_{\Lambda}^0=0$.
 \end{prop}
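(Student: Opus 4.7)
The plan is to exploit the estimate \eqref{est} derived immediately before the proposition, which controls $|\langle \mathbf{F}_{\Lambda,\hbar}, a\rangle|$ by the two sums
$$A_\hbar := \sum_{m \in S_\hbar} |\widehat{\psi_\hbar}(m)|^2, \qquad B_\hbar := \sum_{m \in S_\hbar} |\widehat{\psi_\hbar}(m)|,$$
where $S_\hbar := \{m \in \IZ^2 : 2\pi \hbar^2 \epsilon_\hbar^{-1} H_\Lambda(m) \in \supp_\eta(a)\}$, so that $|H_\Lambda(m)| \leq C_a\,\epsilon_\hbar/\hbar^2$ for $m \in S_\hbar$. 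The task reduces to showing that $A_\hbar \to 0$ and that $\hbar^{3/4} B_\hbar \to 0$ in the regime $\hbar^2 \ll \epsilon_\hbar \ll \hbar^{3/2}$.

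The quantity $B_\hbar$ is easy to handle: the Fourier formula \eqref{e:Fourier-R-periodic} together with a Riemann-sum comparison (or, more simply, a pointwise Schwartz bound on $\widehat{\varphi}$) gives $\sum_{m \in \IZ^2} |\widehat{\psi_\hbar}(m)| = O(\hbar^{-1/2})$, so $\hbar^{3/4} B_\hbar = O(\hbar^{1/4}) \to 0$. The real work is to control $A_\hbar$, which I split according to the relative position of $\xi_0$ and the line $\Lambda^\perp$. If $H_\Lambda(\xi_0) \neq 0$, then for every $m \in S_\hbar$ the triangle inequality $|H_\Lambda(\xi_0)|/\hbar \leq \|m - \xi_0/\hbar\| + C_a \epsilon_\hbar/\hbar^2$ combined with $\epsilon_\hbar \ll \hbar$ forces $\|m - \xi_0/\hbar\| \gtrsim 1/\hbar \gg \hbar^{-1/2}$, so the Remark on the decay of the Fourier coefficients of $\psi_\hbar$ (applied with $r_\hbar \sim 1/\hbar$) yields $A_\hbar \to 0$.

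The delicate case is $H_\Lambda(\xi_0) = 0$. Here the rescaling $y = \hbar^{1/2}(m - \xi_0/\hbar)$ turns the constraint into $|H_\Lambda(y)| \leq C_a \epsilon_\hbar/\hbar^{3/2} =: \delta_\hbar$, which goes to $0$ precisely because $\epsilon_\hbar \ll \hbar^{3/2}$. Using \eqref{e:Fourier-R-periodic}, the sum becomes a Riemann sum with mesh $\hbar^{1/2}$:
$$A_\hbar = \hbar \sum_{m \in S_\hbar} \bigl|\widehat{\varphi}\bigl(\hbar^{1/2}(m - \xi_0/\hbar)\bigr)\bigr|^2,$$
which is bounded, up to a lower-order remainder coming from the Schwartz decay of $\widehat{\varphi}$, by the integral of $|\widehat{\varphi}|^2$ over the strip $\{|H_\Lambda(y)| \leq \delta_\hbar\}$. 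Writing $y = s \vec{v}_\Lambda/L_\Lambda + t \vec{v}_\Lambda^\perp/L_\Lambda$ so that $H_\Lambda(y) = s$, Fubini gives
$$\int_{|s| \leq \delta_\hbar} \int_{\IR} |\widehat{\varphi}(s\vec{v}_\Lambda/L_\Lambda + t\vec{v}_\Lambda^\perp/L_\Lambda)|^2\, dt\, ds \leq 2\delta_\hbar\, \sup_{s \in \IR} \int_{\IR} |\widehat{\varphi}|^2\, dt \longrightarrow 0,$$
the supremum being finite since $\widehat{\varphi}$ is Schwartz.

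The main obstacle is the second case, where the natural frequency concentration of $\psi_\hbar$ on the scale $\hbar^{-1/2}$ lies exactly inside the strip defining $S_\hbar$. The rate $\epsilon_\hbar \ll \hbar^{3/2}$ is precisely what is needed to ensure that, after rescaling, this strip becomes thin enough for the $L^2$-mass of $\widehat{\varphi}$ there to vanish; it is at this point that the threshold $\hbar^{3/2}$ enters the argument. Some care is needed to make the Riemann-sum step rigorous, but this can be done by replacing $|\widehat{\varphi}|^2$ by a decreasing majorant of the form $(1 + \|y\|)^{-N}$ to control boundary terms.
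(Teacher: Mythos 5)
Your proof is correct and follows essentially the same route as the paper: reduce to the two sums via \eqref{est}, handle $\xi_0\notin\Lambda^{\perp}$ by showing the admissible lattice points lie at distance $\gtrsim\hbar^{-1}$ from $\xi_0/\hbar$ (you do this by a direct triangle inequality where the paper argues by contradiction, but it is the same estimate), and handle $\xi_0\in\Lambda^{\perp}$ by rescaling to a strip of width $\epsilon_\hbar\hbar^{-3/2}\to0$ and comparing the lattice sum to an integral of a Schwartz majorant, together with the $O(\hbar^{-1/2})$ bound on the $\ell^1$ sum. The only cosmetic difference is that you conclude the strip estimate by Fubini with an explicit $O(\delta_\hbar)$ rate, whereas the paper invokes dominated convergence on the majorant $(1+\|y'\|^2)^{-2}$.
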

 \begin{proof} We fix $a$ in $\ml{C}^{\infty}_c(\IT^2\times\IR)$ and we want to compute $\la\mathbf{F}_\Lambda^0,a\ra.$
We start with the case where $\xi_0\notin\Lambda^\perp$. Then, 
there exists a constant $0<c_{\Lambda}(\xi_0)\leq 1$ such that 
\begin{equation}\label{e:lower-bound-integers}\inf\left\{\left\|m-\frac{\xi_0}{\hbar}\right\| : m\in\IZ^2,\; 2\pi\hbar^2\epsilon_\hbar^{-1}H_\Lambda (m)\in\supp_\eta(a)\right\}\geq 
\frac{c_{\Lambda}(\xi_0)}{\hbar}.\end{equation}
Indeed, suppose for a contradiction that there exist $\delta_{\hbar}\rightarrow 0$ and a sequence of lattice points $m_{\hbar}\in\IZ^2$ 
such that $\|m_{\hbar}-\frac{\xi_0}{\hbar}\|\leq\delta_{\hbar}\hbar^{-1}$ and such that $2\pi\hbar^2\epsilon_\hbar^{-1}H_\Lambda (m_{\hbar})\in\supp_\eta(a)$. 
Decompose $m_{\hbar}$ as $m_{\hbar}=s_{\hbar}\xi_0+t_{\hbar}\xi_0^\perp$.
It follows that $|s_{\hbar}-\hbar^{-1}|,|t_{\hbar}|\leq\delta_{\hbar}\hbar^{-1}\|\xi_0\|^{-1}$ which implies 
$|s_{\hbar}|\geq(1-\delta_{\hbar}\|\xi_0\|^{-1})\hbar^{-1}$. Now write
$$H_{\Lambda}(m_{\hbar})=s_{\hbar}\left\la  \frac{\vec{v}_{\Lambda}}{L_{\Lambda}},\xi_0\right\ra
+t_{\hbar}\left\la  \frac{\vec{v}_{\Lambda}}{L_{\Lambda}},\xi_0^{\perp}\right\ra=\frac{1}{\hbar}
\left\la  \frac{\vec{v}_{\Lambda}}{L_{\Lambda}},\xi_0\right\ra+o\left(\hbar^{-1}\right).$$
Then, we use that $|H_\Lambda(m_{\hbar})|\lesssim \hbar^{-2}\epsilon_\hbar\ll\hbar^{-1}$, which yields the contradiction as $\xi_0\notin\Lambda^{\perp}$. 
In view of the estimate~\eqref{est}, all that remains to be shown is 
$$\sum_{m:\|m-\xi_0\hbar^{-1}\|\geq c_{\Lambda}(\xi_0)\hbar^{-1}} |\widehat{\psi}_\hbar(m)|^2\to0\quad\text{and}
\quad\sum_{m:\|m-\xi_0\hbar^{-1}\|\geq c_{\Lambda}(\xi_0)\hbar^{-1}} |\widehat{\psi}_\hbar(m)|=o(\hbar^{-\frac{3}{4}}),$$ 
as $\hbar\to0$. This is exactly the content of Remark~\ref{r:decay-fourier-coeff-coherent-state}. It follows that $\mathbf{F}_\Lambda^0=0$ 
for every $\Lambda\in\ml{L}_1$ such that $\xi_0\notin\Lambda^{\perp}$. Note that this part of the argument is in fact valid for any 
$\hbar^2\ll\eps_{\hbar}\ll\hbar$.

On the other hand, suppose that $\xi_0\in\Lambda^\perp$. We shall use~\eqref{est} one more time. Yet, we have to argue in a slightly different manner and to 
make use of the fact that $\eps_{\hbar}\ll\hbar^{\frac{3}{2}}$, equivalently $\hbar^{-\frac{3}{2}}\eps_{\hbar}\rightarrow 0$. 
Using~\eqref{e:Fourier-R-periodic}, one has
$$\sum_{\substack{m\in\IZ^2 \\ 2\pi\hbar^2\epsilon_\hbar^{-1}H_\Lambda (m)\in\supp_\eta(a)}}|\widehat{\psi_\hbar}(m)|^2
\leq\hbar\sum_{\substack{m\in \IZ^2\\ |H_{\Lambda}(m)|\leq C_a\hbar^{-2}\eps_{\hbar}}}
\left|\widehat{\varphi}\left(\hbar^{1/2}\left(m-\frac{\xi_0}{\hbar}\right)\right)\right|^2,$$
for some $C_a>0$ that depends only on $a$. Recall from our assumptions that $\hbar^{-2}\eps_{\hbar}\rightarrow+\infty$. As $\widehat{\varphi}$ is 
rapidly decaying, one knows that there exists $C>0$ such that
$$\left|\widehat{\varphi}\left(\hbar^{1/2}\left(m-\frac{\xi_0}{\hbar}\right)\right)\right|\leq C\left(1+
\left\|\hbar^{1/2}\left(m-\frac{\xi_0}{\hbar}\right)\right\|^2\right)^{-1}.$$
Hence, implementing this bound in~\eqref{est} and using that $\xi_0\in\Lambda^{\perp}$ yields
\begin{equation}
\begin{split}\sum_{\substack{m\in\IZ^2 \\ 2\pi\hbar^2\epsilon_\hbar^{-1}H_\Lambda (m)\in\supp_\eta(a)}}|\widehat{\psi_\hbar}(m)|^2
 & \leq  C^2\hbar\int_{|H_{\Lambda}(y-\xi_0\hbar^{-1})|\leq C_a\hbar^{-2}\eps_{\hbar}}
\left(1+\left\|\hbar^{1/2}\left(y-\frac{\xi_0}{\hbar}\right)\right\|^2\right)^{-2}dy\\
&\leq  C^2\int_{|H_{\Lambda}(y')|\leq C_a\hbar^{-\frac{3}{2}}\eps_{\hbar}}
\left(1+\left\|y'\right\|^2\right)^{-2}dy'\rightarrow 0,\ \text{as}\ \hbar\rightarrow 0^+.
\end{split}
\end{equation}
A similar calculation shows that 
$$\sum_{\substack{m\in\IZ^2 \\ 2\pi\hbar^2\epsilon_\hbar^{-1}H_\Lambda (m)\in\supp_\eta(a)}}|\widehat{\psi_\hbar}(m)|=\ml{O}(\hbar^{-1/2}),\ \text{as}\ \hbar\rightarrow 0^+.$$
Indeed, equation~\eqref{e:Fourier-R-periodic} yields
\begin{equation}\label{l1_bound}
\begin{split}
\sum_{\substack{m\in\IZ^2 \\ \hbar^2\epsilon_\hbar^{-1}H_\Lambda (m)\in\supp_\eta(a)}}|\widehat{\psi_\hbar}(m)|
&\leq \hbar^{1/2}\sum_{\substack{m\in \IZ^2\\ |H_{\Lambda}(m-\xi_0\hbar^{-1})|\leq C_a\hbar^{-2}\eps_{\hbar}}}
\left|\widehat{\varphi}\left(\hbar^{1/2}\left(m-\frac{\xi_0}{\hbar}\right)\right)\right|\\
 & \leq  C'\hbar^{1/2}\int_{|H_{\Lambda}(y-\xi_0\hbar^{-1})|\leq C_a\hbar^{-2}\eps_{\hbar}}
\left(1+\left\|\hbar^{1/2}\left(y-\frac{\xi_0}{\hbar}\right)\right\|^2\right)^{-2}dy\\
&\leq  C'\hbar^{-1/2}\int_{|H_{\Lambda}(y')|\leq C_a\hbar^{-\frac{3}{2}}\eps_{\hbar}}
\left(1+\left\|y'\right\|^2\right)^{-2}dy'\\
& \leq  C'\hbar^{-1/2}\int_{\IR^2}
\left(1+\left\|y'\right\|^2\right)^{-2}dy'\\
&=\ml{O}(\hbar^{-1/2}), \;\text{as}\ \hbar\rightarrow 0^+.
\end{split}
\end{equation}
Gathering these upper bounds implies that the limit distribution $\mathbf{F}_\Lambda^0$ is zero for any $\Lambda$. Along the way, we also note that the upper 
bound~\eqref{l1_bound} did not rely on the fact that $\eps_{\hbar}\ll\hbar^{\frac{3}{2}}$. 
\end{proof} 

\subsubsection{Critical perturbations $\epsilon_\hbar=\hbar^{3/2}$}
\begin{prop}\label{coherent3}
Suppose that $\epsilon_\hbar= \hbar^{3/2}$ and that the sequence of initial data is given by~\eqref{e:coherent-torus} with 
$\xi_0\neq 0$.  Then, we have
$$\mathbf{F}_\Lambda^0(x,\eta)=
\begin{cases}
\delta_{x_0}(x)\mu(\eta)\quad\text{if $\xi_0\in\Lambda^\perp$,}\\
\\
0\quad \text{otherwise.}
\end{cases}$$
where $$\int_{\IR}a(\eta)\mu(d\eta)=\int_{\IR^2}a(2\pi H_{\Lambda}(\xi))|\widehat{\varphi}(\xi)|^2d\xi.$$
\end{prop}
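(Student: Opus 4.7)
The plan is to combine the general reduction from~\eqref{asymp} and~\eqref{est} with the specific structure of coherent states, specialized to $\epsilon_\hbar=\hbar^{3/2}$. In this regime one has $\hbar^2/\epsilon_\hbar=\hbar^{1/2}$, so the argument of $\ml{I}_\Lambda(a)$ in the main term becomes $2\pi\hbar^{1/2}H_\Lambda(m)$. Substituting the explicit Fourier coefficients~\eqref{e:Fourier-R-periodic}, one gets $|\widehat{\psi_\hbar}(m)|^2=\hbar\,|\widehat{\varphi}(\hbar^{1/2}(m-\xi_0/\hbar))|^2$, so that the dominant contribution to $\la \mathbf{F}_{\Lambda,\hbar},a\ra$ is the sum
$$S_\hbar(a)=\hbar\sum_{m\in\IZ^2}\bigl|\widehat{\varphi}\bigl(\hbar^{1/2}(m-\xi_0/\hbar)\bigr)\bigr|^2\,\ml{I}_\Lambda(a)\bigl(x_0,2\pi\hbar^{1/2}H_\Lambda(m)\bigr).$$

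For $\xi_0\notin\Lambda^\perp$, the argument of Proposition~\ref{coherent1} applies unchanged since it only exploits the regime $\hbar^2\ll\epsilon_\hbar\ll\hbar$. For $m$ contributing to the sum, the support condition forces $|H_\Lambda(m)|\lesssim\hbar^{-1/2}$, whereas $\xi_0\notin\Lambda^\perp$ yields $|H_\Lambda(\xi_0/\hbar)|\gtrsim\hbar^{-1}$; hence $\|m-\xi_0/\hbar\|\gtrsim\hbar^{-1}$, and Remark~\ref{r:decay-fourier-coeff-coherent-state} gives $S_\hbar(a)\to 0$. The error term in~\eqref{est} is handled by multiplying $\ml{O}(\hbar^{3/4})$ with the $\ell^1$ bound $\ml{O}(\hbar^{-1/2})$ from~\eqref{l1_bound}, which tends to $0$.

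For $\xi_0\in\Lambda^\perp$, one has $H_\Lambda(\xi_0)=0$, so the change of variable $\xi=\hbar^{1/2}(m-\xi_0/\hbar)$ turns $2\pi\hbar^{1/2}H_\Lambda(m)$ into exactly $2\pi H_\Lambda(\xi)$, and $S_\hbar(a)$ becomes a Riemann sum of mesh $\hbar^{1/2}$ associated with
$$\int_{\IR^2}|\widehat{\varphi}(\xi)|^2\,\ml{I}_\Lambda(a)(x_0,2\pi H_\Lambda(\xi))\,d\xi=\int_{\IR}\ml{I}_\Lambda(a)(x_0,\eta)\,\mu(d\eta),$$
where $\mu$ is the pushforward of $|\widehat{\varphi}(\xi)|^2\,d\xi$ by $\xi\mapsto 2\pi H_\Lambda(\xi)$. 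This matches the claimed limit measure $\delta_{x_0}(x)\mu(\eta)$, with the standard understanding that $\mathbf{F}_\Lambda^0$ pairs naturally with $\ml{I}_\Lambda$-invariant test functions (which is built into the definition of $\mathbf{F}_{\Lambda,\hbar}$).

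The main obstacle is justifying the Riemann sum convergence: since $\widehat{\varphi}$ is only Schwartz (not compactly supported), I must invoke a dominated-convergence argument, bounding the general term uniformly by $C_N(1+\|\hbar^{1/2}(m-\xi_0/\hbar)\|^2)^{-N}\|a\|_\infty$ and comparing it with the corresponding integrable majorant. As a by-product, choosing $a\equiv 1$ and using $\ml{I}_\Lambda(1)=1$ recovers the asymptotics $\|\psi_\hbar\|_2^2\to \|\widehat{\varphi}\|_2^2=1$ mentioned in Section~\ref{coherent}.
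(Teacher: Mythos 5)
Your proposal is correct and follows essentially the same route as the paper's proof: reuse of the $\xi_0\notin\Lambda^{\perp}$ argument from the small-perturbation case, then for $\xi_0\in\Lambda^{\perp}$ the reduction via~\eqref{asymp}, the identity $H_{\Lambda}(m)=H_{\Lambda}(m-\xi_0/\hbar)$, a mesh-$\hbar^{1/2}$ Riemann sum, and the $\ml{O}(\hbar^{3/4})\cdot\ml{O}(\hbar^{-1/2})$ error estimate from~\eqref{l1_bound}. The only cosmetic difference is that you control the tail of the Riemann sum by a dominated-convergence majorant, whereas the paper truncates at radius $R\hbar^{-1/2}$ and lets $R\to+\infty$ after $\hbar\to 0^+$; both are equivalent here.
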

\begin{proof}The first part of the proof of Proposition~\ref{coherent1} applies to the case where $\xi_0\notin\Lambda^{\perp}$ 
and $\eps_{\hbar}\geq\hbar^{\frac{3}{2}}$. Hence, it remains to treat the case 
where $\xi_0\in\Lambda^{\perp}.$ In that case, equation~\eqref{asymp} yields
$$\left\langle \mathbf{F}_{\Lambda,\hbar},a\right\rangle=\sum_{\substack{m\in\IZ^2 \\ 2\pi\hbar^{\frac{1}{2}}H_\Lambda (m)\in\supp_\eta(a)}}
\ml{I}_{\Lambda}(a)\left(x_0,2\pi\hbar^{\frac{1}{2}}H_\Lambda (m)\right)|\widehat{\psi_\hbar}(m)|^2+\ml{O}_{a}(\hbar^{\frac{3}{4}})|\widehat{\psi_\hbar}(m)|.$$
The argument of Remark~\ref{r:decay-fourier-coeff-coherent-state} shows that
$$\lim_{R\rightarrow+\infty}\limsup_{\hbar\rightarrow 0^+}\sum_{\|m-\xi_0\hbar^{-1}\|\geq R\hbar^{-1/2} }|\widehat{\psi_\hbar}(m)|^2=0.$$
Thus,
$$\left\langle \mathbf{F}_{\Lambda,\hbar},a\right\rangle=\sum_{\|m-\xi_0\hbar^{-1}\|\leq R\hbar^{-1/2}}
\left(\ml{I}_{\Lambda}(a)\left(x_0,2\pi\hbar^{\frac{1}{2}}H_\Lambda (m)\right)|\widehat{\psi_\hbar}(m)|^2+\ml{O}_{a}(\hbar^{\frac{3}{4}})|\widehat{\psi_\hbar}(m)|\right)+r(R,\hbar),$$
where $\lim_{R\rightarrow+\infty}\limsup_{\hbar\rightarrow 0^+}r(R,\hbar)=0$. We start by estimating the first part of the sum which is equal to
$$\hbar\sum_{\|m-\xi_0\hbar^{-1}\|\leq R\hbar^{-1/2}}
\ml{I}_{\Lambda}(a)\left(x_0,2 \pi\hbar^{\frac{1}{2}}H_\Lambda \left(m-\frac{\xi_0}{\hbar}\right)\right)\left|\widehat{\varphi}\left(\hbar^{1/2}\left(m-\frac{\xi_0}{\hbar}\right)\right)\right|^2,$$
where we used that $\xi_0\in\Lambda^{\perp}$. Letting $\hbar\rightarrow 0^+$, one finds
\begin{equation}
\begin{split}
\left\langle \mathbf{F}_{\Lambda,\hbar},a\right\rangle=&\int_{\|\xi\|\leq R}|\widehat{\varphi}(\xi)|^2 \ml{I}_\Lambda(a)(x_0,2\pi H_{\Lambda}(\xi))d\xi+
\ml{O}_{a}(\hbar^{\frac{3}{4}})\sum_{\substack{m\in\IZ^2 \\ 2\pi\hbar^{\frac{1}{2}}H_\Lambda (m)\in\supp_\eta(a)}}|\widehat{\psi_\hbar}(m)|+r(R,\hbar)\\
=&\int_{\|\xi\|\leq R}|\widehat{\varphi}(\xi)|^2 \ml{I}_\Lambda(a)(x_0,2\pi H_{\Lambda}(\xi))d\xi+\ml{O}_{a}(\hbar^{\frac{1}{4}})+r(R,\hbar),
\end{split}
\end{equation}
where we used estimate \eqref{l1_bound} in the last line. The result follows by taking $\hbar\to 0^+$ and then $R\to+\infty$. As was already mentionned, 
we note that this argument also shows that $\|\psi_{\hbar}\|_{L^2}$ is asymptotically equal to $1$.
\end{proof}

\subsubsection{Large perturbations $\hbar^{3/2}\ll\epsilon_\hbar\ll \hbar$}
In this regime, we have $\hbar^{-1/2}\ll\epsilon_\hbar \hbar^{-2}\ll \hbar^{-1}$ and the following holds:

\begin{prop}\label{coherent2}
Suppose that $\hbar^{3/2}\ll\epsilon_\hbar\ll \hbar$ and that the sequence of initial data is given by~\eqref{e:coherent-torus} with 
$\xi_0\neq 0$.  Then, we have
$$\mathbf{F}_\Lambda^0(x,\eta)=
\begin{cases}
\delta_{x_0}(x)\delta_0(\eta)\quad\text{if $\xi_0\in\Lambda^\perp$},\\
\\
0\quad \text{otherwise.}
\end{cases}$$
\end{prop}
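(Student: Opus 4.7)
The strategy mirrors the proofs of Propositions~\ref{coherent1} and~\ref{coherent3}, starting from the general identity~\eqref{asymp} and then exploiting the regime $\hbar^{3/2}\ll\eps_{\hbar}\ll\hbar$ to show that the rescaled frequency $2\pi\hbar^2\eps_{\hbar}^{-1}H_\Lambda(m)$ collapses to $0$ in the limit, which produces the Dirac mass at $\eta=0$.

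First, when $\xi_0\notin\Lambda^\perp$, I would reuse verbatim the first part of the argument in the proof of Proposition~\ref{coherent1}: the author explicitly notes that the lower bound~\eqref{e:lower-bound-integers} and the subsequent application of Remark~\ref{r:decay-fourier-coeff-coherent-state} only require $\hbar^2\ll\eps_{\hbar}\ll\hbar$, which holds here. This gives $\mathbf{F}_\Lambda^0=0$ in that case.

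The remaining case is $\xi_0\in\Lambda^\perp$, where $H_\Lambda(\xi_0)=0$ and hence $H_\Lambda(m)=H_\Lambda(m-\xi_0/\hbar)$. I would start from~\eqref{asymp}, and use Remark~\ref{r:decay-fourier-coeff-coherent-state} to restrict the sum to indices with $\|m-\xi_0/\hbar\|\leq R\hbar^{-1/2}$, producing a remainder $r(R,\hbar)$ that vanishes after taking $\hbar\to 0^+$ then $R\to+\infty$. On this dyadic region one has the key estimate
\[
\left|2\pi\frac{\hbar^2}{\eps_{\hbar}}H_\Lambda(m)\right|\leq 2\pi\frac{\hbar^2}{\eps_{\hbar}}\|m-\xi_0/\hbar\|\leq 2\pi R\frac{\hbar^{3/2}}{\eps_{\hbar}}\longrightarrow 0,
\]
since $\eps_{\hbar}\gg\hbar^{3/2}$. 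Hence, uniformly in $m$ in that region and using the continuity of $\ml{I}_\Lambda(a)$,
\[
\ml{I}_\Lambda(a)\left(x_0,2\pi\tfrac{\hbar^2}{\eps_\hbar}H_\Lambda(m)\right)=\ml{I}_\Lambda(a)(x_0,0)+o_{\hbar\to 0^+}(1).
\]
Combined with $\sum_{\|m-\xi_0/\hbar\|\leq R\hbar^{-1/2}}|\widehat{\psi_\hbar}(m)|^2\to 1$ (Plancherel plus Remark~\ref{r:decay-fourier-coeff-coherent-state}, which also yields $\|\psi_\hbar\|_{L^2}\to 1$), the leading term of~\eqref{asymp} contributes $\ml{I}_\Lambda(a)(x_0,0)$ in the limit.

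It remains to dispose of the error term in~\eqref{asymp}, which has size $\ml{O}_a(\hbar^{3/4})\sum_m|\widehat{\psi_\hbar}(m)|$. Here one uses the bound~\eqref{l1_bound}, which the author explicitly notes does \emph{not} rely on $\eps_{\hbar}\ll\hbar^{3/2}$ and is therefore still valid in the present regime, giving $\ml{O}(\hbar^{-1/2})$ for the sum, hence a total $\ml{O}(\hbar^{1/4})\to 0$. Assembling the three pieces,
\[
\lim_{\hbar\to 0^+}\langle\mathbf{F}_{\Lambda,\hbar},a\rangle=\ml{I}_\Lambda(a)(x_0,0),
\]
which is precisely the pairing of $a$ with the measure $\delta_{x_0}(x)\delta_0(\eta)$ under the convention used in Proposition~\ref{coherent3}. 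The main bookkeeping obstacle is the simultaneous control of the $\eta$-support restriction, the concentration scale $\hbar^{-1/2}$ of $\widehat{\psi_\hbar}$, and the ratio $\hbar^{3/2}/\eps_{\hbar}$; the proof essentially consists of verifying that in the window $\hbar^{3/2}\ll\eps_{\hbar}\ll\hbar$ these three scales interact so as to let the sum localize at a single point $\eta=0$ (rather than producing the nontrivial profile $\mu$ of Proposition~\ref{coherent3}).
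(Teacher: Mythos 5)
Your proposal is correct and follows essentially the same route as the paper: reuse the first part of Proposition~\ref{coherent1} for $\xi_0\notin\Lambda^{\perp}$, then for $\xi_0\in\Lambda^{\perp}$ localize the sum near $\xi_0/\hbar$ via Remark~\ref{r:decay-fourier-coeff-coherent-state}, observe that $2\pi\hbar^2\eps_{\hbar}^{-1}H_{\Lambda}(m)\to 0$ there because $\eps_{\hbar}\gg\hbar^{3/2}$, and kill the error term with~\eqref{l1_bound}. The only cosmetic difference is that you truncate at $R\hbar^{-1/2}$ and take a double limit, whereas the paper uses a single intermediate scale $r_{\hbar}$ with $\hbar^{-1/2}\ll r_{\hbar}\ll\eps_{\hbar}\hbar^{-2}$; both are valid.
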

In particular, as it was already the case in Proposition~\ref{coherent3}, if $\xi_0/\|\xi_0\|\notin\mathbb{S}^1_{\mathbb{Q}}$, then $\mathbf{F}_{\Lambda}^0=0$ for every $\Lambda\in\ml{L}_1$.

\begin{proof} 
Arguing as in the proof of Proposition~\ref{coherent1}, we can verify that $\mathbf{F}_\Lambda^0=0$ for every 
$\Lambda\in\ml{L}_1$ such that $\xi_0\notin\Lambda^{\perp}$. Hence, as above, we just need to 
discuss the case where $\xi_0\in\Lambda^{\perp}$. From~\eqref{asymp}, one knows that
$$\left\langle \mathbf{F}_{\Lambda,\hbar},a\right\rangle=\sum_{\substack{m\in\IZ^2 \\ 2\pi\hbar^2\epsilon_\hbar^{-1}H_\Lambda (m)\in\supp_\eta(a)}}
\ml{I}_{\Lambda}(a)\left(x_0,2\pi\frac{\hbar^2}{\epsilon_\hbar}H_\Lambda (m)\right)|\widehat{\psi_\hbar}(m)|^2+\ml{O}_{a}(\hbar^{\frac{3}{4}})|\widehat{\psi_\hbar}(m)|.$$
We fix a sequence of radii $r_{\hbar}$ such that $\epsilon_\hbar\hbar^{-2}\gg r_\hbar\gg\hbar^{-1/2}$. The argument of 
Remark~\ref{r:decay-fourier-coeff-coherent-state} 
allows to show that
\begin{eqnarray*}\left\langle \mathbf{F}_{\Lambda,\hbar},a\right\rangle & = &\sum_{m:\|m-\xi_0\hbar^{-1}\|\leq r_{\hbar}}
\ml{I}_{\Lambda}(a)\left(x_0,2\pi\frac{\hbar^2}{\epsilon_\hbar}H_\Lambda (m)\right)|\widehat{\psi_\hbar}(m)|^2\\ 
& +& \ml{O}_{a}(\hbar^{\frac{3}{4}})\sum_{\substack{m\in\IZ^2 \\ 2\pi\hbar^2\epsilon_\hbar^{-1}H_\Lambda (m)\in\supp_\eta(a)}}
|\widehat{\psi_\hbar}(m)|+o(1).\end{eqnarray*}
Hence, using Remark~\ref{r:decay-fourier-coeff-coherent-state} one more time, the fact that $\|\psi_{\hbar}\|\rightarrow 1$ and the fact that $\xi_0\in\Lambda^{\perp}$, one has
\begin{equation}
\begin{split}
\left\langle \mathbf{F}_{\Lambda,\hbar},a\right\rangle = &\ml{I}_{\Lambda}(a)\left(x_0,0\right)
+ \ml{O}_{a}(\hbar^{\frac{3}{4}})\sum_{\substack{m\in\IZ^2 \\ 2\pi\hbar^2\epsilon_\hbar^{-1}H_\Lambda (m)\in\supp_\eta(a)}}
|\widehat{\psi_\hbar}(m)|+o(1)\\
= &\ml{I}_{\Lambda}(a)\left(x_0,0\right)+o(1),
\end{split}
\end{equation}
where we used estimate \eqref{l1_bound} in the last line.
\end{proof}

\subsubsection{Superpositions of two coherent states}\label{two_coherent}
Let us now consider the initial data $$\psi_\hbar=\frac{1}{\sqrt{2}}(\psi_\hbar^{(x_0,\xi_0)}+\psi_\hbar^{(y_0,\eta_0)}),$$ where $\psi_\hbar^{(x_0,\xi_0)}$ (resp. $\psi_\hbar^{(y_0,\eta_0)}$) 
denotes a coherent state centered at $(x_0,\xi_0)$ (resp. $(y_0,\eta_0)$) with $\xi_0,\eta_0\neq 0$. We will also suppose for the sake of simplicity that 
$x_0\neq y_0$. The case $x_0=y_0$ could be treated in a similar manner but it would require slightly more work. As the case $x_0\neq y_0$ already displays interesting 
features regarding the question of the Quantum Loschmidt Echo, we limit ourselves to this case. The key observation is that we have
\begin{equation}\label{e:diag-approx}
\left\langle \mathbf{F}_{\Lambda,\hbar},a \right\rangle=
\frac{1}{2}\left\langle \mathbf{F}_{\Lambda,\hbar}^{(x_0,\xi_0)},a \right\rangle+
\frac{1}{2}\left\langle \mathbf{F}_{\Lambda,\hbar}^{(y_0,\eta_0)},a \right\rangle+o(1). 
\end{equation}
Hence, the calculation reduces to the analysis of a single coherent state as it was performed in the above Propositions. To see this, 
we simply have to show that the off-diagonal terms are small, i.e. as $\hbar\to 0^+$: 
$$r_\hbar^{\Lambda}:=\left\langle\psi_\hbar^{(x_0,\xi_0)}, 
\Oph(\ml{I}_{\Lambda}(a)(x,\hbar H_\Lambda(\xi)/\epsilon_\hbar))\psi_\hbar^{(y_0,\eta_0)}\right\rangle=o(1).$$
For that purpose, we write that
$$r_{\hbar}^{\Lambda}=\sum_{k\in\IZ^2}e^{2i\pi\frac{\xi_0.k}{\hbar}}
\left\la\varphi_{\hbar}^{x_0+k,\xi_0},\Oph\left(\ml{I}_{\Lambda}(a)(x,\hbar\eps_{\hbar}^{-1}H_{\Lambda}(\xi))\right)\varphi_{\hbar}^{y_0,\eta_0}\right\ra_{L^2(\IR^2)}.$$
We will now estimate each term in the sum by making use of the fact that $x_0\neq y_0$:
\begin{eqnarray*}
 r_{\hbar}^{\Lambda}(k)& := &\left\la\varphi_{\hbar}^{x_0+k,\xi_0},
 \Oph\left(\ml{I}_{\Lambda}(a)(x,\hbar\eps_{\hbar}^{-1}H_{\Lambda}(\xi))\right)\varphi_{\hbar}^{y_0,\eta_0}\right\ra_{L^2(\IR^2)}\\
  & = &\frac{1}{(2\pi\hbar)^2}\int_{\IR^6}e^{\frac{i}{\hbar}\la x-y,\xi\ra}\overline{\varphi_{\hbar}^{x_0+k,\xi_0}(x)}\ml{I}_{\Lambda}(a)(x,\hbar\eps_{\hbar}^{-1}H_{\Lambda}(\xi))\varphi_{\hbar}^{y_0,\eta_0}(y)dxdyd\xi\\
& = &\frac{e^{i\alpha(\hbar)}}{(2\pi)^2}\int_{\IR^6}e^{\frac{i}{\sqrt{\hbar}}\theta_{x_0,\xi_0,y_0,\eta_0}^{(k)}(x,y,\xi)} e^{\frac{i}{\hbar}\la x_0+k-y_0,\xi\ra}
\overline{\varphi(x)}\ml{I}_{\Lambda}(a)(x_0+\sqrt{\hbar}x,\hbar\eps_{\hbar}^{-1}H_{\Lambda}(\xi))\varphi(y)dxdyd\xi,
 \end{eqnarray*}
where $\alpha(\hbar)$ is some real number depending on $(x_0,\xi_0,y_0,\eta_0)$, and where
$$\theta_{x_0,\xi_0,y_0,\eta_0}^{(k)}(x,y,\xi):=-\la x,2\pi\xi_0\ra+\la y,2\pi\eta_0\ra+\la x-y,\xi\ra.$$
Note that we have identified $x_0$ and $y_0$ with elements in $[0,1)^2$. We can now use the fact that
$$\frac{\hbar(x_0+k-y_0).\partial_{\xi}}{i\|x_0+k-y_0\|^2}\left(e^{\frac{i}{\hbar}\la x_0+k-y_0,\xi\ra}\right)=e^{\frac{i}{\hbar}\la x_0+k-y_0,\xi\ra},$$
and integrate by parts. In that manner, we find that, for every $N\geq 1$ and for every $k$ in $\IZ^2$, 
$$r_{\hbar}^{\Lambda}(k)=\|x_0+k-y_0\|^{-N}\ml{O}_N\left(\hbar^{\frac{N}{2}}+(\hbar^2\eps_{\hbar}^{-1})^N\right),$$
which allows to conclude.

\subsection{Limit of the Quantum Loschmidt Echo for coherent states}
Combining Theorem \ref{t:evolution} with the above Propositions yields the following estimates for the evolution of the Quantum Loschmidt Echo:
\begin{prop} Suppose that~\eqref{e:perturbation-size} is satisfied and that the sequence of initial data is given by~\eqref{e:coherent-torus} with 
$\xi_0\neq 0$. Then, the following holds:
\begin{enumerate}
\item If $\eps_{\hbar}\gg\hbar^{\frac{3}{2}}$ or $\eps_{\hbar}\ll\hbar^{\frac{3}{2}}$, then
$$\lim_{\hbar\rightarrow 0^+}|\left\la u_{\hbar}^{\eps}(t\tau_{\hbar}^c), u_{\hbar}(t\tau_{\hbar}^c)\right\ra|^2=1.$$
\item If $\eps_{\hbar}=\hbar^{\frac{3}{2}}$ and $\xi_0/\|\xi_0\|\notin\IS^1_{\mathbb{Q}}$, then
$$
\lim_{\hbar\rightarrow 0^+}|\left\la u_{\hbar}^{\eps}(t\tau_{\hbar}^c), u_{\hbar}(t\tau_{\hbar}^c)\right\ra|^2
=1.$$
\item If $\eps_{\hbar}=\hbar^{\frac{3}{2}}$ and $\xi_0\in\Lambda^{\perp}$ for some $\Lambda\in\ml{L}_1$, then
$$
\lim_{\hbar\rightarrow 0^+}|\left\la u_{\hbar}^{\eps}(t\tau_{\hbar}^c), u_{\hbar}(t\tau_{\hbar}^c)\right\ra|^2
=\left|\int_{\IR^2}|\widehat{\varphi}(\xi)|^2 e^{i\int_0^t\ml{I}_{\Lambda}(V)\left(x_0+2\pi sH_{\Lambda}(\xi)\right)ds}d\xi\right|^2.$$
\end{enumerate}
In all the statements, we used the conventions of Theorem~\ref{t:evolution}.
\end{prop}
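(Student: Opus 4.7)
The plan is to combine Theorem~\ref{t:evolution} with the measure computations of Propositions~\ref{coherent1}, \ref{coherent3}, and~\ref{coherent2}, selecting the appropriate one according to the relative size of $\eps_\hbar$ and $\hbar^{3/2}$, and then evaluating the resulting integral in closed form. The hypothesis $\xi_0\neq 0$ guarantees that the coherent state family satisfies~\eqref{e:shosc} and~\eqref{e:hosc}, and each of the three propositions produces an explicit (hence unique) limit measure, so Theorem~\ref{t:evolution} applies without any further subsequence extraction.

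For part~(1) I would split according to whether $\eps_\hbar\ll\hbar^{3/2}$ or $\eps_\hbar\gg\hbar^{3/2}$. In the first subregime, Proposition~\ref{coherent1} gives $\mathbf{F}_\Lambda^0\equiv 0$ for every $\Lambda\in\ml{L}_1$, so Theorem~\ref{t:evolution} collapses to the factor $e^{it\int_{\IT^2}V}$ of modulus one. In the second subregime, Proposition~\ref{coherent2} applies: either $\xi_0/\|\xi_0\|$ is irrational and every $\mathbf{F}_\Lambda^0$ still vanishes, or $\xi_0\in\Lambda_0^\perp$ for the unique $\Lambda_0\in\ml{L}_1$ carrying that rational direction, in which case $\mathbf{F}_{\Lambda_0}^0=\delta_{x_0}(x)\delta_0(\eta)$ has total mass one, cancelling the first summand of Theorem~\ref{t:evolution}, and the remaining integral reduces to $e^{it\ml{I}_{\Lambda_0}(V)(x_0)}$ because $\eta=0$ freezes the trajectory at $x_0$. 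Either way the modulus squared equals one. Part~(2) is identical: when $\eps_\hbar=\hbar^{3/2}$ and $\xi_0/\|\xi_0\|\notin\IS^1_\mathbb{Q}$, Proposition~\ref{coherent3} forces every $\mathbf{F}_\Lambda^0$ to vanish since $\xi_0$ belongs to no $\Lambda^\perp$.

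The only part requiring substantive bookkeeping is~(3), where $\eps_\hbar=\hbar^{3/2}$ and $\xi_0\in\Lambda^\perp$ for some (necessarily unique) $\Lambda\in\ml{L}_1$. Proposition~\ref{coherent3} provides $\mathbf{F}_\Lambda^0=\delta_{x_0}(x)\mu(d\eta)$ with $\mu$ the pushforward of $|\widehat\varphi(\xi)|^2 d\xi$ under $\xi\mapsto 2\pi H_\Lambda(\xi)$; since $\|\widehat\varphi\|_{L^2}=1$, $\mu$ is a probability measure and $\la\mathbf{F}_\Lambda^0,1\ra=1$, so the first summand of Theorem~\ref{t:evolution} vanishes. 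I would then insert this product structure into the second summand and unwind the definition of $\mu$ via the change of variables $\eta=2\pi H_\Lambda(\xi)$, converting the integral over $\IT^2\times\IR$ into a single integral against $|\widehat\varphi(\xi)|^2 d\xi$ of the exponential of $i\int_0^t\ml{I}_\Lambda(V)\bigl(x_0+2\pi sH_\Lambda(\xi)\vec{v}_\Lambda/L_\Lambda\bigr)ds$. The compact shorthand $x_0+2\pi sH_\Lambda(\xi)$ in the statement absorbs the unit direction $\vec{v}_\Lambda/L_\Lambda$.

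No new analytic input is needed: all semiclassical estimates have already been carried out in Propositions~\ref{coherent1}--\ref{coherent2}. The main potential pitfall is purely notational, namely tracking the direction vector $\vec{v}_\Lambda/L_\Lambda$ implicit in the translation $s\eta\vec{v}_\Lambda/L_\Lambda$ of Theorem~\ref{t:evolution} under the change of variables defining $\mu$; once this is handled consistently, taking the modulus squared produces the announced formula.
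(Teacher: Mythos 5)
Your proposal is correct and follows exactly the route the paper takes: the paper's "proof" is the single sentence that the result follows by combining Theorem~\ref{t:evolution} with Propositions~\ref{coherent1}, \ref{coherent3} and~\ref{coherent2}, and your case analysis (including the observation that in part~(1) with $\eps_\hbar\gg\hbar^{3/2}$ and $\xi_0\in\Lambda_0^\perp$ the surviving term $e^{it\ml{I}_{\Lambda_0}(V)(x_0)}$ still has modulus one, and the change of variables $\eta=2\pi H_\Lambda(\xi)$ in part~(3)) is precisely the bookkeeping the authors leave implicit.
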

In particular, this Proposition shows that, for most of the cases, we do not have any macroscopic decay of the Quantum Loschmidt Echo when the initial data are given by a sequence of coherent states. In the case where 
$\eps_{\hbar}\gg\hbar^{\frac{3}{2}}$, it is interesting to mention the case of a superposition of two coherent states pointing along rational directions:
\begin{prop} Suppose that $\hbar^{\frac{3}{2}}\ll\eps_{\hbar}\ll\hbar$ and that we are given the 
sequence of initial data from paragraph~\ref{two_coherent} with $\xi_0\in\Lambda_1^{\perp}$ and $\eta_0\in\Lambda_2^{\perp}$. Then, the following holds:
$$
\lim_{\hbar\rightarrow 0^+}|\left\la u_{\hbar}^{\eps}(t\tau_{\hbar}^c), u_{\hbar}(t\tau_{\hbar}^c)\right\ra|^2
=\left|\cos\left(\frac{t(\ml{I}_{\Lambda_1}(V)(x_0)-\ml{I}_{\Lambda_2}(V)(y_0))}{2}\right)\right|^2,$$
where we used the conventions of Theorem~\ref{t:evolution}.
\end{prop}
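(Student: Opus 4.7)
The plan is to treat the statement as a direct specialization of Theorem \ref{t:evolution}, applied to the superposition $\psi_\hbar = \tfrac{1}{\sqrt{2}}(\psi_\hbar^{(x_0,\xi_0)}+\psi_\hbar^{(y_0,\eta_0)})$, once the two-microlocal data $(\mathbf{F}_\Lambda^0)_{\Lambda\in\ml{L}_1}$ have been identified. The decomposition is essentially already in hand: from \eqref{e:diag-approx} (whose justification in Section \ref{two_coherent} relies on the oscillatory-integral bound $r_\hbar^\Lambda(k)=\|x_0+k-y_0\|^{-N}\ml{O}_N(\hbar^{N/2}+(\hbar^2\eps_\hbar^{-1})^N)$, valid thanks to $x_0\neq y_0$), one obtains
$$\mathbf{F}_\Lambda^0 \;=\; \tfrac{1}{2}\mathbf{F}_\Lambda^{0,(x_0,\xi_0)} + \tfrac{1}{2}\mathbf{F}_\Lambda^{0,(y_0,\eta_0)}.$$
The same argument also gives that $\|\psi_\hbar\|_{L^2}\to 1$, and \eqref{e:shosc}--\eqref{e:hosc} follow from the analogous property for a single coherent state (since $\xi_0,\eta_0\neq 0$).

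Next, I would invoke Proposition \ref{coherent2} (which is precisely in our regime $\hbar^{3/2}\ll\eps_\hbar\ll\hbar$) to identify each of the two single-state contributions. Under the assumptions $\xi_0\in\Lambda_1^\perp$ and $\eta_0\in\Lambda_2^\perp$, this yields $\mathbf{F}_{\Lambda_1}^{0,(x_0,\xi_0)}=\delta_{x_0}(x)\otimes\delta_0(\eta)$, $\mathbf{F}_{\Lambda_2}^{0,(y_0,\eta_0)}=\delta_{y_0}(x)\otimes\delta_0(\eta)$, and all other distributions vanish. (The degenerate case $\Lambda_1=\Lambda_2$ would simply add the two Dirac masses; this does not affect the final formula.)

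Plugging these into Theorem \ref{t:evolution} is then bookkeeping. The total mass $\sum_\Lambda \la\mathbf{F}_\Lambda^0,1\ra$ equals $\tfrac12+\tfrac12=1$, so the ``escape term'' $e^{it\int V}(1-\sum_\Lambda\la\mathbf{F}_\Lambda^0,1\ra)$ cancels exactly. Moreover, since both measures are supported on $\{\eta=0\}$, the inner phase $\int_0^t \ml{I}_\Lambda(V)(x+s\eta\vec{v}_\Lambda/L_\Lambda)\,ds$ simply collapses to $t\,\ml{I}_\Lambda(V)(x)$, leaving
$$\lim_{\hbar\to 0^+}\la u_\hbar^\eps(t\tau_\hbar^c), u_\hbar(t\tau_\hbar^c)\ra \;=\; \tfrac12 e^{it\ml{I}_{\Lambda_1}(V)(x_0)} + \tfrac12 e^{it\ml{I}_{\Lambda_2}(V)(y_0)}.$$

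Taking the modulus squared and applying the elementary identity $|e^{i\alpha}+e^{i\beta}|^2 = 4\cos^2(\tfrac{\alpha-\beta}{2})$ gives the advertised $|\cos(\tfrac{t(\ml{I}_{\Lambda_1}(V)(x_0)-\ml{I}_{\Lambda_2}(V)(y_0))}{2})|^2$. There is no genuine obstacle remaining: the only nontrivial analytic input, namely that the cross term $r_\hbar^\Lambda$ in $\la\psi_\hbar^{(x_0,\xi_0)},\Oph(\cdots)\psi_\hbar^{(y_0,\eta_0)}\ra$ is asymptotically negligible, was already settled in Section \ref{two_coherent} via integration by parts in $\xi$ exploiting $x_0\neq y_0$. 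The physically interesting content of the statement lies in the resulting interference pattern: the Quantum Loschmidt Echo exhibits periodic revivals governed by the beating frequency $\ml{I}_{\Lambda_1}(V)(x_0)-\ml{I}_{\Lambda_2}(V)(y_0)$ of the averaged potential along the two rational orbits selected by the initial state.
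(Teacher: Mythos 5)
Your proposal is correct and follows exactly the paper's own route: the paper proves this proposition by combining the diagonal approximation \eqref{e:diag-approx} from paragraph~\ref{two_coherent}, Proposition~\ref{coherent2} for each individual coherent state, and Theorem~\ref{t:evolution}, and your bookkeeping (cancellation of the escape term, collapse of the phase at $\eta=0$, and the identity $|e^{i\alpha}+e^{i\beta}|^2=4\cos^2(\tfrac{\alpha-\beta}{2})$) is the intended computation.
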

This follows from Propostion~\ref{coherent2} combined with paragraph~\ref{two_coherent} and Theorem~\ref{t:evolution}. It shows that, in this particular case, the Quantum Loschmidt Echo is periodic in time for times scales of order $\tau_{\hbar}^c$.

\section{Semiclassical fidelity distributions}\label{s:fidelity-distrib}

Even if we are primarly interested in the study of the Quantum Loschmidt Echo, we will in fact study some slightly more general quantities which may 
be of independent interest and which already appeared in~\cite{MaRi16}. We shall call these intermediary objects \emph{semiclassical fidelity distributions}. Before defining them and 
relating them to the quantities of the introduction, let us first fix some conventions. Associated with the Schr\"odinger 
equations~\eqref{e:schrodinger-unperturbed} and~\eqref{e:schrodinger-perturbed} are two semiclassical operators acting on $L^2(\IT^2)$:
$$\widehat{P}_0(\hbar):=-\frac{\hbar^2\Delta}{2},\ \text{and}\ \widehat{P}_{\eps}(\hbar):=-\frac{\hbar^2\Delta}{2}+\eps_{\hbar}V.$$
We will always assume that assumption~\eqref{e:perturbation-size} on the size of the perturbation is satisfied. 
In order to define these semiclassical fidelity distributions, we fix two sequences of \emph{normalized} initial data $(\psi_{\hbar}^1)_{0<\hbar\leq 1}$ and 
$(\psi_{\hbar}^2)_{0<\hbar\leq 1}$ satisfying the frequency assumptions~\eqref{e:shosc} and~\eqref{e:hosc}. We then define the following 
\emph{semiclassical fidelity distribution} on $T^*\IT^2$:  
$$\forall t\in\IR,\ F_{\hbar}(t): a\in\ml{C}^{\infty}_c(T^*\IT^2)
\mapsto \left\la \psi_{\hbar}^1,e^{\frac{it\widehat{P}_{\eps}(\hbar)}{\hbar}}\Oph(a)e^{-\frac{it\widehat{P}_{0}(\hbar)}{\hbar}}\psi_{\hbar}^2\right\ra_{L^2(\IT^2)},$$
where $\Oph(a)$ is the standard quantization defined in appendix~\ref{a:sc-an}. The goal of this section is to describe the properties of their 
accumulation points. The main observation from this section is that the description of the 
Quantum Loschmidt Echo at the critical time scale $\tau_{\hbar}^c$ follows from the detailed analysis of these 
accumulation points -- see paragraph~\ref{ss:main-obs} and Proposition~\ref{p:decomp-fidelity}.

\subsection{Extracting subsequences}

Recall that we denote by $\tau_{\hbar}^c$ the critical time scale $\frac{\hbar}{\eps_{\hbar}}$. We first extract converging subsequences from these sequences of distributions. 
Let $a(t,x,\xi)$ be an element in $\ml{C}^{\infty}_c(\IR\times T^*\IT^2)$. 
 From the Calder\'on-Vaillancourt Theorem~\ref{t:cald-vail}, one knows that
\begin{equation}\label{e:cald-vail}\left|\int_{\IR} \la F_{\hbar}(t\tau_{\hbar}^c) , a(t)\ra dt \right|\leq C
\sum_{|\alpha|\leq D}\hbar^{\frac{|\alpha|}{2}}\int_{\IR}\|\partial_{x,\xi}^{\alpha}a(t)\|_{\infty}dt,\end{equation}
for some universal positive constants $C$ and $D$. In particular, this defines a sequence of bounded distributions on $\IR\times T^*\IT^2$. 
Thus, up to an extraction, there exists $F(t,x,\xi)$ in $\ml{D}'(\IR\times T^*\IT^2)$ such that, for every $a$ in $\ml{C}^{\infty}_c(\IR\times T^*\IT^2)$, 
one has
$$\lim_{\hbar_n\rightarrow 0^+}\int_{\IR} \la F_{\hbar_n}(t\tau_{\hbar_n}^c) , a(t)\ra dt  =\int_{T^*\IT^2\times \IR}a(x,\xi,t) F(dt,dx,d\xi).$$
\begin{rema}
 In order to alleviate the notations, we shall write $\hbar\rightarrow 0^+$ instead of $\hbar_n\rightarrow 0^+$ which is a 
 standard convention in semiclassical analysis.
\end{rema}
From~\eqref{e:cald-vail}, one knows that
$$\left|\int_{T^*\IT^2\times \IR}a(x,\xi,t) F(dt,dx,d\xi)\right|\leq C\int_{\IR}\|a(t)\|_{\ml{C}^0}dt.$$
Thus, for a.e. $t$ in $\IR$, $F(t)$ defines an element of the Banach space\footnote{More generally, 
for a locally compact metric space $X$, we denote by $\ml{M}(X)$ the set of finite complex Radon measure on $X$.} 
$\ml{M}(T^*\IT^2)$ of finite (complex) Radon measures on $T^*\IT^2$. By an approximation argument, 
one can also verify that, for every $\theta$ in $L^1(\IR)$ and for every $a$ in $\ml{C}^{\infty}_c(T^*\IT^2)$, 
one has
$$\lim_{\hbar\rightarrow 0^+}\int_{\IR} \theta(t)\la F_{\hbar}(t\tau_{\hbar}^c) , a\ra dt  =\int_{\IR}\theta(t)\left(\int_{T^*\IT^2}a(x,\xi) F(t,dx,d\xi)\right)dt.$$
Note that compared with the classical case of semiclassical measures~\cite{Ge91, Ma09, Zw12}, $F(t)$ is a priori only a \emph{complex measure}. We also remark 
that, thanks to the frequency assumption~\eqref{e:shosc}, one has, for a.e. $t$ in $\IR$,
\begin{equation}\label{e:no-mass-zero-section}
 |F(t)|(\IT^2\times\{0\})=0.
\end{equation}
Up to another extraction, we can also suppose that there exists some finite (complex) Radon measure $F_0$ on $T^*\IT^2$ such that, for every $a$ 
in $\ml{C}^{\infty}_c(T^*\IT^2)$,
$$\lim_{\hbar\rightarrow 0^+}\la F_{\hbar}(0) , a\ra  =\int_{T^*\IT^2}a(x,\xi) F_0(dx,d\xi).$$

\textbf{From this point on, we fix the accumulation point $F(t)$ and we want to describe it in terms of $t$}. Let us start with the following lemma:
\begin{lemm}[Invariance by the geodesic flow]\label{l:invariance-geodesic-flow} Denote by $\varphi^s$ the geodesic flow, i.e.
$$\forall (x,\xi)\in T^*\IT^2,\ \forall s\in\IR,\ \varphi^s(x,\xi):=(x+s\xi,\xi).$$
 Then, for every $a$ in $\ml{C}^{\infty}_c(T^*\IT^2)$ and for a.e. $t$ in $\IR$, one has
 $$\forall s\in\IR,\ \int_{T^*\IT^2}a\circ\varphi^s(x,\xi)F(t,dx,d\xi)=\int_{T^*\IT^2}a(x,\xi)F(t,dx,d\xi).$$
\end{lemm}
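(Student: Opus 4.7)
The strategy is to prove the infinitesimal statement $\la F(t),\{|\xi|^2/2,a\}\ra=0$ for a.e.\ $t$ and every $a\in\ml{C}^{\infty}_c(T^*\IT^2)$. Since $\{|\xi|^2/2,a\}(x,\xi)=\xi\cdot\partial_x a=\tfrac{d}{ds}(a\circ\varphi^s)|_{s=0}$, applying this identity to $a\circ\varphi^s$ in place of $a$ shows that the function $s\mapsto\la F(t),a\circ\varphi^s\ra$ has vanishing derivative for a.e.\ $t$, hence is constant in $s$, which yields the lemma. The reduction to the infinitesimal statement uses the smoothness and compact support of $a$ to justify differentiating under the distribution pairing.

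To establish the infinitesimal invariance, the standard semiclassical symbol calculus (cf.\ appendix~\ref{a:sc-an}) gives the Heisenberg identity
$$\Oph(\{|\xi|^2/2,a\})=\frac{i}{\hbar}[\widehat{P}_0(\hbar),\Oph(a)]+\ml{O}_{L^2\to L^2}(\hbar).$$
Setting $U_0(s)=e^{-is\widehat{P}_0(\hbar)/\hbar}$ and $U_{\eps}(s)=e^{-is\widehat{P}_{\eps}(\hbar)/\hbar}$, I split the commutator and use $\tfrac{i}{\hbar}\widehat{P}_0(\hbar)U_0(s)=-\partial_s U_0(s)$ on the right, and $\widehat{P}_0(\hbar)=\widehat{P}_{\eps}(\hbar)-\eps_{\hbar}V$ together with $\tfrac{i}{\hbar}U_{\eps}(s)^*\widehat{P}_{\eps}(\hbar)=\partial_s U_{\eps}(s)^*$ on the left. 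Recombining the two resulting time-derivative pieces yields
$$\la F_{\hbar}(s),\{|\xi|^2/2,a\}\ra=\partial_s \la F_{\hbar}(s),a\ra-\frac{i\eps_{\hbar}}{\hbar}\left\la\psi_{\hbar}^1,U_{\eps}(s)^*V\Oph(a)U_0(s)\psi_{\hbar}^2\right\ra+\ml{O}(\hbar),$$
where the error is uniform in $s$ thanks to Calder\'on--Vaillancourt and the unitarity of $U_0,U_{\eps}$.

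Now specialize to $s=t\tau_{\hbar}^c=t\hbar/\eps_{\hbar}$, pair against $\theta\in\ml{C}^{\infty}_c(\IR)$, and integrate by parts in $t$. The first term on the right becomes $-\tfrac{\eps_{\hbar}}{\hbar}\int\theta'(t)\la F_{\hbar}(t\tau_{\hbar}^c),a\ra dt$, while the middle term carries the explicit prefactor $\eps_{\hbar}/\hbar$. Both prefactors tend to $0$ by assumption~\eqref{e:perturbation-size}, and the integrands are uniformly bounded in $\hbar$, so passing to the limit gives $\int\theta(t)\la F(t),\{|\xi|^2/2,a\}\ra dt=0$, from which the lemma follows.

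The main point where care is needed is the manipulation of $\widehat{P}_0$ on the left: without the decomposition $\widehat{P}_0=\widehat{P}_{\eps}-\eps_{\hbar}V$, the group $U_{\eps}(s)^*$ would not naturally produce a time derivative, and the computation would not close. This decomposition is precisely what isolates the perturbation as a term of order $\eps_{\hbar}/\hbar$, explaining why invariance under $\varphi^s$ persists at the long time scale $\tau_{\hbar}^c$ only within the regime $\eps_{\hbar}\ll\hbar$.
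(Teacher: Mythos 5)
Your proof is correct and follows essentially the same route as the paper: differentiate $\la F_{\hbar}(t\tau_{\hbar}^c),a\ra$, identify the commutator $\frac{i}{\hbar}[\widehat{P}_0(\hbar),\Oph(a)]$ with $\Oph(\xi\cdot\partial_x a)$ up to $\ml{O}(\hbar)$, isolate the perturbation as a term of size $\eps_{\hbar}/\hbar$, and integrate against $\theta\in\ml{C}^{\infty}_c(\IR)$ to kill both the boundary term and the $V$-term in the limit $\hbar\to 0^+$ using $\eps_{\hbar}\ll\hbar$. The bookkeeping via $\partial_s U_0(s)$ and $\partial_s U_{\eps}(s)^*$ is only a cosmetic repackaging of the paper's direct differentiation of the matrix element.
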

This Lemma shows that, as for the case of semiclassical measures~\cite{Ma09}, the limit object we obtain is invariant by the geodesic flow. We 
emphasize that it is important here to have $\eps_{\hbar}\ll\hbar$.
\begin{proof}
 As for the extraction argument, the proof of this Lemma is the same as for semiclassical measure. Yet, let us recall the proof as it is instructive regarding the 
 upcoming proofs. We write
\begin{eqnarray*}\frac{d}{dt}\la F_{\hbar}(t\tau_{\hbar}^c) , a\ra & = & 
\frac{i\tau_{\hbar}^c}{\hbar}\left\la   \psi_{\hbar}^1,e^{\frac{it\tau_{\hbar}^c\widehat{P}_{\eps}(\hbar)}{\hbar}}\left[\widehat{P}_0(\hbar),\Oph(a)\right]e^{-\frac{it\tau_{\hbar}^c\widehat{P}_{0}(\hbar)}{\hbar}}\psi_{\hbar}^2\right\ra_{L^2(\IT^2)}\\
 & &+\frac{i\tau_{\hbar}^c\eps_{\hbar}}{\hbar}\left\la   \psi_{\hbar}^1,e^{\frac{it\tau_{\hbar}^c\widehat{P}_{\eps}(\hbar)}{\hbar}}V \Oph(a)e^{-\frac{it\tau_{\hbar}^c\widehat{P}_{0}(\hbar)}{\hbar}}\psi_{\hbar}^2\right\ra_{L^2(\IT^2)}
\end{eqnarray*}
As we use the standard quantization, we have that $V \Oph(a)=\Oph(Va)$ which is a bounded operator thanks to the Calder\'on-Vaillancourt 
Theorem~\ref{t:cald-vail}. Moreover, using the composition Theorem~\ref{t:composition} for pseudodifferential operators and the 
Calder\'on-Vaillancourt Theorem one more time, we have that $\left[\widehat{P}_0(\hbar),\Oph(a)\right]=\frac{\hbar}{i}\Oph(\xi.\partial_xa)
+\ml{O}_{L^2\rightarrow L^2}(\hbar^2).$ Using our assumption~\eqref{e:perturbation-size} on the size of $\eps_{\hbar}$, we find that
$$\frac{d}{dt}\la F_{\hbar}(t\tau_{\hbar}^c) , a\ra=i\tau_{\hbar}^c\la F_{\hbar}(t\tau_{\hbar}^c) , \xi.\partial_xa\ra+o(\tau_{\hbar}^c).$$
Integrating this relation against $\theta$ in $\ml{C}^{\infty}_c(\IR)$, we find that
$$\frac{i}{\tau_{\hbar}}\int_{\IR}\theta'(t)\la F_{\hbar}(t\tau_{\hbar}^c) , a\ra dt=\int_{\IR}\theta(t)\la F_{\hbar}(t\tau_{\hbar}^c) , \xi.\partial_xa\ra dt+o(1),$$
which concludes the proof by letting $\hbar\rightarrow 0^+$.
\end{proof}

Finally, we note that, up to another extraction, we can suppose that, for a.e. $t$ in $\IR$, there exists $\nu(t)\in \ml{M}(\IT^2)$ 
such that, for every $\theta$ in $L^1(\IR)$ and for every $a\in\ml{C}^{\infty}(\IT^2)$,
$$\lim_{\hbar\rightarrow 0^+}\int_{\IR} \theta(t)\la F_{\hbar}(t\tau_{\hbar}^c) , a\ra dt  =
\int_{\IR}\theta(t)\left(\int_{T^*\IT^2}a(x) \nu(t,dx)\right)dt.$$
Thanks to the frequency assumption~\eqref{e:hosc}, we know that there is no escape of mass at infinity. Therefore, one has
\begin{equation}\label{e:pushforward}
 \nu(t,x)=\int_{\IR^2}F(t,x,d\xi).
\end{equation}

\subsection{Time evolution}\label{ss:main-obs} Let us now discuss the relation of these fidelity distributions with the quantities appearing in the introduction. There, 
we were mostly interested in the case where $a=1$. In that particular case, we have that
$$\frac{d}{dt}\left(e^{-it\int_{\IT^2} V}\la F_{\hbar}(t\tau_{\hbar}^c),1\ra\right)=ie^{-it\int_{\IT^2} V}\left\la 
F_{\hbar}(t\tau_{\hbar}^c),\left(V-\int_{\IT^2}V\right)\right\ra,$$
or equivalently, for every $t\in\IR$ 
$$\la F_{\hbar}(t\tau_{\hbar}^c),1\ra=e^{it\int_{\IT^2} V}\la F_{\hbar}(0),1\ra+i\int_{0}^te^{i(t-t')\int_{\IT^2} V}\left\la 
F_{\hbar}(t'\tau_{\hbar}^c),\left(V-\int_{\IT^2}V\right)\right\ra dt'.$$
Letting $\hbar\rightarrow 0^+$, we find that, for every $t\in\IR$,
\begin{equation}\label{e:key-formula-loschmidt}
 \lim_{\hbar\rightarrow 0^+} \la F_{\hbar}(t\tau_{\hbar}^c),1\ra=e^{it\int_{\IT^2} V}\la F_0,1\ra
 +i\int_{0}^te^{i(t-t')\int_{\IT^2} V}\left(\int_{T^*\IT^2} \left(V(x)-\int_{\IT^2}V\right)F(t',dx,d\xi)\right) dt',
\end{equation}
or equivalently
$$\lim_{\hbar\rightarrow 0^+} \la F_{\hbar}(t\tau_{\hbar}^c),1\ra=e^{it\int_{\IT^2} V}\la F_0,1\ra
 +i\int_{0}^te^{i(t-t')\int_{\IT^2} V}\left(\int_{\IT^2} \left(V(x)-\int_{\IT^2}V\right)\nu(t',dx)\right) dt'.$$
Note that, in the particular case where $\psi_{\hbar}^1=\psi_{\hbar}^2=\psi_{\hbar}$, the first term of the right hand side is equal to 
$1$. Hence, \textbf{describing the Quantum Loschmidt echo at the critical time scale boils down to the description of the fidelity 
distribution} $F(t)$ (more precisely of its pushforward $\nu(t)$ on $\IT^2$) in terms of $t$ and of the initial data.

\begin{rema}
 Note that, up to this point, our analysis did not really used the fact that we are on the torus and it could be adapted to deal with 
 more general Riemannian manifolds.
\end{rema}

\subsection{Decomposition of phase space}

In order to describe $F(t)$ in terms of $t$, we will first exploit its invariance under the geodesic flow in order to decompose it into infinitely 
many pieces indexed by the family $\ml{L}$ of primitive sublattices of $\IZ^2$. We follow here the presentation of~\cite{AM10}. Recall that 
a sublattice $\Lambda$ is said to be primitive if $\la\Lambda\ra\cap\IZ^2=\Lambda$, where $\la\Lambda\ra$ is the subspace of $\IR^2$ 
spanned by $\Lambda$. For $\Lambda$ in $\ml{L}$, we introduce 
$$\Lambda^{\perp}:=\left\{\xi\in\IR^2:\xi.k=0,\ \forall k\in\Lambda\right\}.$$
To every fixed covector $\xi\in\IR^2$, we also associate the sublattice
$$\Lambda_{\xi}:=\left\{k\in\IZ^2:k.\xi=0\right\},$$
and, for every $0\leq j\leq 2$, we denote by $\Omega_j\subset\IR^2$ the following subsets of covectors
$$\Omega_{0}:=\{0\},\ \Omega_1:=\left\{\xi\in\IR^2:\text{rk}\Lambda_{\xi}=1\right\},\ \text{and}\ \Omega_2=\IR^2-(\Omega_0\cup\Omega_1).$$
The fact that $\xi\in\Omega_j$ is equivalent to say that the orbit $\{\varphi^s(x,\xi)\}$ fills a torus of dimension $j$. We also define
$$R_{\Lambda}:=\Lambda^{\perp}\cap\Omega_{2-\text{rk}(\Lambda)}.$$
Note that, for $\Lambda$ of rank $1$, we have $R_{\Lambda}=\Lambda^{\perp}-\{0\}$, and that we have the following partition of $\IR^2$ indexed by the primitive sublattices of 
$\IZ^2$:
$$\IR^2=\bigsqcup_{\Lambda\in\ml{L}}R_{\Lambda}.$$

Let us now decompose $F(t)$ according to this partition of $T^*\IT^2$. In fact, the above discussion allows to write 
two natural decompositions of $F(t)$:
\begin{equation}\label{e:decomp-mu}
F(t)=\sum_{\Lambda\in\ml{L}}F(t)\rceil_{\IT^2\times R_{\Lambda}},
\end{equation}
and its Fourier decomposition
$$F(t,x,\xi)=\sum_{k\in\IZ^2}\widehat{F}_k(t,\xi)e^{2i\pi k.x}.$$
For $\Lambda\in\ml{L}$, we denote by $\ml{I}_{\Lambda}(F(t))$ the distribution
$$\ml{I}_{\Lambda}(F(t)):=\sum_{k\in\Lambda}\widehat{F}_k(t,\xi)e^{2i\pi k.x}.$$
Note that this is consistent with the conventions we have introduced in section~\ref{s:main-result}. 
The following result holds (see section $2$ in~\cite{AM10}):
\begin{prop}\label{p:decomp-meas} For a.e. $t\in\IR$, one has
\begin{enumerate}
 \item for every $\Lambda\in\ml{L}$, the distribution $\ml{I}_{\Lambda}(F(t))$ is a finite complex Radon measure on $T^*\IT^2$; 
 \item every term in~\eqref{e:decomp-mu} is a finite complex Radon measure invariant by $\varphi^s$ and
\begin{equation}\label{e:inv-orth}F(t)\rceil_{\IT^2\times R_{\Lambda}}=\ml{I}_{\Lambda}(F(t))\rceil_{\IT^2\times R_{\Lambda}}.\end{equation}
\end{enumerate}
Finally, property~\eqref{e:inv-orth} is equivalent to the fact that $F(t)\rceil_{\IT^2\times R_{\Lambda}}$ is invariant by the translations:
$$\tau^{v}:(x,\xi)\mapsto (x+v,\xi),\ \text{for every}\ v\in\Lambda^{\perp}.$$
\end{prop}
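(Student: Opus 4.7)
The plan is to exploit the Fourier decomposition of $F(t)$ on $\IT^2$ and to translate the geodesic-flow invariance of Lemma~\ref{l:invariance-geodesic-flow} into a support condition on each Fourier coefficient. Writing $F(t,x,\xi) = \sum_{k\in\IZ^2} \widehat{F}_k(t,\xi) e^{2i\pi k\cdot x}$ and $a(x,\xi) = \sum_k \widehat{a}_k(\xi) e^{2i\pi k\cdot x}$, a direct computation gives $a\circ\varphi^s(x,\xi) = \sum_k \widehat{a}_k(\xi) e^{2i\pi s k\cdot\xi} e^{2i\pi k\cdot x}$. Expanding $\la F(t), a\circ\varphi^s\ra = \la F(t), a\ra$ in Fourier modes and letting $s$ vary, one finds that each distribution $\widehat{F}_{-k}(t,\cdot)$ is annihilated by multiplication by $(e^{2i\pi s k\cdot\xi} - 1)$ for every $s\in\IR$, and is therefore supported on the hyperplane $k^\perp := \{\xi\in\IR^2 : k\cdot\xi = 0\}$.

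Combining this support condition with the partition $\IR^2 = \bigsqcup_{\Lambda\in\ml{L}} R_\Lambda$ immediately yields (2). Indeed, by definition of $R_\Lambda$, every $\xi\in R_\Lambda$ satisfies $\Lambda_\xi = \Lambda$, so the only Fourier modes that can be nonzero on $R_\Lambda$ are those with $k\in\Lambda$, which gives~\eqref{e:inv-orth}:
$$F(t)\rceil_{\IT^2\times R_\Lambda} = \ml{I}_\Lambda(F(t))\rceil_{\IT^2\times R_\Lambda}.$$
Invariance of this restriction under $\varphi^s$ is inherited from the global invariance since $R_\Lambda$ is preserved by $\varphi^s$ (which leaves $\xi$ fixed). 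To prove (1), I would define $\ml{I}_\Lambda(F(t))$ by duality, setting $\la \ml{I}_\Lambda(F(t)), a\ra := \la F(t), \ml{I}_\Lambda(a)\ra$ for $a\in\ml{C}_c^\infty(T^*\IT^2)$, where the definition of $\ml{I}_\Lambda$ is extended to general $\Lambda\in\ml{L}$ in the obvious way (trivial for $\mathrm{rk}(\Lambda)=2$, full average in $x$ for $\Lambda=\{0\}$). Since $\|\ml{I}_\Lambda(a)\|_\infty \leq \|a\|_\infty$ and $F(t)\in\ml{M}(T^*\IT^2)$ for a.e.\ $t$, this extends continuously to $\ml{C}_c^0(T^*\IT^2)$ and the Riesz representation theorem identifies it with a finite complex Radon measure whose Fourier expansion is $\sum_{k\in\Lambda}\widehat{F}_k(t,\xi)e^{2i\pi k\cdot x}$, matching the formal definition.

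The last point is the equivalence between $\varphi^s$-invariance and invariance under all translations $\tau^v$, $v\in\Lambda^\perp$, on $\IT^2\times R_\Lambda$. The implication from $\tau^v$ to $\varphi^s$ is direct since $\xi\in R_\Lambda\subset\Lambda^\perp$ implies $s\xi\in\Lambda^\perp$. The converse requires a case analysis on the rank of $\Lambda$, and this is where I expect the only genuine obstacle. For $\mathrm{rk}(\Lambda)=1$, one has $R_\Lambda = \Lambda^\perp\setminus\{0\}$, so every $v\in\Lambda^\perp$ is proportional to any fixed nonzero $\xi\in R_\Lambda$ and the invariance transfers slice by slice; for $\mathrm{rk}(\Lambda)=2$ the claim is vacuous since $\Lambda^\perp = \{0\}$. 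The most delicate situation is $\Lambda = \{0\}$: here $R_\Lambda = \Omega_2$ consists of covectors with totally irrational direction, and I would invoke the Kronecker--Weyl equidistribution of $\{x + s\xi\}_{s\in\IR}$ in $\IT^2$ (equivalently, the unique ergodicity of the irrational linear flow on the slice $\{\xi=\xi_0\}$) to force the conditional measure of $F(t)$ on each such slice to be a constant multiple of Haar measure, which is trivially invariant under every translation $\tau^v$ with $v\in\IR^2 = \Lambda^\perp$.
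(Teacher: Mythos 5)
Your argument is correct and is essentially the one the paper relies on: the paper does not write out a proof of Proposition~\ref{p:decomp-meas} but refers to Section~2 of~\cite{AM10}, adding only the remark that the proof given there for finite positive measures adapts verbatim to finite complex Radon measures. Your reconstruction --- Fourier expansion in $x$, the observation that invariance under $\varphi^s$ forces $\supp\widehat{F}_k(t,\cdot)\subset k^{\perp}$, and the fact that $k^{\perp}\cap R_{\Lambda}=\emptyset$ for $k\notin\Lambda$ because $\Lambda_{\xi}=\Lambda$ on $R_{\Lambda}$ --- is exactly that argument, and your duality definition of $\ml{I}_{\Lambda}(F(t))$ together with $\|\ml{I}_{\Lambda}(a)\|_{\infty}\leq\|a\|_{\infty}$ and Riesz representation handles point (1) cleanly; it also has the virtue of working directly for complex measures, with no need to pass through a Jordan decomposition. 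The one place where you drift is the final claim: the proposition asserts that \eqref{e:inv-orth} is equivalent to invariance under $\tau^{v}$ for all $v\in\Lambda^{\perp}$, which is a pure Fourier statement --- $\tau^{v}$ multiplies the $k$-th mode by $e^{2i\pi k\cdot v}$, and requiring this factor to equal $1$ for all $v$ in the subspace $\Lambda^{\perp}$ forces every surviving mode to lie in $(\Lambda^{\perp})^{\perp}\cap\IZ^2=\Lambda$ --- rather than an equivalence with $\varphi^s$-invariance. In particular the Kronecker--Weyl/unique-ergodicity detour for $\Lambda=\{0\}$ is unnecessary (and would need some care for complex measures, e.g.\ an invariant Jordan decomposition before disintegrating): your own support condition already shows that all nonzero Fourier modes of $F(t)$ vanish on $\Omega_2$, so the restriction there is $x$-independent and trivially invariant under every translation. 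This is a matter of presentation, not a gap.
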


\begin{rema}
 The proof in~\cite{AM10} was given for finite positive measure but it can be adapted verbatim to fit our framework where we have to deal with 
 finite complex Radon measures.
\end{rema}

To summarize, we can decompose the distribution we are interested in as follows:
\begin{equation}\label{e:decompose-fidelity}
 F(t)=\ml{I}_{0}(F(t))\rceil_{\IT^2\times R_{\Lambda}}+\sum_{\Lambda:\text{rk}(\Lambda)=1}\ml{I}_{\Lambda}(F(t))\rceil_{\IT^2\times \Lambda^{\perp}-\{0\}}.
\end{equation}
Note that we do not have any term associated with $\Lambda=\IZ^2$ thanks to the frequency assumption~\eqref{e:shosc} -- see~\eqref{e:no-mass-zero-section}. Recall 
from~\eqref{e:key-formula-loschmidt} that we are interested in determining $\la F(t),V-\int_{\IT^2}V\ra$ or more precisely
$$\int_0^te^{-it'\int_{\IT^2 V}}\left\la F(t'), V-\int_{\IT^2} V\right\ra dt' = 
\sum_{\Lambda:\text{rk}(\Lambda)=1} \int_0^te^{-it'\int_{\IT^2 }V}\left\la
F(t'),\ml{I}_{\Lambda}(V)-\int_{\IT^2} V\right\ra dt'.$$
Therefore, applying~\eqref{e:decompose-fidelity}, we have shown the following which is \textbf{the main observation of this section}:
\begin{prop}\label{p:decomp-fidelity} Using the above conventions, we have
\begin{eqnarray*}\lim_{\hbar\rightarrow 0^+} \la F_{\hbar}(t\tau_{\hbar}^c),1\ra & = & e^{it\int_{\IT^2} V}\la F_0,1\ra\\
  & + &i\sum_{\Lambda:\operatorname{rk}(\Lambda)=1}
  \int_{0}^te^{i(t-t')\int_{\IT^2} V}\left\la
\ml{I}_{\Lambda}(F(t'))\rceil_{\IT^2\times \Lambda^{\perp}-\{0\}},\ml{I}_{\Lambda}(V)-\int_{\IT^2} V\right\ra dt'.
 \end{eqnarray*}
\end{prop}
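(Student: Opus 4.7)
The plan is to combine the time-evolution identity~\eqref{e:key-formula-loschmidt} with the structural decomposition of $F(t)$ provided by~\eqref{e:decompose-fidelity}. Since the first term $e^{it\int_{\IT^2}V}\la F_0,1\ra$ already appears in the target formula, the whole task reduces to rewriting the pairing $\la F(t'),V-\int_{\IT^2}V\ra$ in the form announced and then substituting it into~\eqref{e:key-formula-loschmidt}.

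From~\eqref{e:decompose-fidelity} one has
$$F(t')=\ml{I}_0(F(t'))\rceil_{\IT^2\times R_{\{0\}}}+\sum_{\Lambda:\operatorname{rk}\Lambda=1}\ml{I}_{\Lambda}(F(t'))\rceil_{\IT^2\times(\Lambda^{\perp}-\{0\})},$$
where the rank $2$ piece $R_{\IZ^2}=\{0\}$ has been dropped thanks to~\eqref{e:no-mass-zero-section}. The key ingredient is then the following elementary Fourier orthogonality principle: if $\mu$ is a finite complex Radon measure on $T^*\IT^2$ whose $x$-Fourier support lies in $\Lambda$, and if $f$ is a smooth function on $\IT^2$, then $\la\mu,f\ra=\la\mu,\ml{I}_{\Lambda}(f)\ra$. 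This follows by expanding $f=\sum_k\widehat{f}(k)e^{2i\pi k\cdot x}$ and observing that only the modes $k\in\Lambda$ survive the $x$-integration against $\mu$; the finiteness of $\mu$ guaranteed by Proposition~\ref{p:decomp-meas} is what justifies the exchange of sum and integral.

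Applied to $f=V-\int_{\IT^2}V$, this gives two clean facts. First, $\ml{I}_{\{0\}}(f)=\int_{\IT^2}V-\int_{\IT^2}V=0$, so the rank $0$ piece contributes nothing to the pairing. Second, since $\ml{I}_{\Lambda}$ always preserves constants (the zero Fourier mode belongs to every sublattice $\Lambda$), for $\Lambda$ of rank $1$ one has $\ml{I}_{\Lambda}(f)=\ml{I}_{\Lambda}(V)-\int_{\IT^2}V$. Combining both facts yields
$$\left\la F(t'),V-\int_{\IT^2}V\right\ra=\sum_{\Lambda:\operatorname{rk}\Lambda=1}\left\la\ml{I}_{\Lambda}(F(t'))\rceil_{\IT^2\times(\Lambda^{\perp}-\{0\})},\ml{I}_{\Lambda}(V)-\int_{\IT^2}V\right\ra,$$
and inserting this into~\eqref{e:key-formula-loschmidt} produces exactly the announced identity. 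There is no real obstacle here, since the dynamical and harmonic-analytic content has already been packaged into Proposition~\ref{p:decomp-meas} and equation~\eqref{e:key-formula-loschmidt}; the proof is essentially the algebraic bookkeeping above, the only point requiring a moment of care being the interchange of the summation over $\Lambda\in\ml{L}_1$ with the time integral, which is legitimate once one checks, via Calder\'on--Vaillancourt as in~\eqref{e:cald-vail}, that the tails are uniformly controlled.
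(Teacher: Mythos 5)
Your proposal is correct and follows essentially the same route as the paper: it substitutes the phase-space decomposition~\eqref{e:decompose-fidelity} into the Duhamel-type identity~\eqref{e:key-formula-loschmidt}, kills the rank-$0$ piece because $V-\int_{\IT^2}V$ has zero mean, and replaces $V$ by $\ml{I}_{\Lambda}(V)$ on each rank-$1$ piece via Fourier orthogonality. You in fact spell out the bookkeeping (and the justification for exchanging the sum over $\Lambda$ with the time integral, which rests on the uniform total-variation bound coming from Calder\'on--Vaillancourt) more explicitly than the paper does.
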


Hence, this formula combined with~\eqref{e:key-formula-loschmidt} allows to 
reduce the problem to analyzing the fidelity distribution along the submanifolds $\IT^2\times \Lambda^{\perp}$ (for every rank $1$ sublattice). 
In order to this, we will proceed to a \emph{second microlocalization along these submanifolds} following the strategies 
from~\cite{Ma10, AM10, AFM12}.

\section{Set-up of the two-microlocal tools}\label{s:twomicrolocal}

We will introduce two-microlocal objects in order to proceed to the analysis of 
the fidelity distribution near the submanifolds $\IT^2\times\Lambda^{\perp}\subset T^*\IT^2$. 
For that purpose, we will make use of the tools developped in~\cite{Ma10, AM10, AFM12} that we 
will briefly review in this section using the conventions from~\cite{MaRi16b}. We note that the main differences with these references are the choice of rescaling 
for the second microlocalization, and the nature of the propagation relation that comes out of the analysis. For instance, the 
potential appears in the Hamiltonian dynamics induced along $\Lambda$ in reference~\cite{MaRi16b} while here it will play a role as a phase factor -- see 
Proposition~\ref{p:propag-compact-part}.

To proceed with our analysis, we fix $\Lambda$ a primitive sublattice of rank $1$ and we denote by 
$\widehat{\IR}$ the compactified space $\IR\cup\{\pm\infty\}$. Then, we introduce an auxiliary distribution, for every 
$a\in\ml{C}^{\infty}_c(T^*\IT^2\times\widehat{\IR})$,
$$\la F_{\Lambda,\hbar}(t\tau_{\hbar}^c),a\ra:=\left\la \psi_{\hbar}^1,e^{\frac{it\tau_{\hbar}^c\widehat{P}_{\eps}(\hbar)}{\hbar}}
\Oph\left(a\left(x,\xi,\frac{\hbar H_{\Lambda}(\xi)}{\eps_{\hbar}}\right)\right)
e^{-\frac{it\tau_{\hbar}^c\widehat{P}_{0}(\hbar)}{\hbar}}\psi_{\hbar}^2\right\ra_{L^2(\IT^2)}.$$
Note that these quantities are slightly more general than the ones introduced in~\eqref{e:2micro-initialdata} as they depend on $(t,\xi)$ and 
as they extend to $\pm\infty$. We shall compare with definition~\eqref{e:2micro-initialdata} in paragraph~\ref{ss:pushforward}.

\begin{rema} Recall that semiclassical measures involving a second microlocalization primarly appeared in~\cite{Mi96, Ni96, Fe00, FeGe02} outside the context 
of integrable systems discussed in the above references. 
\end{rema}

The purpose of this section is to study the properties of these distributions and their relation to the 
fidelity distributions we have already defined. We proceed in several stages. First, we recall how one can extract 
converging subsequences and we explain how to decompose the limit distribution into 
two components (paragraph~\ref{ss:extract-2-microlocal}): the ``compact'' component and the one at infinity. After that, we show some invariance 
properties of the limiting distributions and relate these quantities to the ones we are primarly interested in (paragraph~\ref{ss:invariance}).

\begin{rema}\label{r:useful} We point out the following useful observations. First, if $a$ is an element in $\ml{C}^{\infty}_c(T^*\IT^2)$, we recover the fidelity distribution we have introduced 
before. Hence, this new distribution should be understood as a generalization of $F_{\hbar}(t\tau_{\hbar}^c)$ which captures 
some informations on the distribution near $\IT^2\times\Lambda^{\perp}$. We 
also note that this quantity is well defined as the standard quantization allows us to consider any observable which has bounded derivatives in the 
$x$ variables -- see appendix~\ref{a:sc-an}. Finally, we emphasize that
$$\Oph\left(a\left(x,\xi,\frac{\hbar H_{\Lambda}(\xi)}{\eps_{\hbar}}\right)\right)=\Op_{\hbar^2\eps_{\hbar}^{-1}}
\left(a\left(x,\frac{\eps_{\hbar}\xi}{\hbar}, H_{\Lambda}(\xi)\right)\right),$$
where we remind that $\hbar^2\ll\eps_{\hbar}\ll\hbar$.
\end{rema}

\subsection{Extracting converging subsequences}\label{ss:extract-2-microlocal}

As for the semiclassical fidelity distributions, we would like to extract subsequences $\hbar_{n}\rightarrow 0$ 
such that $F_{\Lambda,\hbar_n}(t\tau_{\hbar_n}^c)$ converges 
in a certain weak sense. For that purpose, we shall follow the more or 
less standard procedures of~\cite{Ge91, Ma09, Zw12} in the case of semiclassical measures. We denote by
$$\mathcal{B}:=\mathcal{C}^0_0(\IR^2\times\widehat{\IR},\mathcal{C}^3(\IT^2)),$$
the space of continuous function on $\IR^2\times\widehat{\IR}$ with values in $\mathcal{C}^3(\IT^2)$ and which tends to $0$ at 
infinity\footnote{Here, infinity means in the variables corresponding to $\IR^2$ as $\widehat{\IR}$ is compact.}. 
We endow this space with its natural topology of Banach space. According to the Calder\'on-Vaillancourt Theorem from the appendix, one knows that, 
for every $a\in L^1(\IR,\ml{B})$, one has
\begin{equation}\label{e:measure}\int_{\IR}|\la F_{\Lambda,\hbar}(t\tau_{\hbar}),a(t)\ra|dt\leq C\int_{\IR}\|a(t)\|_{\mathcal{B}}dt.\end{equation}
In other words, the map $t\mapsto F_{\Lambda,\hbar}(t\tau_{\hbar})$ defines a bounded sequence in $L^1(\IR,\ml{B})'$ endowed with its weak-$\star$ topology. 
Hence, after 
extracting a subsequence, one finds that there exists for a.e. $t$ in $\IR$ some element $F_{\Lambda}(t)$ in $\ml{B}'$ such that, for 
every $a$ in $\ml{C}^{\infty}_c(\IR\times T^*\IT^2\times\widehat{\IR})$, one has
$$\lim_{\hbar\rightarrow 0^+}\int_{\IR\times T^*\IT^2\times\widehat{\IR}}a(t,x,\xi,\eta)F_{\Lambda,\hbar}(t\tau_{\hbar},dx,d\xi,d\eta)dt
=\int_{\IR}\left(\int_{T^*\IT^2\times\widehat{\IR}}a(t,x,\xi,\eta) F_{\Lambda}(t,dx,d\xi,d\eta)\right)dt.$$
\begin{rema}
 Here, we used the first part of the Calder\'on-Vaillancourt. If we had used the second part, it would just have changed slightly the Banach space 
 involved in our argument.
\end{rema}

By a density argument, one can find that, for every $\theta$ in $L^1(\IR)$ and for every $a$ in $\ml{C}^{\infty}_c(T^*\IT^2\times\widehat{\IR})$,
$$\lim_{\hbar\rightarrow 0^+}\int_{\IR}\theta(t)\la F_{\Lambda,\hbar}(t\tau_{\hbar}),a\ra dt =\int_{\IR}\theta(t)\la F_{\Lambda}(t),a\ra dt.$$
These limiting functionals are related to $F(t)$ in the following manner:
\begin{equation}\label{e:pushforward-2microlocal}F(t)=\int_{\widehat{\IR}}F_{\Lambda}(t,d\eta).\end{equation}
The main objective of the rest of the article is to describe the properties of $F_{\Lambda}(t)$ in function of $t$ and of the initial data. 
Regarding this aim, it is convenient to split these two-microlocal distributions in two parts: the ``compact'' one and the one 
at ``infinity'' (in the $\widehat{\IR}$ variable). Before doing that, we observe that, up to another extraction, we can also suppose that 
there exists $F_{\Lambda}^0$ in $\ml{B}'$ such that, for every $a$ in $\ml{C}^{\infty}_c(T^*\IT^2\times\widehat{\IR})$,
$$\lim_{\hbar\rightarrow 0^+}\la F_{\Lambda,\hbar}(0),a\ra =\la F_{\Lambda}^0,a\ra.$$
Therefore, we would like to relate $F_{\Lambda}(t)$ to $F_{\Lambda}^0$. Finally, up to some diagonal 
extraction argument, we can suppose that these differerent linear functionals converge for any primitive sublattice 
$\Lambda$ of rank $1$ along the same subsequence.

At this point, $F_{\Lambda}(t)$ is \emph{only an element in the dual of $\ml{B}$ and 
not a priori a complex Radon measure}. Yet, this can be overcome via the application of the second part of the 
Calder\'on-Vaillancourt Theorem~\ref{t:cald-vail}. In fact, thanks to this Theorem, one knows that, for every $\theta$ in $L^1(\IR)$ and for every 
$a\in\ml{C}^{\infty}_c(T^*\IT^2\times\widehat{\IR})$, one has
$$\left|\int_{\IR}\theta(t)\la F_{\Lambda,\hbar}(t\tau_{\hbar}),a\ra dt\right|\leq C\int_{\IR}|\theta(t)|\|a\|_{\mathcal{C}^0}dt+\ml{O}(\hbar^2\eps_{\hbar}^{-1}).$$
Hence, passing to the limit $\hbar\rightarrow 0^+$, this gives
$$\left|\int_{\IR}\theta(t)\la F_{\Lambda}(t),a\ra dt\right|\leq C\int_{\IR}|\theta(t)|\|a\|_{\mathcal{C}^0}dt.$$
We emphasize that we crucially use here that $\hbar^2\ll\eps_{\hbar}$, which yields comfortable simplifications compared with~\cite{AM10, AFM12} where $\eps_{\hbar}=\hbar^2$ (and thus the limit objects are 
not a priori measures). In particular, we find that, 
for a.e. $t$ in $\IR$, $F_{\Lambda}(t)$ belongs to $\ml{M}(T^*\IT^2\times\widehat{\IR})$. Thus, we can split $F_{\Lambda}(t)$ as
\begin{equation}\label{e:split-distrib}
 F_{\Lambda}(t)=\tilde{F}_{\Lambda}(t)+\tilde{F}^{\Lambda}(t),
\end{equation}
where
$$\tilde{F}_{\Lambda}(t):=F_{\Lambda}(t)\rceil_{\IR^2\times\IR}
,\text{ and }\tilde{F}^{\Lambda}(t):=F_{\Lambda}(t)\rceil_{\IR^2\times\{\pm\infty\}}.$$

Finally, we can split similarly $F_{\Lambda}^0$ in two parts that we denote by $\tilde{F}_{\Lambda}^0$ and $\tilde{F}^{\Lambda,0}.$

\subsection{First invariance properties}\label{ss:invariance}

Recall from Proposition~\ref{p:decomp-fidelity} that we aim at describing $$\ml{I}_{\Lambda}(F(t))\rceil_{\IT^2\times\Lambda^{\perp}-\{0\}}.$$ 
Thanks to~\eqref{e:pushforward-2microlocal}, this is related to the 
two-microlocal quantities we have introduced as follows:
\begin{equation}\label{e:split-F}
 \ml{I}_{\Lambda}(F(t))\rceil_{\IT^2\times\Lambda^{\perp}-\{0\}}=\int_{\IR}\ml{I}_{\Lambda}(\tilde{F}_{\Lambda})(t,d\eta)\rceil_{\IT^2\times\Lambda^{\perp}-\{0\}}+
 \int_{\{\pm\infty\}}\ml{I}_{\Lambda}(\tilde{F}^{\Lambda})(t,d\eta)\rceil_{\IT^2\times\Lambda^{\perp}-\{0\}}.
\end{equation}
As a first step, we will show the following result concerning the part at infinity which is the analogue in our context of~\cite[Th.~13(ii)]{AM10}:
\begin{lemm} Let $\Lambda$ be rank $1$ primitive sublattice.
Then, for every $k$ in $\Lambda-\{0\}$, for every $a$ in $\ml{C}^{\infty}_c(\IR^2\times\widehat{\IR})$ and for a.e. $t$ in $\IR$,
$$\la \tilde{F}^{\Lambda}(t),a(\xi,\eta)e^{-2i\pi k.x}\ra=0.$$
\end{lemm}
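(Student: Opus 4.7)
The plan is to use the Heisenberg equation to trade the desired factor of $e^{-2i\pi k\cdot x}$ for a commutator of $\widehat{P}_0(\hbar)$ with a symbol containing an explicit $1/\eta$ factor. The algebraic input is that, for $k=n\vec{v}_{\Lambda}\in\Lambda\setminus\{0\}$ with $n\in\IZ\setminus\{0\}$, we have $k\cdot\xi = nL_{\Lambda}H_{\Lambda}(\xi)$, hence
$$\frac{\hbar\,k\cdot\xi}{\eps_{\hbar}} = nL_{\Lambda}\,\eta,$$
so that the rescaled commutator $\eps_{\hbar}^{-1}[\widehat{P}_0(\hbar),\Oph(e^{-2i\pi k\cdot x})]$ generates precisely a factor $\eta$ which will cancel the $1/\eta$ built into the test symbol.

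Choose $\chi\in\ml{C}^{\infty}(\widehat{\IR})$ vanishing near $0$ with $\chi(\pm\infty)=1$, set $\chi_R(\eta):=\chi(\eta/R)$, and introduce the test symbol
$$\tilde{b}_R(x,\xi,\eta) := \frac{a(\xi,\eta)\,\chi_R(\eta)}{\eta}\,e^{-2i\pi k\cdot x}\in\ml{C}^{\infty}_c(T^*\IT^2\times\widehat{\IR}),$$
whose $\ml{C}^0$-norm is $\ml{O}(1/R)$. Since $|\xi|^2/2$ is quadratic, the Moyal expansion for $[\widehat{P}_0(\hbar),\Oph(\tilde{b}_R(x,\xi,\hbar H_{\Lambda}(\xi)/\eps_{\hbar}))]$ terminates at second order, and a direct computation using the identity above yields
$$\frac{1}{\eps_{\hbar}}\bigl[\widehat{P}_0(\hbar),\Oph\bigl(\tilde{b}_R(x,\xi,\hbar H_{\Lambda}(\xi)/\eps_{\hbar})\bigr)\bigr] = -2i\pi nL_{\Lambda}\Oph\bigl(a(\xi,\eta)\chi_R(\eta)e^{-2i\pi k\cdot x}\bigr) + \ml{R}_{R,\hbar},$$
where $\ml{R}_{R,\hbar}$ has operator norm $\ml{O}(\hbar^2/(\eps_{\hbar}R))$ (via Theorem~\ref{t:cald-vail} in the rescaled quantization of Remark~\ref{r:useful}; here the lower bound $\hbar^2\ll\eps_{\hbar}$ of~\eqref{e:perturbation-size} is essential).

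Reproducing the Heisenberg calculation of Lemma~\ref{l:invariance-geodesic-flow} with $\tilde{b}_R$ in place of $a$, integrating against an arbitrary $\theta\in\ml{C}^{\infty}_c(\IR)$, and letting $\hbar\to 0^+$ along the extracted subsequence yields
\begin{multline*}
-\int_{\IR}\theta'(t)\,\la F_{\Lambda}(t),\tilde{b}_R\ra\, dt\\
= -2i\pi nL_{\Lambda}\int_{\IR}\theta(t)\,\la F_{\Lambda}(t),\,a(\xi,\eta)\chi_R(\eta)e^{-2i\pi k\cdot x}\ra\, dt + i\int_{\IR}\theta(t)\,\la F_{\Lambda}(t),\,V\tilde{b}_R\ra\, dt.
\end{multline*}
Both $\|\tilde{b}_R\|_{\infty}$ and $\|V\tilde{b}_R\|_{\infty}$ are $\ml{O}(1/R)$ and the total variation of $F_{\Lambda}(t)$ is uniformly bounded in $t$, so the left-hand side and the potential term on the right are both $\ml{O}(1/R)$. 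As $R\to\infty$, dominated convergence (with $\chi_R\to\mathbf{1}_{\{\pm\infty\}}$ pointwise on $\widehat{\IR}$) identifies the remaining integral as $-2i\pi nL_{\Lambda}\int_{\IR}\theta(t)\,\la\tilde{F}^{\Lambda}(t),a(\xi,\eta)e^{-2i\pi k\cdot x}\ra\, dt$; since $n\neq 0$ and $L_{\Lambda}>0$, this vanishes for every test function $\theta$, whence the claim follows for a.e. $t$. The main delicate point is the uniform control of $\ml{R}_{R,\hbar}$, which forces the use of $\hbar^2\ll\eps_{\hbar}$.
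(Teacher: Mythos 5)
Your argument is correct and is essentially the paper's own proof: both test the Heisenberg identity against the symbol $\eta^{-1}a(\xi,\eta)(1-\chi(\eta/R))e^{-2i\pi k\cdot x}$, use that the commutator with $-\hbar^2\Delta/2$ produces exactly $-2i\pi H_{\Lambda}(k)$ times the quantity of interest (with the second-order term controlled by $\hbar^2\ll\eps_{\hbar}$ and the cutoff at scale $R$), and observe that the time-derivative and potential terms are $\ml{O}(R^{-1})$ before letting $R\to+\infty$. The only cosmetic difference is that you pass to the limit $\hbar\to 0^+$ before sending $R\to+\infty$ and invoke the total-variation bound on $F_{\Lambda}(t)$, whereas the paper keeps the estimates at the level of $F_{\Lambda,\hbar}$; this changes nothing of substance.
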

Before proving this Lemma, observe that it allows us to rewrite the quantities appearing in Proposition~\ref{p:decomp-fidelity} as
\begin{equation}\label{e:decompose-key-formula-2}
\left\la\ml{I}_{\Lambda}(F(t'))\rceil_{\IT^2\times \Lambda^{\perp}-\{0\}},\ml{I}_{\Lambda}(V)-\int_{\IT^2} V\right\ra 
=\left\la\int_{\IR}\ml{I}_{\Lambda}(\tilde{F}_{\Lambda}(t',d\eta))\rceil_{\IT^2\times \Lambda^{\perp}-\{0\}},\ml{I}_{\Lambda}(V)-\int_{\IT^2} V\right\ra,
\end{equation}
 for a.e. $t$ in $\IR$. Recall from Proposition~\ref{p:decomp-fidelity} that this is exactly what remains to be computed if we want 
to find an expression for the Quantum Loschmidt Echo at the critical time scale $\tau_{\hbar}^c$. Hence, our analysis boils down to the description of the 
``compact'' part $\tilde{F}_{\Lambda}(t)$ of $F_{\Lambda}(t)$.

\begin{proof} Let $\chi$ be a smooth cutoff function on $\IR$ which is equal to $1$ near $0$ and to $0$ outside a small neighborhood of $0$. We fix 
$a$ in $\ml{C}^{\infty}_c(\IR^2\times\widehat{\IR})$ and $R>0$. Then, we define
$$a^R(\xi,\eta)=\left(1-\chi\left(\frac{\eta}{R}\right)\right)a(\xi,\eta).$$
Let $\Lambda$ be a primitive rank one sublattice and $k\in\Lambda-\{0\}.$ Recall that we use the standard 
quantization (see appendix~\ref{a:sc-an} for a brief reminder). Hence, we have the identity
\begin{equation}\label{deriv}
\begin{split}
&\frac{d}{dt}\left\la F_{\Lambda,\hbar}(t\tau_{\hbar}^c),\frac{a^R(\xi,\eta) e^{-2i\pi k.x}}{\eta}\right\ra\\
=&\frac{i\tau_\hbar^c}{\hbar}\left\la \psi_\hbar^1,e^{\frac{it\tau_{\hbar}^c\widehat{P}_{\eps}(\hbar)}{\hbar}}\left[-\frac{\hbar^2\Delta}{2},
\Op_\hbar\left(\frac{\eps_{\hbar}}{\hbar H_{\Lambda}(\xi)}
 a^R\left(\xi,\frac{\hbar H_{\Lambda}(\xi)}{\eps_{\hbar}}\right)e^{-2i\pi k.x}\right)\right]e^{-\frac{it\tau_{\hbar}^c\widehat{P}_{0}(\hbar)}{\hbar}}\psi_\hbar^2\right\ra\\
&+i\left\la \psi_\hbar^1,
e^{\frac{it\tau_{\hbar}^c\widehat{P}_{\eps}(\hbar)}{\hbar}}\Op_\hbar\left(\frac{\eps_{\hbar}}{\hbar H_{\Lambda}(\xi)}
 a^R\left(\xi,\frac{\hbar H_{\Lambda}(\xi)}{\eps_{\hbar}}\right)e^{-2i\pi k.x}V(x)\right)e^{-\frac{it\tau_{\hbar}^c\widehat{P}_{0}(\hbar)}{\hbar}}\psi_\hbar^2\right\ra.
\end{split}
\end{equation}
For the first term on the right hand side, we have, using the composition 
formula for the standard quantization (see appendix~\ref{a:sc-an}), that 
 $$\left[-\frac{\hbar^2\Delta}{2},\Op_\hbar\left(\frac{\eps_{\hbar}}{\hbar H_{\Lambda}(\xi)}
 a^R\left(\xi,\frac{\hbar H_{\Lambda}(\xi)}{\eps_{\hbar}}\right)e^{-2i\pi k.x}\right)\right]\hspace{4cm}$$
$$ =\Op_\hbar\left(a^R\left(\xi,\frac{\hbar H_{\Lambda}(\xi)}{\eps_{\hbar}}\right)e^{-2i\pi k.x}
\left(-2\eps_{\hbar}\pi H_{\Lambda}(k)-\frac{2\pi^2\hbar^2\eps_{\hbar}H_{\Lambda}(k)^2}{\hbar H_{\Lambda}(\xi)} \right)\right),$$
where we used the fact that $k$ belongs to $\Lambda$. Thanks to the Calder\'on-Vaillancourt Theorem~\ref{t:cald-vail}, we can deduce that
$$\left[-\frac{\hbar^2\Delta}{2},\Op_\hbar\left(\frac{\eps_{\hbar}}{\hbar H_{\Lambda}(\xi)}
 a^R\left(\xi,\frac{\hbar H_{\Lambda}(\xi)}{\eps_{\hbar}}\right)e^{-2i\pi k.x}\right)\right]\hspace{4cm}$$
 $$=-2\eps_{\hbar}\pi H_{\Lambda}(k)\Op_\hbar\left(a^R\left(\xi,\frac{\hbar H_{\Lambda}(\xi)}{\eps_{\hbar}}\right)e^{-2i\pi k.x}
\right)+\ml{O}_{L^2\rightarrow L^2}(\hbar^2 R^{-1}).$$
Similarly, we find using the Calder\'on-Vaillancourt Theorem that
$$\Op_\hbar\left(\frac{\eps_{\hbar}}{\hbar H_{\Lambda}(\xi)}
 a^R\left(\xi,\frac{\hbar H_{\Lambda}(\xi)}{\eps_{\hbar}}\right)e^{-2i\pi k.x}V(x)\right)=\ml{O}_{L^2\rightarrow L^2}(R^{-1}).$$
Implementing these two equalities in~\eqref{deriv}, we find that
\begin{equation}\label{deriv2}
\begin{split}
&\frac{d}{dt}\left\la F_{\hbar}(t\tau_{\hbar}^c),\frac{a^R(\xi,\eta)e^{-2i\pi k.x}}{\eta}\right\ra\\
=&-2i\pi H_{\Lambda}(k)\left\la F_{\Lambda,\hbar}(t\tau_{\hbar}),a^R(\xi,\eta) e^{-2i\pi k.x}\right\ra
+\ml{O}\left(R^{-1}\left(\tau_{\hbar}^c\hbar+1\right)\right).
\end{split}
\end{equation}
Recall that $\tau_{\hbar}^c=\frac{\hbar}{\eps_{\hbar}}\leq\frac{1}{\hbar}$. Hence, the remainder in this equality is of order $\ml{O}(R^{-1})$.
Another application of the Calder\'on-Vaillancourt Theorem yields
$$\left\la F_{\Lambda,\hbar}(t\tau_{\hbar}^c),\frac{a^R(\xi,\eta)e^{-2i\pi k.x}}{\eta}\right\ra=\ml{O}(R^{-1}).$$
Thus, if we fix $\theta$ in $\ml{C}^{\infty}_c(\IR)$, we find after integrating by parts in~\eqref{deriv2} that
$$\int_{\IR}\theta(t)\left\la F_{\Lambda,\hbar}(t\tau_{\hbar}^c),a^R(\xi,\eta)e^{-2i\pi k.x}\right\ra dt=\ml{O}(R^{-1}).$$
Finally, if we let $\hbar$ go to $0$ and $R$ to $+\infty$ (in this order), 
we find
$$\int_{\IR}\theta(t)\left\la F_{\Lambda}(t)\rceil_{T^*\IT^2\times\{\pm\infty\}},a(\xi,\eta)e^{-2i\pi k.x}\right\ra dt=0.$$
This is valid for any $\theta$ in $\ml{C}^{\infty}_c(\IR)$ and thus concludes the proof of the Lemma.
\end{proof}

We conclude this section by showing that $F_{\Lambda}(t)$ is also invariant under the geodesic flow
\begin{lemm}\label{l:invariance-geodesic-flow-two-microlocal} Let $\Lambda$ be a rank $1$ primitive sublattice. Then, for every $a$ in 
$\ml{C}^{\infty}_c(T^*\IT^2\times\widehat{\IR})$ and for a.e. $t$ in $\IR$, one has
$$\la F_{\Lambda}(t), \xi.\partial_x a\ra=0.$$
Equivalently, $F_{\Lambda}(t)$ is invariant by the geodesic flow on $T^*\IT^2$.
\end{lemm}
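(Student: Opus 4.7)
The strategy exactly parallels the proof of Lemma~\ref{l:invariance-geodesic-flow} for the single semiclassical fidelity distribution, with the key additional observation that the rescaling variable $\eta = \hbar H_\Lambda(\xi)/\eps_\hbar$ does not interfere with the commutator computation. Set
$$a_\hbar(x,\xi):=a\!\left(x,\xi,\frac{\hbar H_\Lambda(\xi)}{\eps_\hbar}\right).$$
The plan is to differentiate $\la F_{\Lambda,\hbar}(t\tau_\hbar^c),a\ra$ in $t$, expand the resulting commutator via the semiclassical calculus, and show that the principal term delivers $\xi.\partial_x a$ while all remainders are negligible after integration against a test function and division by $\tau_\hbar^c$.

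Using the two Schr\"odinger equations satisfied by $u_\hbar(t)$ and $u_\hbar^\eps(t)$, I would first write
\begin{equation*}
\frac{d}{dt}\la F_{\Lambda,\hbar}(t\tau_\hbar^c),a\ra=\frac{i\tau_\hbar^c}{\hbar}\la\psi_\hbar^1,e^{\frac{it\tau_\hbar^c\widehat{P}_\eps}{\hbar}}\bigl[\widehat{P}_0(\hbar),\Oph(a_\hbar)\bigr]e^{-\frac{it\tau_\hbar^c\widehat{P}_0}{\hbar}}\psi_\hbar^2\ra+\frac{i\tau_\hbar^c\eps_\hbar}{\hbar}\la\psi_\hbar^1,e^{\cdots}V\,\Oph(a_\hbar)e^{\cdots}\psi_\hbar^2\ra.
\end{equation*}
The crucial point is that the symbol of $\widehat{P}_0(\hbar)$ is $\xi^2/2$, which depends only on $\xi$. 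Hence, in the composition formula of Theorem~\ref{t:composition} applied to $[\widehat{P}_0(\hbar),\Oph(a_\hbar)]$, only $x$-derivatives of $a_\hbar$ appear in the expansion, and these coincide with $(\partial_x^\alpha a)(x,\xi,\hbar H_\Lambda(\xi)/\eps_\hbar)$, hence remain uniformly bounded together with all their derivatives. The potentially dangerous $\xi$-derivatives of $a_\hbar$, which grow like $(\hbar/\eps_\hbar)^{|\alpha|}$, do not enter the expansion. Consequently,
\begin{equation*}
\bigl[\widehat{P}_0(\hbar),\Oph(a_\hbar)\bigr]=\frac{\hbar}{i}\Oph\!\left((\xi.\partial_x a)\!\left(x,\xi,\tfrac{\hbar H_\Lambda(\xi)}{\eps_\hbar}\right)\right)+\ml{O}_{L^2\to L^2}(\hbar^2),
\end{equation*}
by the Calder\'on--Vaillancourt Theorem~\ref{t:cald-vail}. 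The potential term is $\ml{O}_{L^2\to L^2}(1)$ uniformly because $\tau_\hbar^c\eps_\hbar/\hbar=1$ and $V\,\Oph(a_\hbar)$ is uniformly bounded.

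Combining these estimates yields
\begin{equation*}
\frac{d}{dt}\la F_{\Lambda,\hbar}(t\tau_\hbar^c),a\ra=i\tau_\hbar^c\la F_{\Lambda,\hbar}(t\tau_\hbar^c),\xi.\partial_x a\ra+\ml{O}\!\left(\tau_\hbar^c\hbar+1\right)=i\tau_\hbar^c\la F_{\Lambda,\hbar}(t\tau_\hbar^c),\xi.\partial_x a\ra+o(\tau_\hbar^c),
\end{equation*}
since $\tau_\hbar^c\hbar=\hbar^2/\eps_\hbar\to 0$ by~\eqref{e:perturbation-size}. Fixing $\theta\in\ml{C}^\infty_c(\IR)$, integrating by parts in $t$, and dividing by $\tau_\hbar^c$ give
\begin{equation*}
\frac{1}{i\tau_\hbar^c}\int_\IR\theta'(t)\la F_{\Lambda,\hbar}(t\tau_\hbar^c),a\ra\,dt=-\int_\IR\theta(t)\la F_{\Lambda,\hbar}(t\tau_\hbar^c),\xi.\partial_x a\ra\,dt+o(1).
\end{equation*}
The left-hand side tends to $0$ since $\tau_\hbar^c\to+\infty$ and the bracket is uniformly bounded by Calder\'on--Vaillancourt. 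Passing to the limit along the subsequence extracted in paragraph~\ref{ss:extract-2-microlocal} yields $\int_\IR\theta(t)\la F_\Lambda(t),\xi.\partial_x a\ra\,dt=0$ for every such $\theta$, which gives the claim for a.e. $t\in\IR$.

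The only subtle point is making sure that the expansion of the commutator is truly of order $\hbar$, despite the fact that $a_\hbar$ has $\xi$-derivatives of size $\hbar/\eps_\hbar\gg 1$; this is resolved by the structural observation that $\widehat{P}_0(\hbar)$ has an $x$-independent symbol, so no $\xi$-derivative of $a_\hbar$ appears in the composition formula. The rest is a routine adaptation of Lemma~\ref{l:invariance-geodesic-flow}.
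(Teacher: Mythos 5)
Your proof is correct and follows essentially the same route as the paper: differentiate in $t$, use that the symbol of $\widehat{P}_0(\hbar)$ is $x$-independent so the exact composition formula produces only $x$-derivatives of the rescaled symbol (hence the first part of Calder\'on--Vaillancourt applies despite the large $\xi$-derivatives), absorb the potential term as $\ml{O}(1)$, integrate against $\theta$ and divide by $\tau_\hbar^c$. The only cosmetic difference is that the paper first splits $a=\chi_R a+(1-\chi_R)a$ in the $\eta$ variable before running the identical argument on each piece, a step you omit without harm.
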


\begin{proof} We fix $\chi$ to be a smooth cutoff function on $\IR$ which is compactly supported and equal to $1$ near $0$. 
For every $R>0$, we define $\chi_R(\eta)=\chi(\eta/R)$. Let $a$ be an element in $\ml{C}^{\infty}_c(T^*\IT^2\times\widehat{\IR})$. 
We define $a_R=\chi_R a$ and $a^R=(1-\chi_R)a$. Recall that we use the standard quantization and 
we have the identity
\begin{equation}\label{deriv0}
\begin{split}
&\frac{d}{dt}\la F_{\Lambda,\hbar}(t\tau_{\hbar}^c),a_R\ra\\
=&\frac{i\tau_\hbar^c}{\hbar}\left\la \psi_\hbar^1,e^{\frac{it\tau_{\hbar}^c\widehat{P}_{\eps}(\hbar)}{\hbar}}\left[-\frac{\hbar^2\Delta}{2},
\Op_\hbar\left(a_R\left(x,\xi,\frac{\hbar H_{\Lambda}(\xi)}{\eps_{\hbar}}\right)\right)\right]e^{\frac{it\tau_{\hbar}^c\widehat{P}_{0}(\hbar)}{\hbar}}\psi_\hbar^2\right\ra\\
&+i\left\la \psi_\hbar^1,e^{\frac{it\tau_{\hbar}^c\widehat{P}_{\eps}(\hbar)}{\hbar}}
\Op_\hbar\left(a_R\left(x,\xi,\frac{\hbar H_{\Lambda}(\xi)}{\eps_{\hbar}}\right) V(x)
\right)e^{\frac{it\tau_{\hbar}^c\widehat{P}_{0}(\hbar)}{\hbar}}\psi_\hbar^2\right\ra.
\end{split}
\end{equation}

Also, observe from the composition formula that 
$$\left[-\frac{\hbar^2\Delta}{2},\Op_\hbar\left(a_R\left(x,\xi,\frac{\hbar H_{\Lambda}(\xi)}{\eps_{\hbar}}\right)\right)\right]
=\Op_\hbar\left(\left(\frac{\hbar}{i}\xi\cdot\partial_x -\frac{\hbar^2}{2}\Delta_x \right)a_R\left(x,\xi,\frac{\hbar H_{\Lambda}(\xi)}{\eps_{\hbar}}\right)\right).$$ 
Hence, applying the Calder\'on-Vaillancourt Theorem implies 
that the first term on the RHS dominates, as $\hbar\to0^+$. We find that
$$\frac{d}{dt}\la F_{\hbar}(t\tau_{\hbar}^c),a_R\ra=\tau_{\hbar}^c\la F_{\hbar}(t\tau_{\hbar}^c),\xi.\partial_xa_R\ra+\ml{O}(1)
=\tau_{\hbar}^c\la F_{\hbar}(t\tau_{\hbar}^c),(\xi.\partial_xa)\chi_R\ra+\ml{O}(1).$$
As the Calder\'on-Vaillancourt Theorem only involves the derivatives in $x$, we can apply the same argument with $a^R$ instead of $a_R$ and we find
$$\frac{d}{dt}\la F_{\hbar}(t\tau_{\hbar}^c),a^R\ra=\tau_{\hbar}^c\la F_{\hbar}(t\tau_{\hbar}^c),(\xi.\partial_xa)(1-\chi_R)\ra+\ml{O}(1).$$
Summing these two relations and integrating them against a test function $\theta$ in $\ml{C}_{c}^{\infty}(\IR)$, we find, after letting $\hbar$ tend to $0$,
 $$\int_{\IR}\theta(t)\la F(t), \xi.\partial_xa\ra dt=0.$$
 As this is valid for every $\theta$, we can conclude the proof of the lemma.
\end{proof}

\subsection{Relation with $\mathbf{F}_{\Lambda}^0$}\label{ss:pushforward}

In this section, we introduced a quantity $\tilde{F}_{\Lambda}^0$ which is slightly different from the one we considered in~\eqref{e:2micro-initialdata}. Yet, both quantities are related 
thanks to the frequency assumption~\eqref{e:hosc}. In fact, this hypothesis implies that, for $R>0$ and for $a$ in $\ml{C}^{\infty}_c(\IT^2\times\IR)$,
\begin{eqnarray*}\left\la\psi_{\hbar},\Oph\left(\ml{I}_{\Lambda}(a)\left(x,\frac{\hbar H_{\Lambda}(\xi)}{\eps_{\hbar}}\right)\right)\psi_{\hbar}\right\ra
&=&\left\la\psi_{\hbar},\Oph\left(\ml{I}_{\Lambda}(a)\left(x,\frac{\hbar H_{\Lambda}(\xi)}{\eps_{\hbar}}\right)\right)\chi(-\hbar^2\Delta/R)\psi_{\hbar}\right\ra \\
&+ &r(R,\hbar),\end{eqnarray*}
where $\chi$ is a smooth and compactly supported function equal to $1$ near $0$ and where  $\lim_{R\rightarrow+\infty}\limsup_{\hbar\rightarrow 0^+}r(R,\hbar)=0.$ Applying the composition rules for 
pseudodifferential operators and letting $\hbar\rightarrow 0$, we find that
$$\la\mathbf{F}_{\Lambda}^0,a\ra=\la\ml{I}_{\Lambda}(\tilde{F}_{\Lambda}^0),a(x,\eta)\chi(\|\xi\|^2/R)\ra+o(1),$$
as $R\rightarrow +\infty$. From the dominated convergence Theorem, we conclude that
\begin{equation}\label{e:pushforward-2micro}
 \mathbf{F}_{\Lambda}^0(x,\eta)=\int_{\IR^2}\ml{I}_{\Lambda}(\tilde{F}_{\Lambda}^0)(x,d\xi,\eta).
\end{equation}

\section{Proof of Theorem~\ref{t:evolution}}\label{s:proof}

Thanks to Proposition~\ref{p:decomp-fidelity} and to~\eqref{e:decompose-key-formula-2}, it now remains to 
determine $\ml{I}_{\Lambda}(\tilde{F}_{\Lambda}(t))$ in terms of $t$ and of $\tilde{F}_{\Lambda}^0$ in order to conclude the proof 
of Theorem~\ref{t:evolution}. This will be the main purpose of this section. After deriving the exact expression for 
$\ml{I}_{\Lambda}(\tilde{F}_{\Lambda}(t))$, we will explain how to prove Theorem~\ref{t:evolution} in paragraph~\ref{ss:conclusion}.

\subsection{Preliminary remarks}
Rather than $\tilde{F}_{\Lambda}(t)$, we are interested in the restriction of 
$\ml{I}_{\Lambda}(\tilde{F}_{\Lambda}(t))$ to $\IT^2\times\Lambda^{\perp}-\{0\}\times\IR$. Yet, 
this distinction is essentially irrelevant due to the following observation:
\begin{lemm} Let $\Lambda$ be a rank $1$ primitive sublattice. 
For every $a\in\ml{C}^{\infty}_c(T^*\IT^2\times\IR)$ whose support does not intersect $\IT^2\times\Lambda^{\perp}\times\IR$, and 
for a.e. $t$ in $\IR$, one has
 $$\la F_{\Lambda}(t), a\ra =0.$$
\end{lemm}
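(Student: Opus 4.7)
The plan is to show that the symbol being quantized vanishes identically for sufficiently small $\hbar$, which makes the assertion trivial in the limit. The key point is that $\Lambda^{\perp}$ coincides with the zero set of $H_{\Lambda}$: since $\Lambda=\mathbb{Z}\vec{v}_{\Lambda}$, one has $\xi\in\Lambda^{\perp}$ if and only if $H_{\Lambda}(\xi)=0$.

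First I would exploit compactness. Let $K=\operatorname{supp}(a)\subset T^*\mathbb{T}^2\times\mathbb{R}$ and let $\pi(K)\subset T^*\mathbb{T}^2$ denote its projection onto the $(x,\xi)$-factor. Since $a$ is compactly supported and $K$ is disjoint from $\mathbb{T}^2\times\Lambda^{\perp}\times\mathbb{R}$ by hypothesis, the compact set $\pi(K)$ is disjoint from $\mathbb{T}^2\times\Lambda^{\perp}$, so the continuous function $|H_{\Lambda}|$ attains a positive minimum $c>0$ on it. Likewise, the projection of $K$ onto the $\eta$-factor is contained in some bounded interval $[-M,M]$.

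Next I would use the scale separation $\epsilon_{\hbar}\ll\hbar$. For fixed $(x,\xi)\in\mathbb{T}^2\times\mathbb{R}^2$, the symbol $b_{\hbar}(x,\xi):=a\bigl(x,\xi,\hbar H_{\Lambda}(\xi)/\epsilon_{\hbar}\bigr)$ can be nonzero only when $(x,\xi)\in\pi(K)$ and $|\hbar H_{\Lambda}(\xi)/\epsilon_{\hbar}|\le M$. The first condition forces $|H_{\Lambda}(\xi)|\ge c$, hence $|\hbar H_{\Lambda}(\xi)/\epsilon_{\hbar}|\ge c\hbar/\epsilon_{\hbar}$. Since $\hbar/\epsilon_{\hbar}\to+\infty$, for all $\hbar$ sufficiently small one has $c\hbar/\epsilon_{\hbar}>M$, so the two conditions are incompatible and $b_{\hbar}\equiv 0$ identically. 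Consequently $\operatorname{Op}_{\hbar}(b_{\hbar})=0$, and therefore $\langle F_{\Lambda,\hbar}(t\tau_{\hbar}^c),a\rangle=0$ for every $t\in\mathbb{R}$ once $\hbar$ is small enough.

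Finally I would pass to the limit: integrating against any $\theta\in L^1(\mathbb{R})$, the left-hand side $\int_{\mathbb{R}}\theta(t)\langle F_{\Lambda,\hbar}(t\tau_{\hbar}^c),a\rangle\,dt$ vanishes for small $\hbar$, so by definition of $F_{\Lambda}(t)$ as the weak-$\star$ accumulation point one obtains $\int_{\mathbb{R}}\theta(t)\langle F_{\Lambda}(t),a\rangle\,dt=0$ for every such $\theta$, and thus $\langle F_{\Lambda}(t),a\rangle=0$ for a.e.\ $t\in\mathbb{R}$. There is essentially no analytic obstacle here since the argument reduces to a symbolic vanishing; the only subtlety is the correct use of compactness of $\operatorname{supp}(a)$ together with the regime $\hbar/\epsilon_{\hbar}\to\infty$, which is built into the standing assumption~\eqref{e:perturbation-size}.
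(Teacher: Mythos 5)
Your argument is correct and is exactly the paper's proof: the authors likewise observe that, by the support assumption on $a$ and the fact that $\hbar/\eps_{\hbar}\to+\infty$, the symbol $a\bigl(x,\xi,\hbar H_{\Lambda}(\xi)/\eps_{\hbar}\bigr)$ vanishes identically for $\hbar$ small enough, so that $\la F_{\Lambda,\hbar}(t\tau_{\hbar}^c),a\ra=0$ before passing to the limit. Your write-up merely makes explicit the compactness step (the lower bound $|H_{\Lambda}|\geq c>0$ on the projection of $\supp(a)$) that the paper leaves implicit.
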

From this lemma, we deduce that
$$\tilde{F}_{\Lambda}(t)=\tilde{F}_{\Lambda}(t)\rceil_{\IT^2\times\Lambda^{\perp}\times\IR}=\tilde{F}_{\Lambda}(t)\rceil_{\IT^2\times\Lambda^{\perp}-\{0\}\times\IR},$$
where the second equality follows from the frequency assumptions~\eqref{e:shosc}. Recall that $\tilde{F}_{\Lambda}(t)$ is an invariant complex Radon 
measure. Hence, thanks to Proposition~\ref{p:decomp-meas} and to Lemma~\eqref{l:invariance-geodesic-flow-two-microlocal}, we can write
\begin{equation}\label{e:support-2micro}\tilde{F}_{\Lambda}(t)=\ml{I}_{\Lambda}(\tilde{F}_{\Lambda}(t))\rceil_{\IT^2\times\Lambda^{\perp}-\{0\}\times\IR}.\end{equation}
\begin{proof} Let $a$  be an element in $\ml{C}^{\infty}_c(T^*\IT^2\times\IR)$ whose support does not intersect $\IT^2\times\Lambda^{\perp}\times\IR$. 
It is sufficient to observe that
$$\la F_{\Lambda,\hbar}(t\tau_{\hbar}^c),a\ra=\left\la\psi_{\hbar}^1,e^{\frac{it\tau_{\hbar}^c\widehat{P}_{\eps}(\hbar)}{\hbar}}
\Op_\hbar\left(a\left(x,\xi,\frac{\hbar H_{\Lambda}(\xi)}{\eps_{\hbar}}\right)\right)
e^{\frac{it\tau_{\hbar}^c\widehat{P}_{\eps}(\hbar)}{\hbar}}\psi_{\hbar}^2\right\ra$$
is equal to $0$ for $\hbar$ small enough thanks to our support assumption on $a$.
\end{proof}

\subsection{Propagation formulas for $\tilde{F}_{\Lambda}(t)$}\label{invariance} 
We can now express $\tilde{F}_{\Lambda}(t)$ in  terms of $t$ and of the initial data:
\begin{prop}\label{p:propag-compact-part} Let $\Lambda$ be a rank one primitive sublattice. Then $t\mapsto \tilde{F}_{\Lambda}(t)$ is continuous and 
one has, for every $a$ in $\ml{C}^{\infty}_c(T^*\IT^2\times\IR)$
 $$\la \tilde{F}_{\Lambda}(t),\ml{I}_{\Lambda}(a)\ra=\left\la
\ml{I}_{\Lambda}(\tilde{F}_{\Lambda}^0)\left(x-t\frac{\eta \vec{v}_{\Lambda}}{L_{\Lambda}},\xi,\eta\right) 
 e^{i\int_0^t\ml{I}_{\Lambda}(V)\left(x-s\frac{\eta \vec{v}_{\Lambda}}{L_{\Lambda}}\right)ds}
,\ml{I}_{\Lambda}(a)\right\ra$$
\end{prop}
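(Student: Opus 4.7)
The plan is to derive an ODE in $t$ for $\la\tilde{F}_{\Lambda}(t),\ml{I}_{\Lambda}(a)\ra$ by differentiating the two Schrödinger evolutions, pass to the semiclassical limit, and then integrate the resulting transport PDE by the method of characteristics. First I would compute
$$\frac{d}{dt}\la F_{\Lambda,\hbar}(t\tau_{\hbar}^c),\ml{I}_{\Lambda}(a)\ra=\frac{i\tau_{\hbar}^c}{\hbar}\left\la\psi_{\hbar}^1,e^{\frac{it\tau_{\hbar}^c\widehat{P}_{\eps}}{\hbar}}\left[\widehat{P}_0,\Oph(b_{\hbar})\right]e^{-\frac{it\tau_{\hbar}^c\widehat{P}_{0}}{\hbar}}\psi_{\hbar}^2\right\ra+i\la F_{\Lambda,\hbar}(t\tau_{\hbar}^c),V\,\ml{I}_{\Lambda}(a)\ra,$$
where $b_{\hbar}(x,\xi)=\ml{I}_{\Lambda}(a)(x,\xi,\hbar H_{\Lambda}(\xi)/\eps_{\hbar})$ and where I used $\tau_{\hbar}^c\eps_{\hbar}/\hbar=1$ for the $V$ term.

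Next, I apply the composition Theorem~\ref{t:composition}: since $\widehat{P}_0$ has the polynomial symbol $|\xi|^2/2$, the Moyal expansion yields $[\widehat{P}_0,\Oph(b_{\hbar})]=\frac{\hbar}{i}\Oph(\xi\cdot\partial_x b_{\hbar})+\ml{O}_{L^2\to L^2}(\hbar^2)$, the remainder being bounded via Calderón--Vaillancourt using only $x$-derivatives of $b_{\hbar}$ (these remain uniformly bounded because $\eta\mapsto a(x,\xi,\eta)$ carries no new $x$-derivatives). Multiplication by $\tau_{\hbar}^c/\hbar=1/\eps_{\hbar}$ gives a remainder of order $\hbar^2/\eps_{\hbar}=o(1)$, thanks to $\hbar^2\ll\eps_{\hbar}$. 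The key algebraic step is then to exploit the Fourier structure of $\ml{I}_{\Lambda}(a)=\sum_{k\in\Lambda}\widehat{a}_k(\xi,\eta)e^{2i\pi k\cdot x}$: for $k=n\vec{v}_{\Lambda}\in\Lambda$ one has $\xi\cdot k=nL_{\Lambda}H_{\Lambda}(\xi)$, so that
$$\tau_{\hbar}^c\,\xi\cdot\partial_x\ml{I}_{\Lambda}(a)\Big|_{\eta=\hbar H_{\Lambda}(\xi)/\eps_{\hbar}}=\frac{\hbar H_{\Lambda}(\xi)}{\eps_{\hbar}}\,\frac{\vec{v}_{\Lambda}}{L_{\Lambda}}\cdot\partial_x\ml{I}_{\Lambda}(a)\Big|_{\eta=\hbar H_{\Lambda}(\xi)/\eps_{\hbar}}=\eta\,\frac{\vec{v}_{\Lambda}}{L_{\Lambda}}\cdot\partial_x\ml{I}_{\Lambda}(a),$$
so the commutator term contributes, after passage to the limit, $\la\tilde{F}_{\Lambda}(t),\eta\frac{\vec{v}_{\Lambda}}{L_{\Lambda}}\cdot\partial_x\ml{I}_{\Lambda}(a)\ra$. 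For the $V$ term I combine the support relation~\eqref{e:support-2micro} with the observation that $\ml{I}_{\Lambda}(V\,\ml{I}_{\Lambda}(a))=\ml{I}_{\Lambda}(V)\ml{I}_{\Lambda}(a)$ to replace $V$ by $\ml{I}_{\Lambda}(V)$ when tested against $\tilde{F}_{\Lambda}(t)$.

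Integrating against a test function $\theta\in\ml{C}^{\infty}_c(\IR)$ in $t$ and passing to the limit $\hbar\to 0^+$, I obtain the weak transport equation
$$\partial_t\bigl(\ml{I}_{\Lambda}(\tilde{F}_{\Lambda}(t))\bigr)+\eta\,\frac{\vec{v}_{\Lambda}}{L_{\Lambda}}\cdot\partial_x\bigl(\ml{I}_{\Lambda}(\tilde{F}_{\Lambda}(t))\bigr)=i\,\ml{I}_{\Lambda}(V)\,\ml{I}_{\Lambda}(\tilde{F}_{\Lambda}(t)),\qquad\ml{I}_{\Lambda}(\tilde{F}_{\Lambda}(0))=\ml{I}_{\Lambda}(\tilde{F}_{\Lambda}^0).$$
The method of characteristics — straightforward here because the transport velocity $\eta\vec{v}_{\Lambda}/L_{\Lambda}$ is constant along orbits — yields the announced formula, with continuity in $t$ following from the Duhamel form of this ODE.

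The main obstacle I anticipate is the careful bookkeeping in the semiclassical limit of step 2--3: one has two competing small parameters ($\hbar$ and $\eps_{\hbar}$) together with a diverging prefactor $\tau_{\hbar}^c=\hbar/\eps_{\hbar}$, and the leading contribution is a precisely tuned cancellation between this prefactor and the smallness of $\xi\cdot k$ on the two-microlocal support (where $H_{\Lambda}(\xi)\sim\eps_{\hbar}\eta/\hbar$). The regime $\hbar^2\ll\eps_{\hbar}\ll\hbar$ is used twice, once to make the Moyal remainder negligible and once to ensure the rescaling identification $\eta=\hbar H_{\Lambda}(\xi)/\eps_{\hbar}$ is effective; a slightly delicate point is to justify passing the limit inside the $V$-term so as to identify $V$ with $\ml{I}_{\Lambda}(V)$, which uses Proposition~\ref{p:decomp-meas} applied to the complex measure $\tilde{F}_{\Lambda}(t)$.
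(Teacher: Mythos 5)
Your proposal is correct and follows essentially the same route as the paper: differentiate $\la F_{\Lambda,\hbar}(t\tau_{\hbar}^c),\ml{I}_{\Lambda}(a)\ra$, use the exact commutator expansion for the polynomial symbol $\|\xi\|^2/2$ together with the fact that $\ml{I}_{\Lambda}(a)$ has Fourier modes only along $\Lambda$ to produce the $\eta\frac{\vec{v}_{\Lambda}}{L_{\Lambda}}\cdot\partial_x$ transport term (with the second-order term $\ml{O}(\hbar^2/\eps_{\hbar})=o(1)$), replace $V$ by $\ml{I}_{\Lambda}(V)$ via the support and translation-invariance of $\tilde{F}_{\Lambda}(t)$, and integrate the resulting weak transport equation along characteristics. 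The points you flag as delicate (the cancellation between $\tau_{\hbar}^c$ and the smallness of $H_{\Lambda}(\xi)$ on the two-microlocal support, and the two uses of $\hbar^2\ll\eps_{\hbar}\ll\hbar$) are exactly where the paper's proof puts its emphasis.
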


\begin{proof} Let $a(x,\xi,\eta)$ be an element in $\ml{C}^{\infty}_c(T^*\IT^2\times\IR)$. In our to derive our equation, we start by differentiating the 
map $t\mapsto \la F_{\Lambda,\hbar}(t\tau_{\hbar}^c),\ml{I}_{\Lambda}(a)\ra$. Recalling that we are using the standard quantization, we find that
\begin{equation}\label{deriv3}
\begin{split}
&\frac{d}{dt}\la F_{\Lambda,\hbar}(t\tau_{\hbar}^c),\ml{I}_{\Lambda}(a)\ra\\
=&\frac{i\tau_\hbar^c}{\hbar}\left\la \psi_\hbar^1,e^{\frac{it\tau_{\hbar}^c\widehat{P}_{\eps}(\hbar)}{\hbar}}\left[-\frac{\hbar^2\Delta}{2},
\Op_\hbar\left(\ml{I}_{\Lambda}(a)\left(x,\xi,\frac{\hbar}{\eps_{\hbar}H_{\Lambda}(\xi)}\right)\right)\right]e^{\frac{it\tau_{\hbar}^c\widehat{P}_{0}(\hbar)}{\hbar}}\psi_\hbar^2\right\ra\\
&+i\left\la \psi_\hbar^1,e^{\frac{it\tau_{\hbar}^c\widehat{P}_{\eps}(\hbar)}{\hbar}}
\Op_\hbar\left(\ml{I}_{\Lambda}(a)\left(x,\xi,\frac{\hbar}{\eps_{\hbar}H_{\Lambda}(\xi)}\right) V(x)
\right)e^{\frac{it\tau_{\hbar}^c\widehat{P}_{0}(\hbar)}{\hbar}}\psi_\hbar^2\right\ra.
\end{split}
\end{equation}
From the commutation formula for pseudodifferential operators (see Theorem~\ref{t:composition}), we know that 
$$\left[-\frac{\hbar^2\Delta}{2},\Op_\hbar\left(\ml{I}_{\Lambda}(a)\left(x,\xi,\frac{\hbar}{\eps_{\hbar}H_{\Lambda}(\xi)}\right)\right)\right]\hspace{4cm}$$
$$=\frac{\eps_{\hbar}}{i}\Op_\hbar\left(\left(\frac{\hbar H_{\Lambda}(\xi)}{\eps_{\hbar}}\frac{\vec{v}_{\Lambda}}{L_{\Lambda}}\cdot\partial_x 
+\frac{\hbar^2}{2i\eps_{\hbar}}\left(\frac{\vec{v}_{\Lambda}}{L_{\Lambda}}\cdot\partial_x\right)^2 \right)
\ml{I}_{\Lambda}(a)\left(x,\xi,\frac{\hbar}{\eps_{\hbar}H_{\Lambda}(\xi)}\right)\right),$$
where we used that $\ml{I}_{\Lambda}(a)$ has only Fourier coefficients along $\Lambda$. Comining this formula with~\eqref{deriv3}, we find that
$$\frac{d}{dt}\la F_{\Lambda,\hbar}(t\tau_{\hbar}^c),\ml{I}_{\Lambda}(a)\ra=
\left\la F_{\Lambda,\hbar}(t\tau_{\hbar}^c),
\left(\eta\frac{\vec{v}_{\Lambda}}{L_{\Lambda}}\cdot\partial_x
+\frac{\hbar^2}{2i\eps_{\hbar}}\left(\frac{\vec{v}_{\Lambda}}{L_{\Lambda}}\cdot\partial_x\right)^2+i V\right)\ml{I}_{\Lambda}(a)\right\ra.$$
This can be rewritten as
$$\frac{1}{i}\frac{d}{dt}\la F_{\Lambda,\hbar}(t\tau_{\hbar}^c),\ml{I}_{\Lambda}(a)\ra=
\left\la F_{\Lambda,\hbar}(t\tau_{\hbar}^c),
\left(\eta\frac{\vec{v}_{\Lambda}}{iL_{\Lambda}}\cdot\partial_x
-\frac{\hbar^2}{2\eps_{\hbar}}\left(\frac{\vec{v}_{\Lambda}}{L_{\Lambda}}\cdot\partial_x\right)^2+ V\right)\ml{I}_{\Lambda}(a)\right\ra.$$
We can now integrate this relation against a smooth function $\theta$ in $\ml{C}^{\infty}_c(\IR)$. We use that $\eps_{\hbar}\gg\hbar^2$ and 
we find that, for every $\theta$ in $\ml{C}^{\infty}_c(\IR)$,
 $$-\int_{\IR}\theta'(t)\la \tilde{F}_{\Lambda}(t),\ml{I}_{\Lambda}(a)\ra dt=
\int_{\IR}\theta(t)\left\la \tilde{F}_{\Lambda}(t),
\left(\eta\frac{\vec{v}_{\Lambda}}{L_{\Lambda}}\cdot\partial_x+ iV\right)\ml{I}_{\Lambda}(a)\right\ra dt.$$
From Lemma~\ref{l:invariance-geodesic-flow-two-microlocal} and as $\tilde{F}_{\Lambda}(t)$ is a finite complex Radon measure 
supported in $\IT^2\times\Lambda^{\perp}-\{0\}\times\IR$, we find that this is equivalent to
 $$-\int_{\IR}\theta'(t)\la \tilde{F}_{\Lambda}(t),\ml{I}_{\Lambda}(a)\ra dt=
\int_{\IR}\theta(t)\left\la \tilde{F}_{\Lambda}(t),
\left(\eta\frac{\vec{v}_{\Lambda}}{L_{\Lambda}}\cdot\partial_x+ i\ml{I}_{\Lambda}(V)\right)\ml{I}_{\Lambda}(a)\right\ra dt,$$
which implies Proposition~\ref{p:propag-compact-part}.
\end{proof}

\subsection{Conclusion}\label{ss:conclusion} Recall that we are primarly interested in describing the quantum fidelity distribution at the critical time scale 
$\tau_{\hbar}^c=\frac{\hbar}{\eps_{\hbar}}$. From Proposition~\ref{p:decomp-fidelity} and~\eqref{e:support-2micro}, 
one has
\begin{eqnarray*}
\la F(t),1\ra &=&e^{it\int_{\IT^2} V}\la F(0),1\ra\\
&+ &ie^{it\int_{\IT^2} V}\int_0^te^{-is\int_{\IT^2} V}\sum_{\Lambda:\text{rk}(\Lambda)=1}\left\la\int_{\IR^2\times\IR}\ml{I}_{\Lambda}(\tilde{F}_{\Lambda})(s,d\xi,d\eta),
\ml{I}_{\Lambda}(V)-\int_{\IT^2} V\right\ra ds. 
\end{eqnarray*}
Thanks to Proposition~\ref{p:propag-compact-part}, this can be rewritten as
\begin{eqnarray*}
\la F(t),1\ra &=&e^{it\int_{\IT^2} V}\la F(0),1\ra\\
&+ &e^{it\int_{\IT^2} V}\int_0^t\sum_{\Lambda:\text{rk}(\Lambda)=1}\left\la\int_{\IR^2\times\IR}\ml{I}_{\Lambda}(\tilde{F}_{\Lambda}^0),\frac{d}{ds}\left(
e^{i\int_0^s\left(\ml{I}_{\Lambda}(V)\left(x+s'\frac{\eta \vec{v}_{\Lambda}}{L_{\Lambda}}\right)-\int_{\IT^2}Vdx\right)ds'}\right)\right\ra ds, 
\end{eqnarray*}
which yields
\begin{eqnarray*}
\la F(t),1\ra &=&e^{it\int_{\IT^2} V}\left(\la F(0),1\ra-\sum_{\Lambda:\text{rk}(\Lambda)=1}\la \tilde{F}_{\Lambda}^0,1\ra\right)\\
&+ &\sum_{\Lambda:\text{rk}(\Lambda)=1}\int_{T^*\IT^2\times\IR}
e^{i\int_0^t\ml{I}_{\Lambda}(V)\left(x+s\frac{\eta \vec{v}_{\Lambda}}{L_{\Lambda}}\right)ds}\ml{I}_{\Lambda}(\tilde{F}_{\Lambda}^0)(dx,d\xi,d\eta). 
\end{eqnarray*}
Suppose now that we take $\psi_{\hbar}^1=\psi_{\hbar}^2=\psi_{\hbar}$ for the sequence of initial data. We are now in the framework of the introduction and we can make use 
of~\eqref{e:pushforward-2micro} to write
\begin{eqnarray*}
\la F(t),1\ra &=&e^{it\int_{\IT^2} V}\left(\la F(0),1\ra-\sum_{\Lambda:\text{rk}(\Lambda)=1}\la \mathbf{F}_{\Lambda}^0,1\ra\right)\\
&+ &\sum_{\Lambda:\text{rk}(\Lambda)=1}\int_{T^*\IT^2\times\IR}
e^{i\int_0^t\ml{I}_{\Lambda}(V)\left(x+s\frac{\eta \vec{v}_{\Lambda}}{L_{\Lambda}}\right)ds}\mathbf{F}_{\Lambda}^0(dx,d\eta), 
\end{eqnarray*}
which is exactly the content of Theorem~\ref{t:evolution} as $F(0)=1$ in that case.

\appendix

\section{The case of strong perturbations: $\eps_{\hbar}\geq \hbar$}
\label{a:strong}

All along the article, we focused on the case where $\eps_{\hbar}$ satisfies~\eqref{e:perturbation-size}. In particular, we had that $\eps_{\hbar}\ll\hbar$ which ensured that certain invariance 
properties of the fidelity distribution are satisfied -- namely Lemma~\ref{l:invariance-geodesic-flow}. In this appendix, we will briefly explain what can be said when the strength of the perturbation verifies
$$\hbar\leq\eps_{\hbar}\leq 1.$$
In that case, one has $\tau_{\hbar}^c=\frac{\hbar}{\eps_{\hbar}}\leq 1$ and we are in a scale of times where semiclassical rules for pseudodifferential operators apply. Precisely, we prove:
\begin{theo}\label{t:strong} Denote $u_{\hbar}(t')$ (resp. $u_{\hbar}^{\eps}(t')$) the solution to~\eqref{e:schrodinger-unperturbed} 
(resp.~\eqref{e:schrodinger-perturbed}) with 
initial condition $\psi_{\hbar}$ generating an unique semiclassical measure $F_0$. Then, one has
\begin{enumerate}
 \item if $\eps_{\hbar}=\hbar$, then, for every $t\in\IR$,
$$\lim_{\hbar\rightarrow 0^+}\left|\la u_{\hbar}^{\eps}(t\tau_{\hbar}^c), u_{\hbar}(t\tau_{\hbar}^c)\ra_{L^2(\IT^2)}\right|^2=\left|\la F_0,e^{i\int_0^t V\circ\varphi^s ds}\ra\right|^2,$$
 \item if $\hbar\ll\eps_{\hbar}\leq 1$, then, for every $t\in\IR$,
 $$\lim_{\hbar\rightarrow 0^+}\left|\la u_{\hbar}^{\eps}(t\tau_{\hbar}^c), u_{\hbar}(t\tau_{\hbar}^c)\ra_{L^2(\IT^2)}\right|^2=\left|\la F_0,e^{it V}\ra\right|^2.$$
\end{enumerate}

\end{theo}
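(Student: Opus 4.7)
The plan is to re-use the semiclassical fidelity distribution $F_\hbar(s)$ from Section~\ref{s:fidelity-distrib}, but to exploit the crucial simplification that, when $\eps_\hbar\geq\hbar$, the critical time $\tau_\hbar^c=\hbar/\eps_\hbar\leq 1$ stays bounded. Consequently the rescaled time $s=t\tau_\hbar^c$ never leaves a fixed compact interval, so a single application of the standard semiclassical calculus of Appendix~\ref{a:sc-an} suffices with no need for the two-microlocal refinements used in Sections~\ref{s:twomicrolocal}-\ref{s:proof}. Since $F_0$ is by hypothesis a genuine (positive) semiclassical measure and $\la u_\hbar^\eps(s),u_\hbar(s)\ra_{L^2(\IT^2)} = \la F_\hbar(s),\mathbf{1}\ra$, both assertions reduce to computing $\lim_{\hbar\to 0^+}\la F_\hbar(t\tau_\hbar^c),\mathbf{1}\ra$.

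The first step is to differentiate $\la F_\hbar(s),a\ra$ in $s$ for $a\in\ml{C}^\infty_c(T^*\IT^2)$. The computation already carried out in the proof of Lemma~\ref{l:invariance-geodesic-flow} gives, after noting $\widehat{P}_\eps(\hbar)-\widehat{P}_0(\hbar)=\eps_\hbar V$,
\begin{equation*}
\frac{d}{ds}\la F_\hbar(s),a\ra = \frac{i}{\hbar}\left\la\psi_\hbar,e^{is\widehat{P}_\eps(\hbar)/\hbar}\bigl([\widehat{P}_0(\hbar),\Oph(a)]+\eps_\hbar V\Oph(a)\bigr)e^{-is\widehat{P}_0(\hbar)/\hbar}\psi_\hbar\right\ra.
\end{equation*}
Invoking Theorem~\ref{t:composition}, $[\widehat{P}_0(\hbar),\Oph(a)]=\frac{\hbar}{i}\Oph(\xi\cdot\partial_x a)+\ml{O}_{L^2\to L^2}(\hbar^2)$, and $V\Oph(a)=\Oph(Va)+\ml{O}_{L^2\to L^2}(\hbar)$. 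After the time rescaling $s=t\tau_\hbar^c$ and multiplication by $\tau_\hbar^c$, one obtains
\begin{equation*}
\frac{d}{dt}\la F_\hbar(t\tau_\hbar^c),a\ra = \tau_\hbar^c\,\la F_\hbar(t\tau_\hbar^c),\xi\cdot\partial_x a\ra + i\,\la F_\hbar(t\tau_\hbar^c),Va\ra + \ml{O}(\hbar\tau_\hbar^c),
\end{equation*}
uniformly on bounded $t$-intervals, since $\tau_\hbar^c\leq 1$.

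At this stage the two regimes separate. If $\eps_\hbar=\hbar$, then $\tau_\hbar^c=1$ and both terms on the right contribute; any accumulation point $F(t)$ of $(F_\hbar(t))_{\hbar\to 0^+}$ satisfies, in the distributional sense,
\begin{equation*}
\frac{d}{dt}\la F(t),a\ra = \la F(t),\xi\cdot\partial_x a + iVa\ra,\qquad F(0)=F_0.
\end{equation*}
This first-order linear transport equation with source is solved by the method of characteristics: a direct verification shows that
\begin{equation*}
\la F(t),a\ra = \int_{T^*\IT^2} a\circ\varphi^t(x,\xi)\, e^{i\int_0^t V\circ\varphi^s(x,\xi)\,ds}\, F_0(dx,d\xi).
\end{equation*}
Specialising to $a=\mathbf{1}$ and using the $\varphi^s$-invariance of $F_0$ (Lemma~\ref{l:invariance-geodesic-flow}, which also applies in the borderline case $\eps_\hbar=\hbar$ after checking the same commutator argument) yields the first formula. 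In the regime $\hbar\ll\eps_\hbar\leq 1$ we have $\tau_\hbar^c\to 0$, so the transport term drops out in the limit and every accumulation point $F(t)$ satisfies the elementary pointwise ODE $\frac{d}{dt}\la F(t),a\ra = i\la F(t),Va\ra$; solving and setting $a=\mathbf{1}$ gives $\la F(t),\mathbf{1}\ra = \la F_0,e^{itV}\ra$.

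The main point to watch is uniqueness of the limit: a priori we extracted subsequences. However, the right-hand side of each ODE depends only on the unique semiclassical measure $F_0$, so every subsequential limit of $\la F_\hbar(t\tau_\hbar^c),\mathbf{1}\ra$ coincides, forcing convergence of the full sequence and giving the squared moduli in the statement. The only quantitative condition used is $\hbar\tau_\hbar^c=\hbar^2/\eps_\hbar\to 0$, which is automatic when $\eps_\hbar\geq\hbar$; this is also the reason no two-microlocal correction is needed in this appendix, in contrast with the body of the article.
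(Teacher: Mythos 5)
Your argument is correct, but it takes a genuinely different route from the paper's. The paper proves Theorem~\ref{t:strong} by an Egorov-type computation: since $\tau_{\hbar}^c\leq 1$, the product $e^{it\tau_{\hbar}^c\widehat{P}_{\eps}(\hbar)/\hbar}e^{-it\tau_{\hbar}^c\widehat{P}_{0}(\hbar)/\hbar}$ is shown to be a single pseudodifferential operator with symbol $\exp\bigl(i\tfrac{\eps_{\hbar}}{\hbar}\int_0^{t\tau_{\hbar}^c}V\circ\varphi^{s}ds\bigr)$ up to $o_{L^2\rightarrow L^2}(1)$ (adapting~\cite[p.~71--72]{Ro10} to the case of two different propagators), after which both limits follow at once by pairing against $\psi_{\hbar}$ and using the definition of $F_0$. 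You instead differentiate the fidelity distribution, pass to the limit in the resulting transport equation $\frac{d}{dt}\la F(t),a\ra=\tau_{\hbar}^c\la F(t),\xi\cdot\partial_xa\ra+i\la F(t),Va\ra$, and solve by characteristics; this avoids the slightly nonstandard two-propagator Egorov theorem and is more uniform with the treatment of $F(t)$ in Section~\ref{s:fidelity-distrib}, but it costs you two routine steps that you should at least mention: uniqueness of the limiting transport equation among uniformly bounded, weak-$\star$ continuous families of complex measures (a standard duality/Gronwall argument --- ``a direct verification'' only shows your candidate is \emph{a} solution), and the passage from $a\in\ml{C}^{\infty}_c(T^*\IT^2)$ to $a=\mathbf{1}$, which requires the no-escape-of-mass property~\eqref{e:hosc}. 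One genuine (though harmless) slip: you invoke ``the $\varphi^s$-invariance of $F_0$'' and assert that Lemma~\ref{l:invariance-geodesic-flow} extends to $\eps_{\hbar}=\hbar$. Neither claim is correct nor needed: the proof of that lemma uses $\eps_{\hbar}\ll\hbar$ precisely to discard the potential term, which at $\eps_{\hbar}=\hbar$ becomes the non-negligible contribution $i\la F(t),Va\ra$, so neither $F(t)$ nor $F_0$ (the semiclassical measure of the raw initial data) need be flow-invariant. Fortunately the invariance plays no role: setting $a=\mathbf{1}$ in your characteristics formula gives $\la F(t),\mathbf{1}\ra=\la F_0,e^{i\int_0^tV\circ\varphi^s ds}\ra$ directly, since $\mathbf{1}\circ\varphi^t=\mathbf{1}$.
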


Here, we focus on the case of the $2$-dimensional torus as we did in the rest of the article. Yet, the proof we give can be adapted verbatim to any smooth compact Riemannian manifold $(M,g)$ while this 
was not the case for smaller perturbations.
\begin{proof} We can write
 $$ \la u_{\hbar}^{\eps}(t\tau_{\hbar}^c), u_{\hbar}(t\tau_{\hbar}^c)\ra=
 \left\la\psi_{\hbar}, e^{\frac{it\tau_{\hbar}^c\widehat{P}_{\eps}(\hbar)}{\hbar}}e^{-\frac{it\tau_{\hbar}^c\widehat{P}_{0}(\hbar)}{\hbar}}\psi_{\hbar}\right\ra.$$ 
Compared with the rest of the article, we have here $\tau_{\hbar}^c=\frac{\hbar}{\eps_{\hbar}}\leq 1$. In particular, we can apply the Egorov Theorem to determine the value of $F(t)$ where the only difference 
with the classical case is that we have $e^{\frac{it\tau_{\hbar}^c\widehat{P}_{\eps}(\hbar)}{\hbar}}$ on one side and $e^{-\frac{it\tau_{\hbar}^c\widehat{P}_{0}(\hbar)}{\hbar}}$ on the other. Yet, the classical 
proof can be adapted to encompass this case\footnote{The proof in that reference is given for $V$ of the form $iW$ with $W$ real valued but it can be adapted to treat 
the selfadjoint case which is actually simpler to deal with.}~\cite[p.~71-72]{Ro10}. More precisely, we have
$$e^{\frac{it\tau_{\hbar}^c\widehat{P}_{\eps}(\hbar)}{\hbar}}e^{-\frac{it\tau_{\hbar}^c\widehat{P}_{0}(\hbar)}{\hbar}}=\Oph\left(
e^{i\frac{\eps_{\hbar}}{\hbar}\int_0^{t\tau_{\hbar}^c} V\circ\varphi^{s}ds}\right)+o_{L^2\rightarrow L^2}(1).$$
This yields
\begin{itemize}
 \item if $\eps_{\hbar}=\hbar$, then one has
 $$\la F(t),1\ra=\left\la F_0, e^{i\int_0^{t} V\circ\varphi^{s}ds}\right\ra.$$
 \item if $\hbar\ll\eps_{\hbar}\leq 1$, then one has
 $$\la F(t),1\ra=\left\la F_0, e^{itV}\right\ra.$$
\end{itemize}
Finally, we find that, for $\eps_{\hbar}=\hbar$ and for every $t$ in $\IR$,
$$\lim_{\hbar\rightarrow 0^+}\left|\la u_{\hbar}^{\eps}(t\tau_{\hbar}^c), u_{\hbar}(t\tau_{\hbar}^c)\ra_{L^2(\IT^2)}\right|^2=\left|\la F_0,e^{i\int_0^t V\circ\varphi^s ds}\ra\right|^2,$$
while, for $\hbar\ll\eps_{\hbar}\leq 1$ and for every $t$ in $\IR$,
$$\lim_{\hbar\rightarrow 0^+}\left|\la u_{\hbar}^{\eps}(t\tau_{\hbar}^c), u_{\hbar}(t\tau_{\hbar}^c)\ra_{L^2(\IT^2)}\right|^2=\left|\la F_0,e^{it V}\ra\right|^2.$$

\end{proof}

\section{Background on semiclassical analysis}
\label{a:sc-an}

In this appendix, we give a brief reminder on semiclassical analysis and we refer to~\cite{Zw12} (mainly Chapters $1$ to $5$) for a more detailed exposition. Given $\hbar>0$ 
and $a$ in $\ml{S}(\IR^{2d})$ (the Schwartz class), one can define the standard quantization of $a$ as follows:
$$\forall u\in\ml{S}(\IR^d),\ \Oph(a)u(x):=\frac{1}{(2\pi\hbar)^d}\iint_{\IR^{2d}}e^{\frac{i}{\hbar}\la x-y,\xi\ra}a\left(x,\xi\right)u(y)dyd\xi.$$
\begin{rema}
We could use other quantization procedures like the Weyl's one~\cite{Zw12}. The advantage of this quantization is that it has a simple action on trigonometric polynomials 
and that it behaves nicely with respect to multiplication by $V(x)$.
\end{rema}
This definition can be extended to any observable $a$ with uniformly bounded derivatives, i.e. such that for every $\alpha\in\IN^{2d}$, there exists $C_{\alpha}>0$ such that 
$\sup_{x,\xi}|\partial^{\alpha}a(x,\xi)|\leq C_{\alpha}$. More generally, we will use the convention, for every $m\in\IR$ and every $k\in\IZ$,
$$S^{m,k}:=\left\{(a_{\hbar}(x,\xi))_{0<\hbar\leq 1}:\ \forall (\alpha,\beta)\in\IN^d\times\IN^d,\ \sup_{(x,\xi)\in\IR^{2d}; 0<\hbar\leq 1}|\hbar^k\la\xi\ra^{-m}\partial_x^{\alpha}\partial_{\xi}^{\beta}a_{\hbar}(x,\xi)|<+\infty\right\},$$
where $\la\xi\ra:=(1+\|\xi\|^2)^{1/2}$. For such symbols, $\Oph(a)$ defines a continuous operator $\ml{S}(\IR^d)\rightarrow\ml{S}(\IR^d)$. 

\begin{rema} We also note that we have the following relation that we use at different stages of our proof:
\begin{equation}\label{e:change-variable}
\forall\delta>0,\ \forall a\in S^{m,k}, \Oph(a(x,\xi)):=\Op_{\hbar\delta^{-1}}(a(x,\delta\xi)).
\end{equation}
\end{rema}

Among the above symbols, we distinguish the family of $\IZ^d$-periodic symbols that we denote by $S^{m,k}_{per}$. Note that any $a$ in $\ml{C}^{\infty}(T^*\IT^d)$ 
(with bounded derivatives) defines an element in $S^{0,0}_{per}$. According to Th.~$4.19$ in~\cite{Zw12}, for any $a\in S^{m,k}_{per}$, the operator $\Oph(a)$ maps trigonometric polynomials into a smooth $\IZ^d$-periodic function, and more generally any smooth $\IZ^d$-periodic function into a smooth $\IZ^d$-periodic function. Thus, for every $a$ in $S^{m,k}_{per}$, $\Oph(a)$ acts by duality on the space of distributions $\ml{D}'(\IT^d)$. An important feature of this quantization procedure is that it defines a bounded operator on $L^2(\IT^d)$:
\begin{theo}[Calder\'on-Vaillancourt]\label{t:cald-vail} There exists a constant $C_d>0$ and an integer $D>0$ such that, for every $a$ in $S^{0,0}_{per}$, one has, for every $0<\hbar\leq 1$,
$$\left\|\Oph(a)\right\|_{L^2(\IT^d)\rightarrow L^2(\IT^d)}\leq C_d\sum_{|\alpha|\leq d+1}\|\partial^{\alpha}_xa\|_{\infty},$$
and
$$\left\|\Oph(a)\right\|_{L^2(\IT^d)\rightarrow L^2(\IT^d)}\leq C_d\sum_{|\alpha|\leq D}\hbar^{\frac{|\alpha|}{2}}\|\partial^{\alpha}a\|_{\infty}.$$
\end{theo}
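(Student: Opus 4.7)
The plan is to exploit the explicit action of $\Oph(a)$ on the Fourier basis $e_k(x) := e^{2i\pi k\cdot x}$, $k\in\IZ^d$, of $L^2(\IT^d)$. Expanding $a(x,\xi) = \sum_{n\in\IZ^d}\hat a_n(\xi)\, e^{2i\pi n\cdot x}$ in Fourier series in $x$ and evaluating the oscillatory integral termwise, a direct computation gives
$$\Oph(a)\, e_k \;=\; \sum_{n\in\IZ^d}\hat a_n(2\pi\hbar k)\, e_{k+n}.$$
Equivalently, on Fourier coefficients $\Oph(a)$ acts via the infinite matrix $K_\hbar(m,k):=\hat a_{m-k}(2\pi\hbar k)$, and both bounds reduce to estimating this matrix on $\ell^2(\IZ^d)$.

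For the first inequality I apply Schur's test to $K_\hbar$. Repeated integration by parts in the definition of $\hat a_n(\xi)$ yields, for every $N\in\IN$,
$$\sup_{\xi\in\IR^d}|\hat a_n(\xi)|\;\leq\; C_N\, (1+|n|)^{-N}\sum_{|\alpha|\leq N}\|\partial_x^\alpha a\|_\infty.$$
Choosing $N=d+1$, the sequence $(1+|n|)^{-(d+1)}$ is summable over $\IZ^d$, so both $\sup_m\sum_k|K_\hbar(m,k)|$ and $\sup_k\sum_m|K_\hbar(m,k)|$ are bounded by $C_d\sum_{|\alpha|\leq d+1}\|\partial_x^\alpha a\|_\infty$, uniformly in $\hbar$. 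Schur's lemma then delivers the first estimate. Notice that no $\xi$-derivative appears, in agreement with the $\hbar$-independent form of the bound.

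For the second (semiclassical) inequality, the $x$-only argument is too crude; one must genuinely exploit derivatives in $\xi$ to extract the $\hbar^{|\alpha|/2}$ gains. The plan is to implement a Cotlar--Stein almost-orthogonality argument, adapting the standard Euclidean semiclassical Calder\'on--Vaillancourt theorem (cf.\ Chapter~4 of \cite{Zw12}). One introduces a smooth partition of unity on $T^*\IR^d$ at scale $\hbar^{1/2}$ in both $x$ and $\xi$, writes $\Oph(a)=\sum_j A_j$ accordingly, and uses integration by parts at that scale to show that the pieces are almost orthogonal: both $\|A_j^*A_{j'}\|$ and $\|A_j A_{j'}^*\|$ decay faster than any power of the lattice distance $|j-j'|$, with constants controlled by a finite number $D$ of derivatives of $a$ weighted by the natural $\hbar^{|\alpha|/2}$. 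The Cotlar--Stein lemma then produces the claimed bound.

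The main obstacle is the transfer of this Euclidean estimate to the torus: a periodic symbol $a\in S^{0,0}_{per}$ does not decay in $x$, so the $\IR^d$ statement does not apply verbatim to elements of $\ml{S}(\IR^d)$. The standard remedy is that the Schwartz kernel of $\Oph(a)$ on $\IT^d$ coincides with the periodisation in the $y$-variable of its kernel on $\IR^d$, and the $L^2(\IT^d)\to L^2(\IT^d)$ operator norm is controlled by the $L^2(\IR^d)\to L^2(\IR^d)$ norm of a suitable localised (in $x$) version of the operator, up to a dimensional constant coming from summing over lattice translates. I would refer to Theorem~4.19 of \cite{Zw12} for a detailed implementation of this periodisation step.
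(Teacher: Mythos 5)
Your argument is essentially the paper's own: the first bound is proved there by exactly the same computation $\Oph(a)e_k=a(x,2\pi\hbar k)e_k$, Fourier expansion in $x$, and a Plancherel--Cauchy--Schwarz estimate that is precisely the Schur test you invoke on the matrix $\widehat{a}(m-k,2\pi\hbar k)$; and for the second bound the paper, like you, defers to the Euclidean semiclassical Calder\'on--Vaillancourt theorem of~\cite{Zw12}. Your explicit discussion of the periodisation step needed to pass from $\IR^d$ to $\IT^d$ is a welcome addition that the paper leaves implicit in its citation.
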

The second part of the Theorem is in fact the ``standard'' Calder\'on-Vaillancourt Theorem whose proof can be found\footnote{The proof in this reference is given for the Weyl quantization 
but the argument can be adapted for the standard quantization.} 
in~\cite[Th.~5.5]{Zw12} while the first part can be found in~\cite{Ru10}.
\begin{proof}
We recall the proof of the first part of the Theorem which is maybe less known~\cite{Ru10} and we refer to~\cite[Chap.~5]{Zw12} for the second part of the Theorem. 
The first part follows from the fact that, for every $k$ in $\IZ^d$, one has
$$\left(\Op_{\hbar}\left(a\right)e_k\right)(x)=a(x,2\pi\hbar k)e_k(x),$$
where $e_k(x):=e^{2i\pi k.x}.$ The proof of this fact is given in chapter~$4$ of~\cite[Th.~$4.19$]{Zw12}. Once we have observed this, we can write, for every trigonometric polynomial $u$ in 
$L^2(\IT^d)$, its Fourier decomposition $u=\sum_{k\in\IZ^d}\widehat{u}_ke_k$, and
$$\Op_{\hbar}\left(a\right)u=\sum_{k,l\in\IZ^d}\widehat{u}_k\widehat{a}(l,2\pi\hbar k)e_{k+l}=\sum_{p\in\IZ^d}e_p\sum_{k\in\IZ^d}\widehat{u}_k\widehat{a}(p-k,2\pi\hbar k),$$
where $a(x,\xi)=\sum_{l\in\IZ^d}\widehat{a}(l,\xi)e_l(x).$ Applying Plancherel equality, we get
$$\left\|\Op_{\hbar}\left(a\right)u\right\|_{L^2(\IT^d)}^2=\sum_{p\in\IZ^d}\left|\sum_{k\in\IZ^d}\widehat{u}_k\widehat{a}(p-k,2\pi\hbar k)\right|^2.$$
Thanks to Cauchy-Schwarz inequality, one has 
$$\left\|\Op_{\hbar}\left(a\right)u\right\|_{L^2(\IT^d)}^2\leq\sum_{p\in\IZ^d}\left(\sum_{k\in\IZ^d}|\widehat{u}_k|^2|\widehat{a}(p-k,2\pi\hbar k)|\right)
\left(\sum_{k'\in\IZ^d}|\widehat{a}(p-k',2\pi\hbar k')|\right).$$
This implies that
$$\left\|\Op_{\hbar}\left(a\right)\right\|_{L^2(\IT^d)\rightarrow L^2(\IT^d)}^2\leq\sup_{p\in\IZ^d}\left(\sum_{k'\in\IZ^d}|\widehat{a}(p-k',2\pi \hbar k')|\right)\times
 \sup_{k\in\IZ^d}\left(\sum_{p\in\IZ^d}|\widehat{a}(p-k,2\pi\hbar k)|\right),$$
which concludes the proof of the lemma. 

\end{proof}

Another important feature of this quantization procedure is the composition formula:
\begin{theo}[Composition formula]\label{t:composition} Let $a\in S^{m_1,k_1}$ and $b\in S^{m_2,k_2}$. Then, one has, for any $0<\hbar\leq 1$
$$\Oph(a)\circ\Oph(b)=\Oph(a\sharp_{\hbar} b),$$
 in the sense of operators from $\ml{S}(\IR^d)\rightarrow\ml{S}(\IR^d)$, where $a\sharp_{\hbar} b$ has uniformly bounded derivatives, and, for every $N\geq 0$
$$a\sharp_{\hbar} b\sim \sum_{k=0}^N\frac{1}{k!}\left(\frac{\hbar}{i}D\right)^k(a,b)+\ml{O}(\hbar^{N+1}),$$
where $D(a,b)(x,\xi)=(\partial_{y}.\partial_{\xi})(a(x,\xi)b(y,\nu))\rceil_{x=y,\xi=\nu}$.
\end{theo}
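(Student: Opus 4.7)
The plan is to reduce the composition to a single oscillatory integral and then Taylor expand in the momentum variable. Writing out the definition of $\Oph$ twice,
\begin{equation*}
\Oph(a)\Oph(b)u(x) = \frac{1}{(2\pi\hbar)^{2d}}\iiiint e^{\frac{i}{\hbar}(\la x-y,\xi\ra + \la y-z,\eta\ra)}a(x,\xi)b(y,\eta)u(z)\,dy\,d\xi\,dz\,d\eta,
\end{equation*}
and using the identity $\la x-y,\xi\ra + \la y-z,\eta\ra = \la x-z,\eta\ra + \la x-y,\xi-\eta\ra$, one recognizes the right-hand side as $\Oph(c)u(x)$ with
\begin{equation*}
c(x,\eta) = \frac{1}{(2\pi\hbar)^d}\iint e^{\frac{i}{\hbar}\la x-y,\xi-\eta\ra}a(x,\xi)b(y,\eta)\,dy\,d\xi.
\end{equation*}
These integrals are not absolutely convergent and must be interpreted as oscillatory integrals; I would first establish the formula rigorously for $a,b$ of Schwartz class, where Fubini applies directly, and then recover the general case at the end by density together with the uniform symbol bounds proved below for $c$.

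Next I would perform the changes of variable $\tilde\xi = \xi - \eta$, $y' = y-x$, followed by the dilation $\tilde\xi \mapsto \hbar\tilde\xi$, which put the composed symbol in the standard form
\begin{equation*}
c(x,\eta) = \frac{1}{(2\pi)^d}\iint e^{-i\la y',\tilde\xi\ra}\,a(x,\eta+\hbar\tilde\xi)\,b(x+y',\eta)\,dy'\,d\tilde\xi.
\end{equation*}
A Taylor expansion of $a$ in its second variable at $\eta$,
\begin{equation*}
a(x,\eta+\hbar\tilde\xi) = \sum_{|\alpha|<N}\frac{(\hbar\tilde\xi)^\alpha}{\alpha!}\partial_\xi^\alpha a(x,\eta) + R_N(x,\eta,\hbar\tilde\xi),
\end{equation*}
combined with the identity $\tilde\xi^\alpha e^{-i\la y',\tilde\xi\ra} = i^{|\alpha|}\partial_{y'}^\alpha e^{-i\la y',\tilde\xi\ra}$, integration by parts in $y'$, and Fourier inversion in $(y',\tilde\xi)$, produces for each polynomial term the contribution $\frac{(\hbar/i)^{|\alpha|}}{\alpha!}\partial_\xi^\alpha a(x,\eta)\partial_x^\alpha b(x,\eta)$. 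Reorganizing the sum by the total order $k = |\alpha|$ yields precisely $\sum_{k<N}\frac{1}{k!}(\hbar/i)^k D^k(a,b)$, the claimed leading part of the expansion.

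The main obstacle is the uniform control of the remainder
\begin{equation*}
r_N(x,\eta) = \frac{N}{(2\pi)^d}\sum_{|\alpha|=N}\frac{\hbar^{|\alpha|}}{\alpha!}\iint e^{-i\la y',\tilde\xi\ra}\tilde\xi^\alpha\!\int_0^1(1-s)^{N-1}\partial_\xi^\alpha a(x,\eta+s\hbar\tilde\xi)\,ds\,b(x+y',\eta)\,dy'\,d\tilde\xi.
\end{equation*}
The integral does not converge absolutely, and one must gain arbitrary polynomial decay in $(y',\tilde\xi)$ by integrating by parts through the two identities $(1-\Delta_{y'})e^{-i\la y',\tilde\xi\ra} = (1+|\tilde\xi|^2)e^{-i\la y',\tilde\xi\ra}$ and $(1-\Delta_{\tilde\xi})e^{-i\la y',\tilde\xi\ra} = (1+|y'|^2)e^{-i\la y',\tilde\xi\ra}$. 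Iterating both operators sufficiently many times converts the oscillatory integral into an absolutely convergent one with arbitrary polynomial decay in $(y',\tilde\xi)$, at the cost of additional derivatives falling only on $a$ and $b$. The assumptions $a \in S^{m_1,k_1}$ and $b \in S^{m_2,k_2}$ then yield the uniform bound $r_N = \ml{O}(\hbar^N)$, and applying $\partial_x^\beta\partial_\eta^\gamma$ before performing the same integration-by-parts routine shows that every derivative of $r_N$ is bounded of the same order. This is standard machinery but is the technical heart of the argument; once it is in place, both the identity $\Oph(a)\Oph(b) = \Oph(c)$ on $\ml{S}(\IR^d)$ and the asymptotic expansion for $c = a\sharp_\hbar b$ follow, and the a posteriori symbol estimates on $c$ justify the oscillatory manipulations of the first paragraph for general symbol classes.
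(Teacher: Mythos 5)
Your argument is correct: the reduction of $\Oph(a)\Oph(b)$ to the oscillatory integral for $c(x,\eta)$, the rescaling $\tilde\xi\mapsto\hbar\tilde\xi$, the Taylor expansion producing $\sum_{|\alpha|=k}\frac{1}{\alpha!}(\hbar/i)^{|\alpha|}\partial_\xi^\alpha a\,\partial_x^\alpha b=\frac{1}{k!}(\hbar/i)^kD^k(a,b)$, and the integration-by-parts control of the remainder are exactly the standard proof. The paper does not prove this theorem itself but defers to Chapter~4 of the cited semiclassical analysis textbook, whose argument (given there for the Weyl quantization) is the same one you carry out here for the standard quantization.
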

We refer to chapter~$4$ of~\cite{Zw12} for a detailed proof of this result. We observe that for $N=0$, the coefficient is given by the symbol $ab$. As before, we can restrict this result to the case of periodic symbols, and we can check that the composition formula remains valid for operators acting on $\ml{C}^{\infty}(\IT^d)$.
\begin{rema}\label{r:symmetry} We note that we have in fact the following useful property. If $a(\xi)$ is a polynomial in $\xi$ of order $\leq N$, one has, the exact formula:
$$a\sharp_{\hbar} b-b\sharp_{\hbar} a=a\sharp_{\hbar} b=\sum_{k=0}^{N}\frac{1}{k!}\left(\frac{\hbar}{i}D\right)^k(a,b).$$

\end{rema}

\end{document}